\def\RR{\mathbb{R}}
\def\ZZ{\mathbb{Z}}
\def\QQ{\mathbb{Q}}
\def\FF{\mathbb{F}}
\def\KK{\mathbb{K}}
\numberwithin{equation}{section}
\newcommand{\rank}{\mathop{\rm rank} }
\newcommand{\Det}{\mathop{\rm Det} }
\newcommand{\ncrank}{\mathop{\rm nc\mbox{-}rank} }
\newcommand{\Newton}{\mathop{\rm Newton} }
\newcommand{\ncNewton}{\mathop{\rm nc\mbox{-}Newton} }
\newcommand{\proj}{{\rm proj}}
\newtheorem{Thm}{Theorem}[section]
\newtheorem{Prop}[Thm]{Proposition}
\newtheorem{Lem}[Thm]{Lemma}
\theoremstyle{definition}
\title{A cost-scaling algorithm for computing the degree of determinants}
\author{Hiroshi HIRAI and Motoki IKEDA, \\
Department of Mathematical Informatics, \\
Graduate School of Information Science and Technology,   \\
The University of Tokyo, Tokyo, 113-8656, Japan.\\
\texttt{\normalsize hirai@mist.i.u-tokyo.ac.jp}\\
\texttt{\normalsize motoki\_ikeda@mist.i.u-tokyo.ac.jp}\\
}
\begin{document}
	\maketitle
	\begin{abstract}
    	In this paper, we address computation of the degree $\deg \Det A$ of Dieudonn\'e determinant $\Det A$ of 
    	\[
    	A = \sum_{k=1}^m A_k x_k t^{c_k},
    	\] 
where $A_k$ are $n \times n$ matrices over a field $\KK$, $x_k$ are noncommutative variables, 
	$t$ is a variable commuting with $x_k$, $c_k$ are integers, 
	and the degree is considered for $t$.
	This problem generalizes noncommutative Edmonds' problem and 
	fundamental combinatorial optimization problems  
	including  the weighted linear matroid intersection problem.
	It was shown that $\deg \Det A$ is obtained by
	a discrete convex optimization on a Euclidean building.
	We extend this framework by incorporating a cost scaling technique, 
	and show that $\deg \Det A$  
	 can be computed in time polynomial of $n,m,\log_2 C$, where $C:= \max_k |c_k|$.
   We give a polyhedral interpretation of 	$\deg \Det$, 
   which says that  $\deg \Det A$ is given by linear optimization 
   over an integral polytope with respect to objective vector $c = (c_k)$.
   Based on it, we show that our algorithm becomes a strongly polynomial one.  
	 We also apply our result to an algebraic combinatorial optimization problem 
	 arising from a symbolic matrix having $2 \times 2$-submatrix  structure.
	\end{abstract}
	Keywords: Edmonds' problem, noncommutative rank, Dieudonn\'e determinant, Euclidean building, discrete convex analysis, Newton polytope, partitioned matrix.
\section{Introduction}	
{\em Edmonds' problem}~\cite{Edmonds67} asks to compute the rank of a matrix of the following form: 
\begin{equation}\label{eqn:A}
A = \sum_{k=1}^m A_k x_k,
\end{equation}
where $A_k$ are $n \times n$ matrices over field $\KK$, 
$x_k$ are variables, and $A$ is considered as a matrix over rational function field $\KK (x_1,x_2,\ldots,x_k)$.
This problem is motivated by a linear algebraic formulation of the bipartite matching problem and other combinatorial optimization problems. 
For a bipartite graph $G = ([n] \sqcup [n], E)$, consider $A = \sum_{ij \in E} E_{ij} x_{ij}$, where $E_{ij}$ denotes the $0,1$ matrix having $1$ only for the $(i,j)$-entry. Then $\rank A$ is equal to the maximum size of a matching of $G$.
 Other basic classes of combinatorial optimization problems have such a rank interpretation.
For example, the linear matroid intersection problem corresponds to 
$A$ with rank-1 matrices $A_k$, 
and the linear matroid matching problem corresponds to $A$ with rank-2 skew symmetric matrices $A_k$; see \cite{Lovasz89}.
 
 Symbolical treatment of variables $x_k$ makes the problem difficult, 
 whereas the rank computation after substitution for $x_k$ 
 is easy and it provides the correct value in high probability.
A randomized polynomial time algorithm is obtained by this idea~\cite{Lovasz79}.   
A deterministic polynomial time algorithm for Edmonds' problem 
is not known, and is
one of the important open problems in theoretical computer science.
 
 A recent approach to Edmonds' problem, initiated by Ivanyos et al.~\cite{IQS15a},  is to consider 
 variables $x_k$ to be noncommutative.
 That is,  the matrix $A$ is regarded as a matrix over 
 noncommutative polynomial ring $\KK \langle x_1,x_2,\ldots,x_m \rangle$.
 The rank of $A$ is well-defined via  embedding $\KK \langle x_1,x_2,\ldots,x_m \rangle$ 
 to the {\em free skew field}  $\KK (\langle x_1,x_2,\ldots,x_m \rangle)$.
 The resulting rank is called 
 the {\em noncommutative rank (nc-rank)} of $A$ and is denoted by $\ncrank A$. 
Interestingly, $\ncrank A$ admits a deterministic polynomial time computation:
\begin{Thm}[\cite{GGOW15,IQS15b}]\label{thm:ncrank}
$\ncrank A$ for a matrix $A$ of form {\rm (\ref{eqn:A})} can be computed in time polynomial of $n,m$.
\end{Thm}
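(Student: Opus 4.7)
The plan is to base the algorithm on the Fortin--Reutenauer characterization of noncommutative rank:
\[
n - \ncrank A \;=\; \max\bigl\{\dim U - \dim(A_1 U + \cdots + A_m U) : U \subseteq \KK^n \text{ a subspace}\bigr\},
\]
which reduces the computation of $\ncrank A$ (by a standard enumeration of principal submatrices) to deciding whether $\ncrank A = n$ and, when the answer is negative, producing a witnessing ``shrunk'' subspace $U$. First I would introduce the \emph{blow-up} $A^{\{d\}} := \sum_{k=1}^m A_k \otimes Y_k$, where each $Y_k$ is a $d \times d$ matrix of fresh commuting variables: an identity of Derksen--Makam type gives $\ncrank A = \frac{1}{d}\,\rank A^{\{d\}}$ whenever $d$ exceeds a polynomial bound in $n$, so $\ncrank A = n$ becomes equivalent to $A^{\{d\}}$ having full rank as a matrix over $\KK(Y)$.

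The main step is to detect non-full-rank of $A^{\{d\}}$ deterministically by constructing a shrunk pair. Following Ivanyos--Qiao--Subrahmanyam, I would maintain subspaces $U, W \subseteq \KK^n$ with $A_k U \subseteq W$ for every $k$ and iteratively try to enlarge the deficit $\dim U - \dim W$. Their \emph{second Wong sequence}, obtained by alternately applying the $A_k$'s and their transposes and projecting, yields after $O(n)$ rounds either (i) a strict augmentation of the current pair, or (ii) a certificate of optimality together with an invertible transformation $(P,Q) \in GL_n(\KK)^2$ exhibiting the corresponding zero block in $PAQ$. Each update is linear algebra over $\KK$ performed on workspaces of size $O(nd)$, so the per-iteration cost is polynomial in $n$ and $m$.

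The main obstacle is proving termination and correctness of the iteration in polynomial time over an arbitrary field. This rests on two ingredients: the polynomial upper bound on the blow-up parameter $d$ (so that one never has to work with the free skew field directly), and a monotone potential on the lattice of candidate shrunk pairs that forces the Wong-sequence iteration to halt in polynomially many augmentations; the first I would establish via a regularity lemma for matrix spaces, the second via a submodularity argument on subspace dimensions. As a parallel route, one can instead follow the operator scaling viewpoint of Garg--Gurvits--Oliveira--Wigderson: attach the completely positive map $\Phi(X) = \sum_k A_k X A_k^*$, define its capacity, and apply alternating row/column normalization; its potential decrease is controlled by a separation bound between zero and the smallest positive capacity, yielding the same complexity in characteristic zero.
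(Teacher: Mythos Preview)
The paper does not give its own proof of this theorem: it is quoted as a known result from \cite{GGOW15,IQS15b}, with only the remark that the Garg--Gurvits--Oliveira--Wigderson algorithm works over $\QQ$ while the Ivanyos--Qiao--Subrahmanyam algorithm works over an arbitrary field. So there is no ``paper's own proof'' to compare against; your proposal is really a sketch of the arguments in those cited papers.

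As such a sketch, your outline is broadly faithful to the two sources. A couple of points are worth tightening. First, the reduction ``by a standard enumeration of principal submatrices'' to the full-rank case is both unnecessary and not quite right: nc-rank is not witnessed by principal minors, and in any case the Fortin--Reutenauer shrunk-subspace formula you wrote already yields $\ncrank A$ directly, not merely a yes/no on $\ncrank A = n$; the IQS algorithm maintains a shrunk pair whose deficit equals $n-\ncrank A$ at termination. Second, the blow-up identity you invoke is exactly what the paper later states as Theorem~\ref{thm:DerkenMakam} (with the explicit bound $d\ge n-1$), so you may as well cite that rather than a vague ``polynomial bound''. Finally, the ``regularity lemma for matrix spaces'' you allude to for bounding $d$ is really the Derksen--Makam degree bound on matrix semi-invariants; calling it a regularity lemma obscures the actual mechanism. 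With these adjustments your two-track summary (Wong sequences for arbitrary $\KK$, operator scaling in characteristic zero) matches what the paper is citing.
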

The algorithm by Garg et al. \cite{GGOW15} works for $\KK = \QQ$, and
the algorithm by Ivanyos et al. \cite{IQS15b} works for an arbitrary field $\KK$. 
Another polynomial time algorithm for nc-rank is obtained 
by Hamada and Hirai~\cite{HamadaHirai}, while
the bit-length of this algorithm may be unbounded if $\KK = \QQ$.
By the formula of
Fortin and Reutenauer~\cite{FortinReutenauer04}, 
$\ncrank A$ is obtained by  
an optimization problem defined on the family of  vector subspaces in $\mathbb{K}^n$.
The above algorithms deal with this new type of an optimization problem.
It holds $\rank A \leq \ncrank A$, where the inequality can be strict in general.
For some class of matrices including linear matroid intersection, 
$\rank A = \ncrank A$ holds,
and the Fortin-Reutenauer formula provides a combinatorial duality relation.
This  is basically different from the usual derivation by
polyhedral combinatorics and LP-duality.

In the view of combinatorial optimization, 
rank computation corresponds to cardinality maximization.
The degree of determinants is 
an algebraic correspondent of weighted maximization.
Indeed, the maximum-weight of a perfect matching 
of a bipartite graph is equal to 
the degree of the determinant of $\sum_{ij \in E} E_{ij} x_{ij}t^{c_{ij}}$, 
where $t$ is a new variable, $c_{ij}$ are edge-weights, 
and the degree is considered in $t$.
Therefore, the weighed version of Edmonds' problem 
is computation of the degree of the determinant of a matrix $A$ of form~(\ref{eqn:A}), where each $A_k = A_k(t)$ is a polynomial matrix with variable~$t$.

Motivated by this observation and the above-mentioned development, 
Hirai~\cite{HH_degdet} introduced a noncommutative formulation of 
the weighted Edmonds' problem.
In this setting,  the determinant $\det A$ is replaced by 
the {\em Dieudonn\'e determinant $\Det A$}~\cite{Dieudonne} --- 
a determinant concept of a matrix over a skew field.  
For our case, $A$ is viewed as a matrix over the skew field $\FF(t)$
of rational functions with coefficients in 
$\FF = \KK (\langle x_1,x_2,\ldots,x_m \rangle)$.
Then the degree with respect to $t$ is well-defined.
He established a formula of $\deg \Det A$ generalizing the Fortin-Reutenauer formula for $\ncrank A$,  a generic algorithm ({\bf Deg-Det}) to compute 
$\deg \Det A$, and $\deg \det A = \deg \Det A$ relation for weighted linear matroid intersection problem. 
In particular, $\deg \Det$ is obtained in time polynomial of $n$, $m$, 
the maximum degree $d$ of matrix $A$ with respect to $t$, 
and the time complexity of solving the optimization problem for nc-rank.  
Although the required bit-length is unknown for $\KK = \QQ$, 
Oki~\cite{Oki} showed another polynomial time reduction from $\deg \Det$ to 
$\ncrank$ with bounding bit-length. 

In this paper,
we address the $\deg \Det$ computation 
of a special matrix obtained from matrix $A$ (\ref{eqn:A}) 
by assigning ``cost" $c_k$ to each variable $x_k$. 
Namely,  for an integer vector $c=(c_k)_{k \in [m]}$, 
consider 
\begin{equation}\label{eqn:A_weighted}
A[c] := \sum_{k=1}^m A_k x_k t^{c_k}.
\end{equation}
This class of matrices is natural from the view of combinatorial optimization. 
Indeed,  the weighted bipartite matching 
and weighted linear matroid intersection problems
correspond to $\deg \Det$ of such matrices. 
Now exponents $c_k$ of variable $t$ work as weights or costs.   
In this setting, 
the above algorithms~\cite{HH_degdet, Oki} 
are pseudo-polynomial.
Therefore, it is natural to ask for
$\deg \Det$ computation with polynomial dependency in $\log_2 |c_k|$.
The main result of this paper shows that such a computation is indeed possible. 
\begin{Thm}\label{thm:main}
Suppose that arithmetic operations 
over $\KK$ are done in constant time.
Then $\deg \Det A[c]$ for a matrix $A[c]$ of {\rm (\ref{eqn:A_weighted})} can be computed in time polynomial of 
$n,m, \log C$, where $C := \max_k |c_k|$.
\end{Thm}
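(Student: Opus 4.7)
The plan is to adapt the classical cost-scaling paradigm from combinatorial optimization to the discrete convex optimization framework on the Euclidean building that underlies Hirai's formula for $\deg \Det A[c]$. The existing algorithm \textbf{Deg-Det} runs in time polynomial in $n,m$ and the maximum $t$-degree $d$ of the input; applied directly to $A[c]$ we have $d = O(C)$, giving a pseudopolynomial dependence. Cost scaling replaces this by $O(\log C)$ calls, each with effectively constant $d$, plus a warm-started local phase.

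Concretely, I would first define a dyadic sequence of cost vectors $c^{(L)}, c^{(L-1)}, \ldots, c^{(0)} = c$, where $L = \lceil \log_2 C \rceil$ and $c^{(\ell)} := 2 \lfloor c^{(\ell+1)}/2 \rfloor + r^{(\ell)}$ with $r^{(\ell)} \in \{0,1\}^m$ encoding the next bit, so that $\|c^{(\ell)} - 2c^{(\ell+1)}\|_\infty \le 1$. At the coarsest scale $c^{(L)}$ has entries in $\{-1,0,1\}$, and the full Deg-Det algorithm computes an optimal point $u^{(L)}$ of the associated building-optimization problem in time polynomial in $n,m$ alone.

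The core of the argument is then a refinement step that, given an optimal point $u^{(\ell+1)}$ for $A[c^{(\ell+1)}]$, produces an optimal point $u^{(\ell)}$ for $A[c^{(\ell)}]$ in time polynomial in $n,m$. For this I would establish a \emph{proximity theorem}: some optimum of the scale-$\ell$ problem lies within a ball of radius polynomial in $n,m$ (in the $L_\infty$/apartment metric on the building) around the doubled previous optimum $2 u^{(\ell+1)}$. Given such a bound, the steepest-descent/local-search phase of Deg-Det, warm-started at $2 u^{(\ell+1)}$ and restricted to this ball, terminates after only polynomially many improving moves. Summing over $L+1 = O(\log C)$ scales yields the claimed complexity. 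The polyhedral interpretation of $\deg \Det$ advertised in the abstract should also let one run the scaling step in integral arithmetic throughout.

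The main obstacle is the proximity theorem itself. In classical discrete convex analysis on $\ZZ^n$, proximity bounds for L-convex and M-convex functions give an $O(n)$ $L_\infty$-gap between scaled and unscaled optima, and these underpin standard cost-scaling algorithms. The Euclidean-building analogue must exploit the submodular/L-convex structure of the Fortin-Reutenauer-type function identified in \cite{HH_degdet}, translated to apartment coordinates where each apartment looks like $\ZZ^n$ (or a Weyl-chamber quotient thereof). A secondary but nontrivial point is to check that the local search moves used at each scale are compatible with the warm start (in particular, that they are themselves polynomial-time when restricted to a ball of polynomial radius, using the nc-rank oracle of Theorem~\ref{thm:ncrank} as a subroutine). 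Once these ingredients are in place, the scaling analysis is routine and Theorem~\ref{thm:main} follows.
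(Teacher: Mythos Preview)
Your high-level plan---cost scaling with warm-started \textbf{Deg-Det} at each phase---is exactly what the paper does, and your identification of the proximity/sensitivity bound as the crux is correct. However, two concrete gaps separate your outline from a complete proof.

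First, the sensitivity bound is not the classical L-convex proximity theorem you invoke. Those results concern how optima move under \emph{domain} scaling $x \mapsto 2x$; here what is needed is how the optimum of (D) moves under a unit \emph{cost} perturbation $c_k \to c_k - 1$. The paper does not extract this from L-convexity. Instead it introduces a separate discrete convexity notion, \emph{N-convexity} (inequalities along the ``normal path'' between two lattice points, via operators $x \to y$ and $y \twoheadrightarrow x$), proves that the barrier-penalized objective of (D$_{\cal A}$) is N-convex, and uses this to show that a single decrement $c_k \to c_k-1$ moves the optimum by at most $n^2$ in $\ell_\infty$ (hence $n^2 m$ per scaling phase). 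Your appeal to standard L-convex proximity does not supply this argument, and it is the genuinely new ingredient.

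Second, you omit a step that the paper treats as essential: even with the sensitivity bound in hand, the algorithm is \emph{still} not polynomial, because the matrices $P,Q$ (equivalently the transformed $B_k$) accumulate a pseudo-polynomial number of $t$-terms across phases. The paper handles this by an explicit truncation: only the leading $O(n^2 m)$ terms of each $B_k$ are retained at every step, and a propagation argument shows that discarded low-degree terms never influence the final output. Your remark about ``integral arithmetic via the polyhedral interpretation'' does not address this; it is a representation-size issue, not a bit-length or integrality issue. Without truncation the claimed time bound fails.
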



Our algorithm for Theorem~\ref{thm:main} 
is based on the framework of \cite{HH_degdet}. 
In this framework, $\deg \Det A[c]$ is formulated as a {\em discrete convex optimization 
on the Euclidean building for $GL_n(\KK(t))$}.
The {\bf Deg-Det} algorithm is a simple descent algorithm on the building, where
discrete convexity property ({\em L-convexity})  provides 
a sharp iteration bound of  this algorithm via geometry of the building. 
We incorporate {\em cost scaling} into the {\bf Deg-Det} algorithm,  
which is a standard idea in combinatorial optimization.
To obtain the polynomial time complexity, we need a polynomial sensitivity estimate
for how an optimal solution changes under the perturbation $c_k \rightarrow c_k-1$.
We introduce a new discrete convexity concept, called {\em N-convexity}, that works nicely for such cost perturbation, and show that 
the objective function enjoys this property, from which a desired estimate follows.
This method was devised by \cite{HiraiIkeda} in  
another discrete convex optimization problem on a building-like structure.

We present two improvements of Theorem~\ref{thm:main}.
For this, we give a polyhedral interpretation of $\deg \Det A[c]$, 
which extends a basic fact that $\deg \det A[c]$ is equal 
to the optimal value of the linear optimization 
over the {\em Newton polytope} of $\det A$  
with respect to objective vector $c$.
By utilizing the theory of nc-rank, 
we establish an analogue for $\deg \Det$.
We introduce a noncommutative analogue of 
the Newton polytope, called the {\em  nc-Newton polytope}, and 
show that  $\deg \Det A[c]$ is given by the linear optimization 
over the nc-Newton polytope for $A$. 
The nc-Newton polytope  seems interesting in its own right:
It is a relaxation of the Newton polytope and is an integral polytope. 
As consequences, we obtain:
\begin{itemize}
	\item A strongly polynomial time algorithm 
	by using a preprocessing technique~(Frank and Tardos~\cite{FrankTardos1987}), 
	which rounds $c$ in advance so that $\log C$ is a polynomial of $n,m$, 
	\item A polynomial time algorithm for rational field $\KK = \QQ$
	 by using the modulo $p$ reduction technique (Iwata and Kobayashi~\cite{IwataKobayashi2017}), which reduces $\deg \Det$ computation on 
	 $\QQ$ to that on $GF(p)$ 
	for a polynomial number of smaller primes $p$.
\end{itemize}

As an application, 
we consider an algebraic combinatorial optimization problem
for a symbolic matrix of form
\begin{equation} \label{eqn:2x2}
A = \left(
\begin{array}{ccccc}
A_{11}x_{11} & A_{12} x_{12} &\cdots & A_{1 n} x_{1 n}\\
A_{21}x_{21} & A_{22} x_{22}&\cdots & A_{2 n} x_{2 n} \\
\vdots & \vdots & \ddots & \vdots \\
A_{n1} x_{n1}&A_{n2} x_{n2}&\cdots & A_{n n} x_{n n}
\end{array}\right),
\end{equation}
where $A_{ij}$ is a $2\times 2$ matrix over $\KK$ for $i,j \in [n]$.
We call  such a matrix a {\em $2 \times 2$-partitioned matrix}. 
Rank computation of this matrix is viewed as 
a ``2-dimensional'' generalization of the bipartite matching problem.
The duality theorem by Iwata and Murota~\cite{IwataMurota95} implies $\rank A = \ncrank A$ relation. 
Although $\rank A$ can be computed by the above-mentioned nc-rank algorithms,   
the problem has a more intriguing combinatorial nature.
Hirai and Iwamasa~\cite{HiraiIwamasaIPCO}
showed that $\rank A$ is equal to the maximum size of  
a certain algebraically constrained $2$-matching ({\em $A$-consistent $2$-matching}) 
on a bipartite graph, and  
they developed an augmenting-path type polynomial time algorithm to obtain a maximum $A$-consistent $2$-matching. 
We apply our cost-scaling framework for a $2\times 2$-partitioned matrix $A$ with $x_{ij}$ replaced by $x_{ij}t^{c_{ij}}$, 
and obtain a  polynomial time algorithm to solve
the weighted version of this problem and to compute $\deg \det A (= \deg \Det A)$. 
This result sheds an insight on polyhedral combinatorics, since
it means that linear optimization over the polytope of $A$-consistent $2$-matchings can be solved without knowledge of its LP-formulation.

\paragraph{Related work.}
A matrix $A$ of (\ref{eqn:A})
corresponding to the linear matroid matching problem 
(i.e., each $A_k$ is a rank-2 skew symmetric matrix)
is a representative example in which rank and nc-rank can be different.
Accordingly, $\deg \det$ and $\deg \Det$ can differ 
for a weighted version  $A[c]$ of such $A$.
The computation of  $\deg \det$ of such a matrix is 
precisely the weighted linear matroid matching problem.
Camerini et al.~\cite{CGM1991} utilized this $\deg \det$ formulation and random substitution to obtain a random pseudo-polynomial time algorithm 
solving the weighted linear matroid matching, 
where the running time depends on $C$.
Cheung et al. \cite{CLL2014} speeded up this algorithm,  
and also obtained a randomized FPTAS by using cost scaling.
Recently, Iwata and Kobayashi~\cite{IwataKobayashi2017} 
developed  a polynomial time algorithm solving the weighted linear matroid matching problem, where the running time does not depend on $C$.
The algorithm also uses a similar (essentially equivalent) $\deg \det$ formulation, 
and is rather complicated.
A simplified polynomial time algorithm, possibly using cost scaling, is worthy to be developed,    
in which the results in this paper may help.

\paragraph{Organization.}
The rest of this paper is organized as follows:
In Section~\ref{sec:preliminaries}, 
we give necessary arguments on nc-rank, Dieudonn\'e determinant,  Euclidean building, and discrete convexity.
In Section~\ref{sec:algorithm}, 
we present our algorithm for Theorem~\ref{thm:main}.
In Section~\ref{sec:polyhedral}, we present 
a polyhedral interpretation of $\deg \Det$ and improvements of Theorem~\ref{thm:main}.
In Section~\ref{sec:2x2}, 
we describe the results on $2 \times 2$-partitioned matrices.

\section{Preliminaries}\label{sec:preliminaries}
Let $\RR$, $\QQ$, and $\ZZ$ denote the sets of reals, rationals, and integers, respectively.
Let $e_i \in \ZZ^n$ denote the $i$-th unit vector.
For $s \in [n]:=\{1,2,\ldots,n\}$, let ${\bf 1}_s \in \ZZ^n$ denote  
the 0,1 vector in which the first $s$ components are $1$ and the others are zero, 
i.e., ${\bf 1}_s:= \sum_{i=1}^s e_i$.
We denote ${\bf 1}_n$ by ${\bf 1}$.
For a ring $R$, let $GL_n(R)$ denote the set of $n \times n$ matrices over $R$ 
having inverse $R^{-1}$.
The degree of a polynomial $p(t) = a_k t^{k} + a_{k-1} t^{k-1} + \cdots + a_0$ 
with $a_k \neq 0$ is defined as $k$.
The degree of a rational $p/q$ with polynomials $p,q$ 
is defined as $\deg p - \deg q$.
The degree of the zero polynomial is defined as $-\infty$.

\subsection{Nc-rank and the degree of Dieudonn\'e determinant}
It is known that the rank of matrix $A$ 
over $\KK \langle x_1,\ldots,x_m \rangle$, considered in  $\KK (\langle x_1,\ldots,x_m \rangle)$, 
is equal to the {\em inner rank}---the minimum number 
$r$ for which $A$ is written as $A = BC$ for some
$n \times r$ matrix $B$ and $r \times n$ matrix $C$ over $\KK \langle x_1,\ldots,x_m \rangle$; see \cite{Cohn}.
Fortin and Reutenauer~\cite{FortinReutenauer04} established a formula for 
the inner rank ($=$ the nc-rank) of a matrix $A$ of form (\ref{eqn:A}).
\begin{Thm}[\cite{FortinReutenauer04}]\label{thm:FR}
Let $A$ be a matrix of form~{\rm (\ref{eqn:A})}.
Then $\ncrank A$ is equal to the optimal value of the following problem:
\begin{eqnarray*}
		\mbox{{\rm (R)}} \quad {\rm Min.} && 2n - r - s  \\
		{\rm s.t.} && \mbox{$S A T$ has an $r \times s$ zero submatrix,} \\
		&& S, T \in GL_n(\KK).
	\end{eqnarray*}
\end{Thm}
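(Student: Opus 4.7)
The plan is to prove both inequalities in the formula by invoking the fact (stated in the excerpt) that $\ncrank A$ coincides with the \emph{inner rank} of $A$ over $R := \KK\langle x_1, \ldots, x_m\rangle$. First I would handle the easy direction $\ncrank A \leq 2n - r - s$: given $S, T \in GL_n(\KK)$ for which $SAT$ has an $r \times s$ zero submatrix, I permute rows and columns (absorbing the permutations into $S$ and $T$) so that the zero block occupies the top-left corner, giving
\[
SAT = \begin{pmatrix} 0_{r \times s} & X \\ Y & Z \end{pmatrix}.
\]
The explicit factorization
\[
SAT = \begin{pmatrix} 0_{r \times (n-r)} & X \\ I_{n-r} & Z \end{pmatrix} \begin{pmatrix} Y & 0 \\ 0 & I_{n-s} \end{pmatrix}
\]
exhibits $SAT$ as a product of matrices of sizes $n \times (2n-r-s)$ and $(2n-r-s) \times n$ over $R$. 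Since $S, T \in GL_n(\KK) \subset GL_n(R)$, the inner rank of $A$ coincides with that of $SAT$, so $\ncrank A \leq 2n-r-s$.

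For the reverse inequality I would invoke Cohn's theorem from the theory of free ideal rings (firs): the algebra $R$ is a fir, and any $n \times n$ matrix $M$ over a fir of inner rank $\rho$ admits $S', T' \in GL_n(R)$ such that $S' M T'$ has an $r \times s$ zero block with $r+s = 2n-\rho$. Applied to $A$, this yields a zero block of the desired size, but with $S', T'$ over the noncommutative polynomial ring $R$ rather than over the base field $\KK$.

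The main obstacle, and the heart of the proof, is upgrading $S', T' \in GL_n(R)$ to constant matrices in $GL_n(\KK)$; this is where the \emph{linearity} of $A$ plays an essential role. I would exploit the homogeneity of $A$: every entry of $A$ is homogeneous of degree $1$ in the total-degree grading of $R$. Decompose $S' = S_0 + S_+$ and $T' = T_0 + T_+$ into their degree-$0$ parts $S_0, T_0 \in M_n(\KK)$ and higher-degree remainders $S_+, T_+$. Then the homogeneous degree-$1$ component of $S' A T'$ is exactly $S_0 A T_0$, since any term involving $S_+$ or $T_+$ contributes in total degree at least $2$. Hence if $S' A T'$ has an $r \times s$ zero block, so does its degree-$1$ part $S_0 A T_0$. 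Finally, applying the augmentation $R \to \KK$ sending each $x_k \mapsto 0$ to the identity $S' (S')^{-1} = I$ forces $S_0$ to be invertible over $\KK$, and likewise for $T_0$. Setting $S := S_0$ and $T := T_0$ yields the required witness in $GL_n(\KK)$ and completes the proof.
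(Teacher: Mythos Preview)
The paper does not supply its own proof of this theorem: it is quoted as a known result of Fortin and Reutenauer, and immediately afterwards the paper even adopts the optimal value of (R) as the \emph{definition} of $\ncrank A$. So there is nothing in the paper to compare your argument against.

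That said, your sketch is essentially the standard Fortin--Reutenauer argument and is correct. The factorization you write down for the easy direction checks out and shows $\ncrank A \leq 2n-r-s$. For the hard direction, the ingredients you cite are the right ones: $R = \KK\langle x_1,\ldots,x_m\rangle$ is a fir, Cohn's theory gives the zero block over $GL_n(R)$, and the passage from $GL_n(R)$ to $GL_n(\KK)$ via the degree-$0$ component is exactly where linearity of $A$ enters. Your observation that the augmentation $x_k \mapsto 0$ applied to $S'(S')^{-1} = I$ forces $S_0 \in GL_n(\KK)$ is the clean way to see invertibility of the constant part. One small point worth making explicit: the degree-$1$ component of $S'AT'$ being $S_0 A T_0$ relies on $A$ having \emph{no} degree-$0$ part, which holds here since each $A_k x_k$ is homogeneous of degree~$1$; this is precisely the linearity hypothesis, and without it the reduction to constant $S,T$ would fail.
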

By $\rank A = \rank SAT \leq 2n - r - s$, nc-rank is an upper bound of rank:
\[
\rank A \leq \ncrank A.
\]
In this paper, we regard the optimal value of (R) as the definition of nc-rank.

\begin{Thm}[\cite{IQS15b}]\label{thm:ncrank_dual}
An optimal solution $S,T$ in {\rm (R)} can be computed in polynomial time. 
\end{Thm}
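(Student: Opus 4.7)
The plan is to translate problem~(R) into the ``shrunk subspace'' problem, to invoke the known nc-rank algorithm to obtain an optimal shrunk subspace, and then to construct $S,T$ from it by routine linear algebra.

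The translation step goes as follows. Suppose $SAT$ has an $r \times s$ zero submatrix; after permuting rows and columns we may assume it is at rows $[r]$ and columns $[n-s+1, n]$. Writing $u_j$ for the $j$-th column of $T$ and $v_i^\top$ for the $i$-th row of $S$, the zero condition $(SAT)_{ij} = 0$ becomes $v_i^\top A_k u_j = 0$ for all $k$ and all $i \leq r$, $j \geq n-s+1$. Setting $U := \mathrm{span}\{u_{n-s+1}, \ldots, u_n\}$ and $W := \sum_k A_k U$, this is equivalent to $v_1, \ldots, v_r \in W^\perp$, which extends to a full basis of $\KK^n$ iff $r \leq n - \dim W$. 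Hence maximizing $r+s$ in~(R) is equivalent to maximizing $\dim U + (n - \dim \sum_k A_k U)$ over subspaces $U \subseteq \KK^n$, i.e., to finding an optimal \emph{shrunk subspace} for $A$.

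First I would invoke the algorithm of Ivanyos--Qiao--Subrahmanyam~\cite{IQS15b} underlying Theorem~\ref{thm:ncrank}: it in fact maintains a shrunk subspace throughout its execution and either augments it via a ``second Wong sequence'' step or certifies its optimality, so it returns an optimal $U$ in polynomial time. Given $U$, I would compute $W = \sum_k A_k U$ as the column span of the $n \times m\dim U$ matrix $[A_1 B \mid \cdots \mid A_m B]$ for any basis matrix $B$ of $U$, then take any basis of $U$ as the last $s := \dim U$ columns of $T$ and any basis of $W^\perp$ as the first $r := n - \dim W$ rows of $S$, extending each to a full basis of $\KK^n$ arbitrarily. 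The resulting $S, T \in GL_n(\KK)$ witness the optimum of~(R) by the identity $v_i^\top A_k u_j = 0$ above.

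The main (and essentially only) obstacle is the first step: Theorem~\ref{thm:ncrank} as stated gives only the numerical value $\ncrank A$, not a witness, so one must look inside its proof. If one insists on treating Theorem~\ref{thm:ncrank} as a pure value oracle, an optimal $U$ can still be recovered by a self-reduction---iteratively probing hyperplanes of $\KK^n$ and using the oracle to detect whether the pencil restricted to each hyperplane still attains the same shrunkness defect---but the constructive route through~\cite{IQS15b}, which produces $U$ as part of its natural output, is considerably cleaner and keeps the overall running time polynomial without any extra bookkeeping.
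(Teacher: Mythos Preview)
The paper does not give its own proof of this theorem: it is stated as a citation to \cite{IQS15b} and used as a black box. Your proposal is therefore not comparable to a ``paper's proof'' but rather an expansion of what the cited reference does, and as such it is essentially correct. The translation of (R) to the shrunk-subspace formulation via $U = \mathrm{span}\{u_{n-s+1},\ldots,u_n\}$ and $W = \sum_k A_k U$ is the standard one (this is the content of the Fortin--Reutenauer formula, Theorem~\ref{thm:FR}), and the IQS algorithm indeed outputs an optimal shrunk subspace $U$, from which $S,T$ are recovered by the basis-completion you describe.

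One small point of care: your self-reduction fallback (``iteratively probing hyperplanes'') is vaguer than the rest and would need more work to make precise over arbitrary fields, especially small finite fields where a naive hyperplane search could fail for lack of enough rational directions. Since the constructive route through \cite{IQS15b} already suffices, I would simply drop that paragraph rather than try to shore it up.
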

Notice that the algorithm by Garg et al.~\cite{GGOW15} obtains the optimal value of (R)
but does not  obtain optimal $(S,T)$, and 
that the algorithm by Hamada and Hirai~\cite{HamadaHirai}
obtains optimal $(S,T)$ but
has no guarantee of polynomial bound  of bit-length when $\KK = \QQ$.

Next we consider the degree of the Dieudonn\'e determinant. 
Again we regard the following formula as the definition.
\begin{Thm}[{\cite{HH_degdet}}]
Let $A[c]$ be a matrix of form~{\rm (\ref{eqn:A_weighted})}.
Then $\deg \Det A[c]$ is equal to the optimal value of the following problem:
\begin{eqnarray*}
\mbox{\rm (D)} \quad {\rm Min.} && - \deg \det P - \deg \det Q \\
{\rm s.t.} && \deg (PA_k Q)_{ij} + c_k \leq 0 \quad (i,j \in [n], k \in [m]), \\
&& P,Q \in GL_n (\KK(t)).    
\end{eqnarray*}
\end{Thm}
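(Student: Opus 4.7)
The plan is to establish both inequalities between $\deg \Det A[c]$ and the optimal value of problem (D), treating the easy direction first and reducing the hard direction to the Fortin--Reutenauer formula already recorded as Theorem~\ref{thm:FR}.

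First, for the weak duality direction opt(D) $\ge \deg \Det A[c]$, I would proceed as follows. Given any feasible $P, Q \in GL_n(\KK(t))$, set $M := PA[c]Q$. Since $P, Q$ have entries in the center $\KK(t)$ of $\FF(t)$, multiplicativity of the Dieudonn\'e determinant together with the fact that $\Det = \det$ for matrices over a commutative field yields
\[
\deg \Det M = \deg \det P + \deg \Det A[c] + \deg \det Q.
\]
Expanding $M_{ij} = \sum_k (PA_k Q)_{ij}\, x_k\, t^{c_k}$, the constraints of (D) force every entry of $M$ to have $t$-degree $\le 0$. What remains is a lemma: any matrix over $\FF(t)$ whose entries all have $t$-degree $\le 0$ satisfies $\deg \Det \le 0$. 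I would prove this by extending scalars to the completion $\FF((t^{-1}))$, noting that $M$ then has entries in the valuation ring $\FF[[t^{-1}]]$, and performing LPU reduction over this local (skew) ring: after a suitable row permutation the pivots can be chosen from the valuation ring, leaving a triangular form whose diagonal is integral, so the Dieudonn\'e determinant modulo commutators has nonnegative valuation. Combining, $\deg \Det A[c] \le -\deg \det P - \deg \det Q$ on each feasible pair, giving one inequality.

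For the strong duality direction opt(D) $\le \deg \Det A[c]$, I need to exhibit $(P^\ast, Q^\ast)$ feasible with $-\deg\det P^\ast - \deg\det Q^\ast = \deg \Det A[c]$. My plan is to reduce to Theorem~\ref{thm:FR}, applied not to $A[c]$ but to a scalar enlargement $\hat A$ over $\KK\langle x_1,\dots,x_m\rangle$ in which each cost $c_k$ is encoded by stretching the corresponding block by a multiplicity tied to $c_k$ (after a uniform shift so that all $c_k$ are nonnegative). One then argues that $\ncrank \hat A$ equals an explicit affine function of $\deg \Det A[c]$, and that an optimal block-zero certificate $(S,T)\in GL_N(\KK)$ for $\hat A$ can be compressed, via polynomial interpolation in $t$, back to a pair $(P^\ast, Q^\ast)\in GL_n(\KK(t))$ whose $t$-degree profile realizes the feasibility of (D) with matching objective value. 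An equivalent route---which I expect to be cleaner---identifies feasible pairs $(P,Q)$ with vertices of the Euclidean building for $GL_n(\KK(t))$ (exploited in Section~\ref{sec:algorithm}), shows by a descent argument that a local minimum of the objective is global, and constructs the minimizer iteratively, as in \cite{HH_degdet}.

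The main obstacle is arranging the blow-up/compression so that the Fortin--Reutenauer certificate over $\KK$ really does descend to a certificate in $GL_n(\KK(t))$ with the correct $t$-degree profile: Theorem~\ref{thm:FR} sees only rank/block-zero structure, whereas (D) records the finer valuative information. Verifying this compatibility---equivalently, verifying that the Deg-Det descent on the building cannot stagnate strictly above $\deg \Det A[c]$---is where the L-convex structure on the affine building for $GL_n(\KK(t))$ is essential, and where the bulk of the work of the proof lies.
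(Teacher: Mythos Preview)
The paper does not prove this statement: it explicitly \emph{adopts} the optimal value of (D) as the definition of $\deg \Det A[c]$ and cites \cite{HH_degdet} for the identification with the degree of the Dieudonn\'e determinant. So there is no ``paper's own proof'' to compare against. What the paper does supply, in the appendix, is a self-contained proof of Lemma~\ref{lem:optimality} (taking (D) as the definition): it puts an arbitrary feasible $(P,Q)$ into Smith--McMillan form $P=(t^\alpha)S$, $Q=T(t^{-\beta})$ with $S,T\in GL_n(\KK(t)^-)$, and then uses the Fortin--Reutenauer formula for $\ncrank B^{(0)}$ to bound the objective from below. Combined with the {\bf Deg-Det} descent (Section~\ref{sec:algorithm}), this is exactly your second route, and it is also how \cite{HH_degdet} proceeds. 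Your weak-duality half is correct and standard.

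Your first route for strong duality---encode the costs $c_k$ as block multiplicities in a large matrix $\hat A$ over $\KK$, invoke Theorem~\ref{thm:FR} there, and then ``compress'' the resulting $(S,T)\in GL_N(\KK)$ back to $(P^\ast,Q^\ast)\in GL_n(\KK(t))$---is not a proof as written. The compression step is precisely where all the content lies, and you give no mechanism for it; a block-zero certificate over $\KK$ carries no $t$-degree data, so one must explain how the row/column partition of $\hat A$ forces $S,T$ into a block form that interpolates to polynomials in $t$ with the right degree profile. A reduction in this spirit does exist (this is essentially what \cite{Oki} does, via a rather different and more elaborate construction), but nothing in your sketch gets you there. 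Since you yourself identify the second route as cleaner and it is the one that actually works with the tools at hand, you should commit to it and drop the first.
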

A pair of matrices $P,Q \in GL_n(\KK(t))$ is said to be
{\em feasible} (resp. {\em optimal}) for $A[c]$ if it is feasible (resp. optimal) to (D) for $A[c]$.
From $0 \geq \deg \det PA[c]Q = \deg \det P + \deg \det Q + \deg \det A[c]$, 
$\deg \Det$ is an upper bound of $\deg \det$:
\[
\deg \det A[c] \leq \deg \Det A[c].
\]

A matrix $M = M(t)$ over $\KK(t)$ is written as a formal power series as
\[
M = M^{(d)} t^{d} + M^{(d-1)} t^{d-1} + \cdots,  
\]
where $M^{(\ell)}$ is a matrix over $\KK$ $(\ell = d,d-1, \ldots)$ and 
$d \geq \max_{ij} \deg M_{ij}$.
For solving (D), the leading term $(PA[c]Q)^{(0)} = \sum_{k} (P A_k t^{c_k}Q)^{(0)} x_k$
plays an important role.
\begin{Lem}[\cite{HH_degdet}]\label{lem:optimality}
Let $(P,Q)$ be a feasible solution for $A[c]$.
\begin{itemize}
\item[{\rm (1)}] $(P,Q)$ is optimal  if and only 
if $\ncrank (PA[c]Q)^{(0)} = n$.
\item[{\rm (2)}]
If $\rank (PA[c]Q)^{(0)} = n$, then $\deg \det A[c] = \deg \Det A[c] = - \deg \det P- \deg \det Q$.
\end{itemize}
\end{Lem}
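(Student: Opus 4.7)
The plan is to reformulate both parts through the transformed matrix $B := PA[c]Q$ and its leading $t$-coefficient $B^{(0)}$. Since the $x_k$ are free noncommutative variables, the $(i,j)$ entry $B_{ij} = \sum_k (PA_k Q)_{ij} x_k t^{c_k}$ has $t$-degree equal to $\max_k (\deg (PA_kQ)_{ij} + c_k)$; hence feasibility of $(P,Q)$ is equivalent to $\deg_t B_{ij} \leq 0$ for all $i,j$, and the leading coefficient $B^{(0)} = \sum_k (PA_kQ)^{(-c_k)} x_k$ is itself a matrix of the form (\ref{eqn:A}) to which nc-rank applies. Multiplicativity of $\Det$ (with $\Det = \det$ on $GL_n(\KK(t))$) gives
\[
\deg \Det A[c] \;=\; \deg \Det B - \deg \det P - \deg \det Q,
\]
so, granted $\deg B_{ij} \leq 0$, part (1) reduces to the equivalence $\deg \Det B = 0 \iff \ncrank B^{(0)} = n$.

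For the direction $(\Leftarrow)$ of (1), I would use that $\ncrank B^{(0)} = n$ makes $B^{(0)}$ invertible over the free skew field $\FF := \KK(\langle x_1,\ldots,x_m \rangle)$, so that $B = B^{(0)} U$ with $U := (B^{(0)})^{-1} B = I + O(t^{-1}) \in GL_n(\FF[[t^{-1}]])$. Then $\Det U \equiv 1 \pmod{t^{-1}}$ has $t$-degree $0$, while $\Det B^{(0)} \in \FF^{\times}$ is $t$-free, yielding $\deg \Det B = 0$. For the contrapositive $(\Rightarrow)$, suppose $\ncrank B^{(0)} < n$. Theorem~\ref{thm:FR} provides $S, T \in GL_n(\KK)$ and $I, J \subseteq [n]$ with $r+s := |I| + |J| > n$ such that the $I \times J$ block of $SB^{(0)}T$ vanishes, and in particular $\deg (SBT)_{ij} \leq -1$ on $I \times J$. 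Taking $D_1 := \mathrm{diag}(t^{a_i})$ with $a_i = 1$ for $i \in I$ and $0$ otherwise, and $D_2 := \mathrm{diag}(t^{b_j})$ with $b_j = 0$ for $j \in J$ and $-1$ otherwise, a four-region case check shows that $(P', Q') := (D_1 S P,\, Q T D_2)$ is feasible, while $-\deg \det P' - \deg \det Q' = -\deg \det P - \deg \det Q - (r+s-n)$ strictly improves the objective, contradicting optimality of $(P,Q)$.

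For (2), the hypothesis $\rank B^{(0)} = n$ says $\det B^{(0)}$ is nonzero as a polynomial in commuting $x_k$'s, so row-multilinearity of $\det$ gives $\det B = \det B^{(0)} + O(t^{-1})$ and hence $\deg \det B = 0$, which yields $\deg \det A[c] = -\deg \det P - \deg \det Q$. Combined with $\rank B^{(0)} \leq \ncrank B^{(0)} = n$ and part~(1) (which gives $\deg \Det A[c] = -\deg \det P - \deg \det Q$), the inequality $\deg \det A[c] \leq \deg \Det A[c]$ collapses to equality. I expect the main difficulty to be the $(\Rightarrow)$ direction of (1): designing the diagonal scalings $D_1, D_2$ so that feasibility is preserved across all four regions of the row/column partition induced by $(I,J)$, while the contribution to $\deg \det P + \deg \det Q$ is precisely $r+s-n$; the remaining ingredients---the leading-term factorization in $\FF[[t^{-1}]]$ and the multilinear expansion of $\det$---are comparatively routine.
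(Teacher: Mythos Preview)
Your proposal is correct. The contrapositive direction of (1) (your $(\Rightarrow)$) and part (2) match the paper's treatment: the former is precisely the improvement step of the \textbf{Deg-Det} algorithm (the appendix simply refers back to that discussion), and the latter is the standard leading-term argument for the commutative determinant.

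The genuine difference is in the if-direction of (1) (your $(\Leftarrow)$). You factor $B = B^{(0)}U$ with $U = I + O(t^{-1})$ over $\FF[[t^{-1}]]$ and appeal to multiplicativity of the Dieudonn\'e determinant plus a Gaussian-elimination computation of $\deg \Det U$ to conclude $\deg \Det B = 0$. The paper's appendix instead works entirely inside the linear program (D): it puts an arbitrary feasible pair for $B$ into Smith--McMillan form $((t^{\alpha})S,\, T(t^{-\beta}))$ with $S,T \in GL_n(\KK(t)^-)$, notes that the feasibility constraints force zero blocks of specific sizes in $S^{(0)}B^{(0)}T^{(0)}$, and uses $\ncrank B^{(0)} = n$ (no over-large zero block) to show $-\sum_i \alpha_i + \sum_j \beta_j \geq 0$ by a telescoping count. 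Your route is shorter and more conceptual, but it imports the Dieudonn\'e machinery over the skew field $\FF((t^{-1}))$ --- in particular, the step $\deg \Det U = 0$ is not a statement about (D), since $U = (B^{(0)})^{-1}B$ has entries outside $\KK(t)\langle x_1,\ldots,x_m\rangle$. The paper's route is longer but entirely self-contained relative to taking (D) as the \emph{definition} of $\deg \Det$, which is the stance the appendix explicitly adopts. As a minor point, your difficulty assessment is inverted: the four-region feasibility check for $(\Rightarrow)$ is routine (it is exactly how \textbf{Deg-Det} makes progress), whereas the real content sits in $(\Leftarrow)$.
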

A direct proof (with regarding (D) as the definition of $\deg \Det$) 
is given in the appendix.

Notice that the optimality condition (1) does not 
imply a good characterization 
(NP $\cap$ co-NP characterization) for $\det \Det A[c]$, 
since the size of $P, Q$ (e.g., the number of terms) 
may depend on $c_k$ pseudo-polynomially.

\begin{Lem}\label{lem:nc_nonsingular}
$\deg \Det A[c] > - \infty$ if and only 
if $\ncrank A  = n$.
\end{Lem}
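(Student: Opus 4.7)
My plan is to reduce the statement to the equivalent claim: $A$ is invertible as a matrix over the free skew field $\FF := \KK(\langle x_1,\ldots,x_m\rangle)$ if and only if $A[c]$ is invertible as a matrix over $\FF(t)$. This uses that $\ncrank A = n$ is equivalent to $A\in GL_n(\FF)$, while $\deg\Det A[c] > -\infty$ is equivalent to $\Det A[c]\neq 0$, i.e., $A[c]\in GL_n(\FF(t))$.

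For the direction $\ncrank A < n \Rightarrow \deg\Det A[c] = -\infty$, I would apply Fortin--Reutenauer (Theorem~\ref{thm:FR}) to obtain $S,T\in GL_n(\KK)$ and index sets $I,J$ with $|I|+|J|>n$ such that $(SA_kT)_{ij}=0$ for every $k$ and every $(i,j)\in I\times J$. The key observation is that this common zero block propagates to $SA[c]T=\sum_k SA_kT\,x_k t^{c_k}$, so $SA[c]T$ has an $|I|\times|J|$ zero submatrix; since $|I|+|J|>n$, a dimension count on rows over the skew field $\FF(t)$ forces $SA[c]T$, and hence $A[c]$ (as $S,T\in GL_n(\KK)\subseteq GL_n(\FF(t))$), to be singular.

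For the converse, I would argue by contradiction: assume $A\in GL_n(\FF)$ and suppose there exists a nonzero $v\in\FF(t)^n$ with $vA[c]=0$. The idea is to specialise $t\mapsto 1$, which recovers $A[c]|_{t=1}=A$. The technical work is first to clear denominators in $v$, producing $\tilde v\in\FF[t]^n\setminus\{0\}$ with $\tilde v A[c]=0$; this uses that $\FF[t]$ is a (two-sided) Ore domain because $t$ is central, so $v$ can be written as $b^{-1}\tilde v$ with $b\in\FF[t]\setminus\{0\}$ and $\tilde v\in\FF[t]^n$. I would then divide out any common factor of $(t-1)$ from $\tilde v$ (permissible because $\FF[t]$ is a domain and $t-1$ is central) to arrange $\tilde v(1)\neq 0$, and apply the evaluation homomorphism $\FF[t,t^{-1}]\to\FF$, $t\mapsto 1$, to conclude $\tilde v(1)\cdot A=0$, contradicting $A\in GL_n(\FF)$.

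The main obstacle is the denominator-clearing step in the converse: over a noncommutative ring one cannot simply multiply by a scalar denominator, and one must invoke the Ore property to produce a common left denominator. The centrality of $t$ in $\FF[t]$ is what makes both this Ore step and the subsequent evaluation $t\mapsto 1$ go through cleanly; once this algebraic plumbing is in place, the rest of the proof is routine linear algebra over the skew field $\FF$.
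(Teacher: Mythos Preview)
Your proof is correct but takes a different route from the paper's. For $\ncrank A < n$, both arguments start from the Fortin--Reutenauer zero block, but the paper stays inside the optimization problem (D), exhibiting for each $\kappa>0$ a feasible pair $((t^{\kappa{\bf 1}_r})S,\, T(t^{-\kappa{\bf 1}_{n-s}})t^{-C})$ with objective value $-\kappa(r+s-n)+nC\to-\infty$, whereas you conclude singularity of $A[c]$ over $\FF(t)$ directly by a row-rank count. For $\ncrank A = n$, the paper uses monotonicity of $\deg\Det$ in $c$ to reduce to $c=0$ and then applies Lemma~\ref{lem:optimality}(1): since $(A[0])^{(0)}=A$ has full nc-rank, $(I,I)$ is optimal and $\deg\Det A[0]=0$. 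You instead clear denominators in a putative null vector over $\FF(t)$ (using that $\FF[t]$ is Ore with $t$ central) and specialise $t\mapsto 1$ to contradict $A\in GL_n(\FF)$. The trade-off: the paper's argument works entirely within its adopted definition of $\deg\Det$ as the optimal value of (D) and needs only Lemma~\ref{lem:optimality}; your argument is conceptually cleaner but leans on the identification of that optimal value with the degree of the genuine Dieudonn\'e determinant (the theorem quoted from~\cite{HH_degdet}), so that ``$\deg\Det A[c]>-\infty$'' may be read as ``$A[c]\in GL_n(\FF(t))$''.
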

\begin{proof}
We observe from (D) that $\deg \Det A[c+b{\bf 1}] = nb + \deg \Det A$ and that
$\deg \Det$ is monotone in $c_k$.
In particular, we may assume $c_k \geq 0$.

Suppose that $\ncrank A  < n$.
Then we can choose $S,T \in GL_n(\KK)$ 
such that $SAT$ has 
an $r \times s$ zero submatrix with $r+s > n$ in the upper right corner.
Then, for every $\kappa > 0$, $((t^{\kappa {\bf 1}_r}) S, T(t^{-\kappa {\bf 1}_{n-s}})t^{-C})$ is feasible in (D) with objective value $- \kappa (r+s -n) +nC$, where $C:= \max_k c_k$.
This means that (D) is unbounded.
Suppose that $\ncrank A = n$.
By monotonicity, we have  $\deg \Det A[c]  \geq \deg \Det A$. Now  
$(A)^{(0)} = A$ 
has nc-rank $n$, and $(I,I)$ is optimal by Lemma~\ref{lem:optimality}~(1). 
Then we have 
$\deg \Det A = 0$.
\end{proof}

\subsection{Euclidean building}
Here we explain that the problem (D) is regarded as an optimization over 
the so-called Euclidean building.  See e.g., \cite{Garrett} for Euclidean building.
Let $\KK(t)^-$ denote the subring of $\KK(t)$ 
consisting of elements $p/q$ with $\deg p/q \leq 0$. 
Let $GL_n(\KK(t)^-)$ be the subgroup of 
$GL_n(\KK(t))$ consisting 
of matrices over $\KK(t)^-$  invertible in $\KK(t)^-$.
The degree of the determinant of 
any matrix in $GL_n (\KK(t)^-)$ is zero.
Therefore transformation $(P,Q) \mapsto (LP, QM)$ for $L,M \in GL_n (\KK(t)^-)$ 
keeps the feasibility and the objective value in (D).
Let ${\cal L}$ be the set of right cosets $GL_n (\KK(t)^-) P$ 
of $GL_n (\KK(t)^-)$ in  $GL_n (\KK(t))$, and let ${\cal M}$ be the set of left cosets.

Then (D) is viewed as 
an optimization over ${\cal L} \times  {\cal M}$.
The projection of $P \in GL_n(\KK(t))$ to ${\cal L}$
is denoted by $\langle P \rangle$, which is identified 
with the submodule of $\KK(t)^n$ spanned by the row vectors of $P$ 
with coefficients in $\KK(t)^-$.
In the literature, 
such a module is called a {\em lattice}.
We also denote the projections of $Q$ to ${\cal M}$ by $\langle Q \rangle$ and 
of $(P,Q)$ to ${\cal L} \times {\cal M}$ by $\langle P,Q \rangle$.

The space ${\cal L}$ (or ${\cal M}$) is known as  
the {\em Euclidean building} for $GL_n(\KK(t))$.
We will utilize special subspaces of ${\cal L}$, called {\em apartments}, 
to reduce arguments on ${\cal L}$ to that on $\ZZ^n$. 
For integer vector $\alpha \in \ZZ^n$, 
denote by $(t^{\alpha})$ the diagonal matrix with diagonals $t^{\alpha_1}, t^{\alpha_2},\ldots,t^{\alpha_n}$, that is, 
\[
(t^{\alpha}) =
\left( \begin{array}{cccc}
t^{\alpha_1} &                    &  & \\
                 & t^{\alpha_2}   &  & \\
                 &                     & \ddots & \\
                 &                      &           & t^{\alpha_n} 
\end{array}
\right).
\]
An {\em apartment} of ${\cal L}$ is a subset ${\cal A}$ of ${\cal L}$ represented as
\begin{equation*}
{\cal A} = \{ \langle (t^{\alpha}) P \rangle \mid \alpha \in \ZZ^n \}
\end{equation*}
for some $P \in GL_n(\KK(t))$. The map $\alpha \mapsto \langle (t^{\alpha}) P\rangle $ is an injective map from $\ZZ^n$ to ${\cal L}$.
The following is a representative property of a Euclidean building. 
\begin{Lem}[See \cite{Garrett}]\label{lem:apartment}
For $\langle P \rangle, \langle Q \rangle \in {\cal L}$, there is an apartment containing $\langle P \rangle, \langle Q \rangle$.
\end{Lem}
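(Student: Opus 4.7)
The plan is to reduce the statement to the elementary divisor theorem (Smith normal form) over the discrete valuation ring $\KK(t)^-$. Observe that $\KK(t)^-$ is the valuation ring of the degree valuation $v(f) := -\deg f$ on $\KK(t)$, with $1/t$ as a uniformizer. In particular, $\KK(t)^-$ is a principal ideal domain, and any nonzero element is of the form $u \cdot t^{-k}$ for a unit $u \in \KK(t)^-$ and some integer $k \geq 0$.

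Given $\langle P \rangle$ and $\langle Q \rangle$ in ${\cal L}$, I would form the matrix $R := P Q^{-1} \in GL_n(\KK(t))$ and produce a decomposition
\[
R = L_1 (t^\alpha) L_2
\]
with $L_1, L_2 \in GL_n(\KK(t)^-)$ and $\alpha \in \ZZ^n$. Granting this, set $P_0 := L_2 Q$. Since left multiplication by any element of $GL_n(\KK(t)^-)$ fixes the coset $\langle \cdot \rangle$, we obtain
\[
\langle P \rangle = \langle R Q \rangle = \langle L_1 (t^\alpha) L_2 Q \rangle = \langle (t^\alpha) P_0 \rangle, \qquad \langle Q \rangle = \langle L_2 Q \rangle = \langle (t^0) P_0 \rangle.
\]
Hence both lattices lie in the apartment ${\cal A} := \{ \langle (t^\beta) P_0 \rangle : \beta \in \ZZ^n \}$, completing the proof.

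The only substantive step is the decomposition $R = L_1(t^\alpha) L_2$, which is the Smith normal form (equivalently, the Cartan decomposition of $GL_n$ over a local field) for $R$ over $\KK(t)^-$. I would prove it by the standard pivoting algorithm: select an entry of $R$ of minimum valuation, use elementary row and column operations over $\KK(t)^-$ to clear the rest of its row and column, and iterate on the resulting smaller matrix. The procedure terminates with a diagonal matrix whose diagonal entries are units of $\KK(t)^-$ times integer powers of $t$; absorbing the units into $L_1$ produces the pure monomial diagonal $(t^\alpha)$, and invertibility of $R$ over $\KK(t)$ ensures $\alpha \in \ZZ^n$ is well-defined. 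I do not expect any real obstacle here, as it is the usual PID argument specialized to the DVR $\KK(t)^-$.
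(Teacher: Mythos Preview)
Your argument is correct and is exactly the standard proof via the Cartan decomposition (equivalently, Smith normal form over the DVR $\KK(t)^-$). Note, however, that the paper does not supply its own proof of this lemma: it is stated with the attribution ``See \cite{Garrett}'' and no proof is given. So there is nothing in the paper to compare your proof against; you have simply reproduced the classical argument that one finds in building-theory references such as Garrett. Incidentally, the paper does invoke the same decomposition (under the name ``Smith--McMillan form'') in the Appendix when proving Lemma~\ref{lem:optimality}, which confirms that your approach is fully in line with the tools the authors take for granted.
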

Therefore ${\cal L}$ is viewed as an amalgamation of integer lattices $\ZZ^n$.
An apartment in  ${\cal M}$ is defined as  a subset of form $\{ \langle Q (t^{\alpha}) \rangle \mid \alpha \in \ZZ^n \}$. 
An apartment in ${\cal L} \times {\cal M}$ is 
the product of 
apartments in ${\cal L}$ and in ${\cal M}$.

Restricting (D) to an apartment 
${\cal A} = \{ \langle (t^{\alpha}) P,  Q(t^{\beta}) \rangle \}_{(\alpha,\beta) \in \ZZ^{2n}}$ 
of ${\cal L} \times {\cal M}$, we obtain a simpler integer program:
\begin{eqnarray*}
{\rm (D_{\cal A}}) \quad
\mbox{Min.} &  &- \sum_{i \in [n]} \alpha_i - \sum_{j \in [n]} \beta_j  + {\rm constant} \\
\mbox{s.t.} & & \alpha_i + \beta_j + c_{ij}^k \leq 0 \quad (k \in [m], i,j  \in [n]), \\
&&  \alpha, \beta \in \ZZ^n,
\end{eqnarray*}
where $c_{ij}^k := \deg (PA_kQ)_{ij} + c_{k}$.
This is nothing but the (discretized) LP-dual of a weighted perfect matching problem.

We need to define a distance between two solutions 
$\langle P, Q \rangle$ and $\langle P', Q' \rangle$  in (D).
Let the {\em $\ell_\infty$-distance} $d_{\infty}(\langle P, Q \rangle, \langle P', Q' \rangle)$ defined as follows:
Choose an apartment ${\cal A}$ containing $\langle P, Q \rangle$ and $\langle P', Q' \rangle$.
Now ${\cal A}$ is regarded as 
$\ZZ^{2n} = \ZZ^n \times \ZZ^n$, and  $\langle P, Q \rangle$ and $\langle P', Q' \rangle$ are regarded as points $x$ and $x'$ in $\ZZ^{2n}$, respectively.
Then define $d_\infty( \langle P, Q \rangle, \langle P', Q' \rangle)$ 
as the $\ell_\infty$-distance $\|x -x'\|_{\infty}$.

The $l_{\infty}$-distance  $d_{\infty}$  is independent of the choice of an apartment, 
and  satisfies the triangle inequality.
This fact is verified 
by applying a {\em canonical retraction} ${\cal L} \times {\cal M} \to {\cal A}$, 
which is distance-nonincreasing; see \cite{HH_degdet}.

\subsection{N-convexity}

The Euclidean building ${\cal L}$ 
admits a partial order in terms of inclusion relation, 
since lattices are viewed as submodules of $\KK(t)^n$. By this ordering, 
${\cal L}$ becomes a lattice in poset theoretic sense; see \cite{HH_degdet,HH_building}.
Then the objective function of (D) is 
a submodular-type discrete convex function on ${\cal L} \times {\cal M}$, 
called an {\em L-convex function}~\cite{HH_degdet}. 
Indeed, its restriction to each apartment ($\simeq \ZZ^{2n}$)
is an L-convex function
in the sense of discrete convex analysis~\cite{DCA}.
This fact played an important role in the iteration analysis of 
the {\bf Deg-Det} algorithm.

Here, for analysis of cost scaling, 
we introduce another discrete convexity concept, 
called {\em N-convexity}.
Since arguments reduce to that on an apartment $(\simeq \ZZ^n)$, 
we first introduce N-convexity on integer lattice $\ZZ^n$.
For $x,y\in \ZZ^n$, let $x \rightarrow y$ be defined by
\begin{equation*}
x  \rightarrow y := x + \sum_{i: y_i > x_i}  e_i - \sum_{i: x_i > y_i}  e_i.
\end{equation*}
Let $x \rightarrow^{i+1} y := (x \rightarrow^{i} y) \rightarrow y$, 
where $x \rightarrow^1 y := x \rightarrow y$. 
Observe that $l_\infty$-distance $\|x - y\|_{\infty}$ 
decreases by one when $x$ moves to $x \rightarrow y$. 
In particular,  $x \rightarrow^d y = y$ if $d = \|x-y\|_{\infty}$.  
The sequence $(x, x\rightarrow^1 y,   x\rightarrow^2 y, \ldots, y)$
is called the {\em normal path} from $x$ to $y$.
Let $y \twoheadrightarrow x$ be defined by
\begin{equation*}
y \twoheadrightarrow x :=  x\rightarrow^{d-1} y  = y + \sum_{i: x_i - y_i = d > 0} e_i -  \sum_{i: x_i - y_i = - d < 0} e_i, 
\end{equation*}
where $d = \|x-y\|_\infty$.

A function $f:\ZZ^n \to \RR \cup \{\infty\}$ is called {\em N-convex} if it satisfies
\begin{eqnarray}
f(x)  + f(y) &\geq & f(x \rightarrow y) + f(y \rightarrow x), \label{eqn:N1} \\
f(x) +  f(y) & \geq &  f(x \twoheadrightarrow y) + f(y \twoheadrightarrow x) \label{eqn:N2}
\end{eqnarray}
for all $x,y \in \ZZ^n$.
\begin{Lem}\label{lem:N-convexity}
\begin{itemize}
\item[{\rm (1)}] $x \mapsto a^{\top} x + b$ is N-convex for $a \in \RR^n, b \in \RR$.
\item[{\rm (2)}]  $x \mapsto \max (x_i + x_j, 0)$ is N-convex for $i,j \in [n]$.
\item[{\rm (3)}] If $f,g$ are N-convex, then $c f + d g$ is N-convex for $c,d \geq 0$. 
\item[{\rm (4)}] Suppose that $\sigma: \ZZ^n \to \ZZ^n$ is a translation 
$x \mapsto x+v$, a transposition of coordinates $(x_1,\ldots,x_i,\ldots,x_j,\ldots,x_n) \mapsto (x_1,\ldots,x_j,\ldots,x_i,\ldots,x_n)$, 
or the sign change of some coordinate $(x_1,\ldots,x_i,\ldots,x_n) \mapsto (x_1,\ldots, - x_i,\ldots,x_n)$.
If $f$ is N-convex, then $f \circ \sigma$ is N-convex.
\end{itemize}
\end{Lem}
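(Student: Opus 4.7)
The proof of each part reduces to a direct computation with the definitions of $\rightarrow$ and $\twoheadrightarrow$, so the plan is to verify the four items in order and isolate a short case analysis for (2).

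First I would record the two algebraic identities
$(x \rightarrow y) + (y \rightarrow x) = x + y$ and $(x \twoheadrightarrow y) + (y \twoheadrightarrow x) = x + y$,
both immediate from the definitions since the $\pm e_i$ correction terms cancel when the two expressions are summed. These identities make every affine function $a^\top x + b$ satisfy both (\ref{eqn:N1}) and (\ref{eqn:N2}) with equality, proving (1). Item (3) is immediate from linearity of the defining inequalities in $f$. For (4), I would verify the equivariance $\sigma(x) \rightarrow \sigma(y) = \sigma(x \rightarrow y)$ and $\sigma(x) \twoheadrightarrow \sigma(y) = \sigma(x \twoheadrightarrow y)$ for each listed $\sigma$: the comparisons $y_i > x_i$ and $|y_i - x_i| = d$ are translation invariant, a transposition permutes the indices of the $e_\ell$-corrections consistently, and for the sign change of coordinate $i$ the inequality $-y_i > -x_i$ is equivalent to $x_i > y_i$, so the sign of the $e_i$-correction flips to match. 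Substituting $\sigma(x), \sigma(y)$ for $x, y$ in the N-convexity inequalities for $f$ then yields those for $f \circ \sigma$.

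The main work is in (2). Writing $a := x_i + x_j$, $b := y_i + y_j$ and $g(t) := \max(t, 0)$, the inequality (\ref{eqn:N1}) becomes $g(a) + g(b) \ge g(a + s) + g(b - s)$, where $s := \mathrm{sgn}(y_i - x_i) + \mathrm{sgn}(y_j - x_j)$ (doubled in the degenerate case $i = j$). The crux is that $s$ and $b - a$ always share sign and satisfy $|s| \le |b - a|$: a positive (resp.\ negative) contribution of a coordinate $k \in \{i,j\}$ to $s$ is witnessed by $y_k - x_k \ge 1$ (resp.\ $\le -1$), so the sign and magnitude of $s$ are dominated by those of $b - a = (y_i - x_i) + (y_j - x_j)$. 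Consequently the multiset $\{a + s, b - s\}$ is majorized by $\{a, b\}$, and convexity of $g$ makes $g(\cdot) + g(\cdot)$ Schur-convex, giving the inequality. The proof of (\ref{eqn:N2}) is identical after replacing $s$ by $s' := ([y_i - x_i = d] - [x_i - y_i = d]) + ([y_j - x_j = d] - [x_j - y_j = d])$; each nonzero contribution to $s'$ is witnessed by a coordinate gap of $\pm d$, while the other coordinate can contribute at most $d - 1$ in the opposite direction, yielding the same bound $|s'| \le |b - a|$.

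The main obstacle is precisely the sign-and-magnitude check $|s| \le |b - a|$ and its $\twoheadrightarrow$-analogue in item (2); everything else is formal. Once this bound is in place, Schur convexity of $u \mapsto \max(u, 0)$ closes the argument, and the degenerate case $i = j$ fits into the same framework with a factor of two throughout.
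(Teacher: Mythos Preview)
Your argument is correct, and for parts (1), (3), (4) it is essentially the paper's argument made a touch more explicit (the paper just says (1) and (3) are obvious, and for (4) cites the equivariance $\sigma(p \rightarrow q) = \sigma(p) \rightarrow \sigma(q)$, exactly as you do).

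For part (2) the two proofs genuinely diverge. The paper first disposes of $i=j$, then \emph{reduces to $n=2$} via the projection $p:\ZZ^n\to\ZZ^2$, $x\mapsto(x_i,x_j)$, using $p(x\rightarrow y)=p(x)\rightarrow p(y)$ and the observation that $(p(x\twoheadrightarrow y),p(y\twoheadrightarrow x))$ equals either $(p(x),p(y))$ or $(p(x)\twoheadrightarrow p(y),p(y)\twoheadrightarrow p(x))$; it then finishes the $n=2$ case by an explicit case split on whether $f(x\rightarrow y)-f(x)$ is $1$ or $2$. Your route stays in $\ZZ^n$ and is more conceptual: you show that the pair $\{a+s,\,b-s\}$ is majorized by $\{a,b\}$ and then invoke the Schur-convexity of $u\mapsto\max(u,0)$. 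This avoids all case analysis, treats $i=j$ uniformly, and in fact proves the stronger statement that $x\mapsto \varphi(x_i+x_j)$ is N-convex for \emph{any} convex $\varphi:\ZZ\to\RR$. The paper's approach, on the other hand, is entirely elementary and isolates the projection compatibility of $\rightarrow$ and $\twoheadrightarrow$ as a reusable structural fact.
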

\begin{proof}
(1) and (3) are obvious.  
(4) follows from $\sigma (p \rightarrow q) = \sigma (p) \rightarrow \sigma (q)$.
We examine (2). The case of $i=j$ is clear.
We next consider the case of $n=2$ and $(i,j) = (1,2)$.
Let $f(x) :=  \max (x_1 + x_2, 0)$.
Choose distinct $x,y \in \ZZ^2$. 
Let $x' := x \rightarrow y$ (or $x \twoheadrightarrow y$), 
and let  $y' := y \rightarrow x$ (or $y \twoheadrightarrow x$); 
our argument below works for both $\rightarrow$ and $\twoheadrightarrow$.  
We may consider
the case $f(x) < f(x') \in \{ f(x) + 1, f(x) + 2\}$.
We may assume $x_1' = x_1 + 1$.
Then $y_1 \geq x_1' > x_1$.
If $f(x') = f(x) + 2$, then $x_1+x_2 \geq 0$,
$y_2 \geq x_2' >  x_2$, and $y' = y - (1,1)$, implying $f(y') = f(y) - 2$.
Suppose that $f(x') = f(x) + 1$. 
If $x_2' = x_2$, then $y' = y - (1,0)$ and 
$|y_2 - x_2| < y_1 - x_1$, implying 
$y_1+y_2 > x_1+x_2 \geq 0$ and $f(y') = f(y) - 1$.
If $x_2' = x_2+1$, then $x_1 + x_2 = -1$, $y > x$, and $y' = y-(1,1)$.
If $x + (1,1) = y$, 
then $x' = y$ and $y' = x$. 
Otherwise $y_1+y_2 \geq 2$, implying $f(y') = f(y) - 2$
Thus, (\ref{eqn:N1}) and (\ref{eqn:N2}) hold for all cases.

Finally we consider the case $n \geq 3$.
Let $p: \ZZ^n \to \ZZ^2$ be the projection $x \mapsto (x_i,x_j)$. 
Then $f = f \circ p$.
Also it is obvious that $p(x \rightarrow y) = p(x) \rightarrow p(y)$.
Hence $f(x) + f(y) = f(p(x)) + f(p(y)) \geq 
f(p(x) \rightarrow p(y)) + f(p(y) \rightarrow p(x)) = f(p(x \rightarrow y))+  f(p(y \rightarrow x)) =  f(x \rightarrow y)+  f(y \rightarrow x)$.
Also observe that 
$(p(x \twoheadrightarrow y), p(y \twoheadrightarrow x))$ 
is equal to $(p(x),p(y))$ or $(p(x) \twoheadrightarrow p(y), p(y) \twoheadrightarrow p(x))$. From this we have (\ref{eqn:N2}).  
\end{proof}
Observe that the objective function of (D$_{\cal A}$), 
$(\alpha, \beta) \mapsto - \sum_{i=1}^n \alpha_i - \sum_{i=1}^n \beta_i + {\rm const}$ 
if $(\alpha, \beta)$ is feasible, and $\infty$ otherwise,  is N-convex.
A slightly modified version of this fact will be used in the proof of the sensitivity theorem (Section~\ref{subsec:sensitivity}).

N-convexity is definable on ${\cal L} \times {\cal M}$ by taking apartments.
That is, $f:{\cal L} \times {\cal M} \to \RR \cup \{\infty\}$ is called {\em N-convex} if the restriction of $f$ to every apartment is N-convex.
Hence we have the following, though it is not used in this paper explicitly.
\begin{Prop}
The objective function of {\rm (D)} is N-convex on ${\cal L} \times {\cal M}$.
\end{Prop}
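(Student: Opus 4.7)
By the definition of N-convexity on $\mathcal{L}\times\mathcal{M}$, the claim reduces to showing that the restriction of the objective of (D) to every apartment is N-convex on $\ZZ^{2n}$. So I would fix an apartment $\mathcal{A} = \{\langle (t^\alpha) P, Q(t^\beta)\rangle : (\alpha,\beta) \in \ZZ^{2n}\}$; the restriction is the extended-value objective of (D$_{\mathcal{A}}$), which I decompose as
\[
g(\alpha,\beta) = -\sum_{i} \alpha_i - \sum_{j} \beta_j + \mathrm{const} + \sum_{i,j,k} \iota_{ijk}(\alpha,\beta),
\]
where $\iota_{ijk}(\alpha,\beta) = 0$ if $\alpha_i + \beta_j + c_{ij}^k \leq 0$ and $+\infty$ otherwise. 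By Lemma~\ref{lem:N-convexity}(1) the affine part is N-convex, and by (3) the task reduces to showing that each indicator $\iota_{ijk}$ is N-convex on $\ZZ^{2n}$.

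For this I would fix a constraint $(i,j,k)$, set $p := i$ and $q := n + j$ as coordinates of $\ZZ^{2n}$ (identifying $(\alpha,\beta)$ with $y\in\ZZ^{2n}$ via $y_\ell = \alpha_\ell$ and $y_{n+\ell} = \beta_\ell$), and write $\iota_{ijk} = \iota\circ\sigma$, where $\sigma(y) := y + c_{ij}^k e_p$ is a translation and $\iota$ is the indicator of the homogeneous half-space $\{y_p + y_q \leq 0\}$. By Lemma~\ref{lem:N-convexity}(4) it suffices to prove that $\iota$ itself is N-convex. I would do this by approximating $\iota$ from below by the monotone family $f_M(y) := M \cdot \max(y_p + y_q, 0)$ for $M \in \mathbb{N}$: each $f_M$ is N-convex by Lemma~\ref{lem:N-convexity}(2) and (3), and $f_M \nearrow \iota$ pointwise on $\ZZ^{2n}$.

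The main step to nail down is the last one: that N-convexity is preserved under monotone pointwise suprema, even when $+\infty$ values appear in the limit. The verification is routine but requires a bit of care: using $\sup_M(a_M + b_M) = \sup_M a_M + \sup_M b_M$ for monotone sequences in $\RR \cup \{+\infty\}$, the inequalities (\ref{eqn:N1}) and (\ref{eqn:N2}) for each $f_M$ pass to the supremum, so $\iota$ inherits both. Assembling the resulting N-convexities over all $(i,j,k)$ via Lemma~\ref{lem:N-convexity}(3), and over all apartments via the definition of N-convexity on $\mathcal{L}\times\mathcal{M}$, yields the proposition.
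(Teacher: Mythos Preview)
Your proof is correct and follows the same route the paper sketches: reduce to apartments via the definition, then verify N-convexity of the (D$_{\cal A}$) objective using Lemma~\ref{lem:N-convexity}. The paper states the Proposition without proof, merely asserting the observation that the extended objective on each apartment is N-convex and then invoking the definition on $\mathcal{L}\times\mathcal{M}$; your argument supplies the missing details, in particular the handling of the $+\infty$-valued indicator terms via monotone approximation by $M\cdot\max(y_p+y_q,0)$, which the paper sidesteps by working directly with the finite-$M$ penalty function $h$ in Section~\ref{subsec:sensitivity} rather than with the indicator itself.
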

In fact, operators $\rightarrow$ and $\twoheadrightarrow$
are independent of the choice of apartments, 
since they can be written by lattice operators on ${\cal L} \times {\cal M}$.

\section{Algorithm}\label{sec:algorithm}
In this section, we develop an algorithm in Theorem~\ref{thm:main}.
In the following, we assume:
\begin{itemize}
\item $\deg \Det A[c] > -\infty$.
\item  Each $c_i$ is a positive integer.
\end{itemize}
The first assumption is verified in advance by nc-rank computation (Lemma~\ref{lem:nc_nonsingular}).
The second one is by  $\deg \Det A[c+b{\bf 1}] = nb + \deg \Det A[c]$.

Also we use the following abbreviation:
\begin{itemize}
\item $A[c]$ is simply written as $A$.
\end{itemize}
\subsection{Deg-Det algorithm}
We here present the {\bf Deg-Det} algorithm~\cite{HH_degdet} for (D),
which is a simplified version of 
Murota's {\em combinatorial relaxation algorithm}~\cite{Murota95_SICOMP} designed for $\deg \det$; see also \cite[Section 7.1]{MurotaMatrix}.
The algorithm uses an algorithm of solving (R) as a subroutine. 

For simplicity, we assume (by multiplying permutation matrices) 
that the position of 
a zero submatrix in (R) is upper right.
\begin{description}
	\item[Algorithm: Deg-Det]
	\item[Input:] $A= \sum_{k =1}^m A_kx_kt^{c_k}$, where $A_k \in \KK^{n \times n}$ and $c_k \geq 1$ for $k \in [m]$, and 
	an initial feasible solution $P,Q$ for $A$.
	\item[Output:] $\deg \Det A$.
	\item[1:]  Solve the problem (R) for $(PAQ)^{(0)}$
	and obtain optimal matrices $S,T$.
	\item[2:] If the optimal value $2n - r - s$ of (R) is equal to $n$, 
	then output $-\deg \det P - \deg \det Q$. Otherwise,
	letting $(P,Q) \leftarrow ((t^{{\bf 1}_{r}})SP, QT(t^{- {\bf 1}_{n-s}}))$, 
	go to step 1.
\end{description}
The mechanism of this algorithm is simply explained:
The matrix $SPAQT$ after step 1 has a negative degree in each entry of  
its upper right $r \times s$ submatrix. 
Multiplying $t$ for the first $r$ rows and $t^{-1}$ 
for the first $n-s$ columns does not produce the entry of degree $>0$.  
 This means that the next solution $(P,Q) := ((t^{{\bf 1}_{r}})SP, QT(t^{- {\bf 1}_{n-s}}))$ is feasible for $A(=A[c])$, 
 and decreases $- \deg \det P - \deg \det Q$ by $r+s - n (> 0)$. 
 Then the algorithm terminates after finite steps, 
 where Lemma~\ref{lem:optimality} (1) guarantees the optimality. 

In the view of Euclidean building,
the algorithm moves the point $\langle P,Q \rangle \in {\cal L} \times {\cal M}$ to an ``adjacent'' point $\langle P',Q' \rangle = \langle (t^{{\bf 1}_{r}})SP, QT(t^{- {\bf 1}_{n-s}}) \rangle$ 
with $d_{\infty}(\langle P,Q \rangle, \langle P',Q' \rangle) = 1$.
Then  the number of the movements ($=$ iterations) is analyzed 
via the geometry of the Euclidean building.
Let ${\rm OPT}(A) \subseteq {\cal L} \times {\cal M}$ denote the set of (the image of) all optimal solutions for $A$.
Then the number of iterations of {\bf Deg-Det} is sharply bounded by
the following distance between from $\langle P,Q \rangle$ to ${\rm OPT}(A)$:
\begin{eqnarray*}
&& \tilde d_{\infty}(\langle P,Q \rangle, {\rm OPT}(A)) :=  \\ 
&& \ \min \{ d_{\infty}(\langle P,Q \rangle, \langle P^*, Q^* \rangle) \mid 
(P^*,Q^*) \in {\rm OPT}(A): 
\langle P \rangle \subseteq  \langle P^* \rangle, \langle Q \rangle \supseteq  \langle Q^* \rangle\},
\end{eqnarray*}
where we regard $\langle P \rangle$ (resp. $\langle Q \rangle$)
as a $\KK(t)^-$-submodule of $\KK(t)^n$ spanned row (resp. column) vectors. 
Observe that $(P,Q) \mapsto (t P, Q t^{-1})$ does not change the feasibility and objective value, and hence an optimal solution $(P^*,Q^*)$ with 
$\langle P \rangle \subseteq \langle P^* \rangle, \langle Q \rangle \supseteq  \langle Q^* \rangle$
always exists. 
%
\begin{Thm}[\cite{HH_degdet}]\label{thm:bound}
The number of executions of step 1 in {\bf Deg-Det} with an initial solution $(P, Q)$ is equal to
$\tilde d_{\infty}(\langle P,Q \rangle, {\rm OPT}(A)) + 1$.
\end{Thm}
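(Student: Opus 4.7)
The plan is to prove the matching bounds $N \geq \tilde d_{\infty}(\langle P,Q\rangle, {\rm OPT}(A))+1$ and $N \leq \tilde d_{\infty}(\langle P,Q\rangle, {\rm OPT}(A))+1$, where $N$ denotes the number of times step~1 is executed.

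For the lower bound, I first confirm that every iteration takes a unit $\ell_\infty$-step and preserves the containment conditions in the definition of $\tilde d_\infty$. Since $S,T \in GL_n(\KK) \subseteq GL_n(\KK(t)^-)$, we have $\langle SP\rangle = \langle P\rangle$ and $\langle QT\rangle = \langle Q\rangle$; thus the update reduces to left-multiplying by $(t^{{\bf 1}_r})$ and right-multiplying by $(t^{-{\bf 1}_{n-s}})$, which within any apartment through $\langle P,Q\rangle$ is a translation by $({\bf 1}_r, -{\bf 1}_{n-s}) \in \ZZ^{2n}$, giving the adjacency $d_\infty = 1$ already noted above the theorem. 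Because $t^{-1} \in \KK(t)^-$, the inclusions $\langle P\rangle \subseteq \langle P'\rangle$ and $\langle Q\rangle \supseteq \langle Q'\rangle$ are preserved under the update, and therefore propagate along the entire trajectory. The $N$-th execution of step~1 merely verifies optimality via Lemma~\ref{lem:optimality}(1) without moving, so the algorithm performs exactly $N-1$ updates and terminates at an optimal iterate admissible in the definition of $\tilde d_\infty$. The triangle inequality for $d_\infty$ then yields $\tilde d_\infty(\langle P,Q\rangle,{\rm OPT}(A)) \leq N-1$.

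For the upper bound, I would prove by induction on $d := \tilde d_\infty(\langle P,Q\rangle,{\rm OPT}(A))$ that one iteration at a non-optimal $(P,Q)$ drops $\tilde d_\infty$ by exactly one. Pick an optimal $(P^*,Q^*)$ realizing $\tilde d_\infty$ under the containment constraints, and apply Lemma~\ref{lem:apartment} on each factor to obtain an apartment ${\cal A}\subseteq{\cal L}\times{\cal M}$ containing both points. Parametrize ${\cal A}$ so that $(P,Q)$ sits at the origin and $(P^*,Q^*)$ at $(\alpha^*,\beta^*) \in \ZZ^{2n}$; the containment forces $\alpha^* \geq 0$, $\beta^* \leq 0$, and $d = \max_{i,j}\{\alpha^*_i, -\beta^*_j\}$. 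Inside ${\cal A}$, problem (D) restricts to the L-convex integer program $(D_{\cal A})$ whose objective is $-\sum_i\alpha_i-\sum_j\beta_j$ under the constraints $\alpha_i+\beta_j+c_{ij}^k \leq 0$. The {\bf Deg-Det} update $(P,Q)\mapsto(P',Q')$ corresponds in $\ZZ^{2n}$ to the unit shift along $({\bf 1}_I, -{\bf 1}_J)$, where $I,J\subseteq[n]$ are the row/column index sets of the $r\times s$ zero block produced by $(S,T)$ after aligning the basis of ${\cal A}$ with the output of (R).

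The crux is to show that these index sets $I,J$ can always be taken to lie inside the extremal sets $I^* := \{i:\alpha^*_i = d\}$ and $J^* := \{j:\beta^*_j = -d\}$. This is where the L-convex structure of $(D_{\cal A})$ and the N-convexity introduced in Section~\ref{sec:preliminaries} enter: the normal path from the origin toward $(\alpha^*,\beta^*)$ begins precisely with the direction $({\bf 1}_{I^*}, -{\bf 1}_{J^*})$, and L-convex steepest-descent theory (in the form used in \cite{HH_degdet}) guarantees that every unit descent direction identified by (R) on $(PAQ)^{(0)}$ is compatible with this normal-path direction. Consequently the updated point $(P',Q')$ at $({\bf 1}_I, -{\bf 1}_J)$ still satisfies the containment conditions with respect to $(\alpha^*,\beta^*)$ and has $\ell_\infty$-distance exactly $d-1$ to it, completing the induction. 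The main obstacle is this L-convex compatibility step, which requires a careful analysis of how the nc-rank-witnessing $(S,T)$ of (R) interacts with the apartment coordinates of $(\alpha^*,\beta^*)$. Once it is in place, the extra ``$+1$'' in the count corresponds to the terminal execution of step~1 whose sole role, via Lemma~\ref{lem:optimality}(1), is to certify optimality, giving $N = \tilde d_\infty(\langle P,Q\rangle, {\rm OPT}(A)) + 1$.
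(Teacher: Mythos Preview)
The paper does not give its own proof of this theorem; it is quoted from \cite{HH_degdet} with only the one-line remark that ``this property is a consequence of L-convexity of the objective function of (D).'' So there is nothing in the paper to compare your argument against, and I evaluate your proposal on its own merits.

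Your lower bound ($N \geq \tilde d_\infty + 1$) is correct and cleanly argued: each update is a unit $\ell_\infty$-step, the containments $\langle P\rangle \subseteq \langle P'\rangle$, $\langle Q\rangle \supseteq \langle Q'\rangle$ propagate, and the final execution only certifies optimality.

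The upper bound has two genuine gaps. First, the inclusion you state is the wrong way round. For the $\ell_\infty$-distance from $({\bf 1}_I,-{\bf 1}_J)$ to $(\alpha^*,\beta^*)$ to be $d-1$, every coordinate attaining the maximum must move, so you need $I \supseteq I^*$ and $J \supseteq J^*$, not $I \subseteq I^*$, $J \subseteq J^*$. Second, and more seriously, the apartment ${\cal A}$ you fix contains $\langle P,Q\rangle$ and $\langle P^*,Q^*\rangle$, but there is no reason it contains the next iterate $\langle P',Q'\rangle = \langle (t^{{\bf 1}_r})SP,\, QT(t^{-{\bf 1}_{n-s}})\rangle$: the matrices $S,T$ output by (R) are determined by $(PAQ)^{(0)}$ and in general do not respect the basis defining ${\cal A}$. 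Your phrase ``after aligning the basis of ${\cal A}$ with the output of (R)'' is exactly where the difficulty is hidden --- you cannot simultaneously align one apartment with $(P^*,Q^*)$ and with $(S,T)$. The argument in \cite{HH_degdet} avoids fixed coordinates altogether and works with the intrinsic L-convex (submodular/lattice) structure on ${\cal L}\times{\cal M}$, together with canonical retractions, to show that any \emph{steepest}-descent step --- which is what solving (R) to optimality provides --- reduces $\tilde d_\infty$ by one. You acknowledge this step as ``the main obstacle'' and defer it to \cite{HH_degdet}; but since that is precisely the theorem being cited, the deferral leaves the heart of the upper bound unproved. (Invoking N-convexity here is also misplaced: N-convexity is introduced in this paper for the sensitivity analysis under cost perturbation, not for the iteration bound.)
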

This property is a consequence of L-convexity of the objective function of (D). 
Thus {\bf Deg-Det} is a pseudo-polynomial time algorithm.
We will improve {\bf Deg-Det}  by using a cost-scaling technique.

\subsection{Cost-scaling}
In combinatorial optimization, 
cost-scaling is a standard technique to improve 
a pseudo-polynomial time algorithm $\bf A$ to a polynomial one.
Consider the following situation:
Suppose that an optimal solution $x^*$ for costs $\lceil c_k/2 \rceil$ becomes an optimal solution $2x^*$
for costs $2 \lceil c_k/2 \rceil$, and that 
the algorithm $\bf A$ starts from $2x^*$ and obtains
an optimal solution for costs $c_k  \approx 2 \lceil c_k/2 \rceil$ 
within a polynomial number of iterations.
In this case, 
a polynomial time algorithm is obtained by $\log \max_{k} c_k$ calls of $\bf A$.

Motivated by this scenario, we incorporate 
a cost scaling technique with {\bf Deg-Det} as follows:
\begin{description}
	\item[Algorithm: Cost-Scaling]
	\item[Input:] $A = \sum_{k =1}^mA_kx_kt^{c_k}$, where $A_k \in \KK^{n \times n}$ and $c_k \geq 1$ for $k \in [m]$. 
	\item[Output:] $\deg \Det A$.
	\item[0:]  
	Let $C \leftarrow \max_{i \in [m]} c_i$, $N \leftarrow \lceil \log_2 C \rceil$, 
	$\theta \leftarrow 0$, and $(P,Q) \leftarrow (t^{-1} I,I)$. 
	\item[1:] Let $c^{(\theta)}_k \leftarrow \lceil c_i/ 2^{N - \theta} \rceil$ for $k \in [m]$, and let $A^{(\theta)} \leftarrow \sum_{k =1}^{m}A_kx_kt^{c^{(\theta)}_k}$.
	\item[2:] Apply {\bf Deg-Det} for $A^{(\theta)}$  and $(P,Q)$, 
	and obtain an optimal solution $(P^*, Q^*)$ for $A^{(\theta)}$.
	\item[3:] If $\theta=N$, then output $- \deg \det P^* - \deg \det Q^*$.
	Otherwise, 
	 letting $(P, Q) \leftarrow (P^*(t^2), Q^*(t^2))$ and $\theta \leftarrow \theta+1$,  go to step 1. 
\end{description}
For the initial scaling phase $\theta=0$, it holds $c_k^{(0)} = 1$ for all $k$ and
$(P,Q) = (t^{-1}I,I)$ is an optimal solution for $A^{(0)}$ 
(by Lemma~\ref{lem:optimality} and 
the assumption $\ncrank \sum_{k=1}^m A_k x_k= n$).
\begin{Lem}
$(P^*(t^2), Q^*(t^2))$ is an optimal solution for 
$A^{(\theta)}(t^2) = \sum_{k=1}^{m}A_k x_k t^{2 c^{(\theta)}_k}$, and is a feasible solution for $A^{(\theta+1)}$.
\end{Lem}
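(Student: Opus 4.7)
The plan is to view the substitution $t \mapsto t^2$ as a ring endomorphism of $\KK(t)$ (extending entrywise to matrices) that (i) doubles every $t$-degree and (ii) preserves the coefficient of $t^0$. Both facts are elementary: if $M(t) = \sum_\ell M^{(\ell)} t^\ell$ is the Laurent expansion of a matrix over $\KK(t)$, then $M(t^2) = \sum_\ell M^{(\ell)} t^{2\ell}$, so each entry's degree doubles and the $t^0$-coefficient matrix $M^{(0)}$ is the same for $M(t)$ and for $M(t^2)$.

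First I would verify optimality of $(P^*(t^2),Q^*(t^2))$ for $A^{(\theta)}(t^2)$. For feasibility, the constraints for $(P^*,Q^*)$ in (D) for $A^{(\theta)}$ read $\deg(P^* A_k Q^*)_{ij} + c_k^{(\theta)} \leq 0$; doubling both sides and using (i) yields
\[
\deg\bigl(P^*(t^2) A_k Q^*(t^2)\bigr)_{ij} + 2c_k^{(\theta)} \leq 0,
\]
which is exactly feasibility for $A^{(\theta)}(t^2) = \sum_k A_k x_k t^{2c_k^{(\theta)}}$. For optimality I would invoke Lemma~\ref{lem:optimality}~(1): since $(P^*,Q^*)$ is optimal for $A^{(\theta)}$, we have $\ncrank(P^* A^{(\theta)} Q^*)^{(0)} = n$; by (ii) applied to $P^* A^{(\theta)} Q^*$, the leading term $(P^*(t^2) A^{(\theta)}(t^2) Q^*(t^2))^{(0)}$ coincides with $(P^* A^{(\theta)} Q^*)^{(0)}$, whose nc-rank is therefore still $n$, so Lemma~\ref{lem:optimality}~(1) again certifies optimality.

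For the second assertion, I would just compare the two scaled cost sequences. From $c_k^{(\theta)} = \lceil c_k / 2^{N-\theta}\rceil$ and $c_k^{(\theta+1)} = \lceil c_k / 2^{N-\theta-1}\rceil$, the simple ceiling inequality $\lceil 2x\rceil \leq 2\lceil x\rceil$ gives $c_k^{(\theta+1)} \leq 2c_k^{(\theta)}$. Combined with the feasibility inequality for $A^{(\theta)}(t^2)$ just established, this produces
\[
\deg\bigl(P^*(t^2) A_k Q^*(t^2)\bigr)_{ij} + c_k^{(\theta+1)} \leq \deg\bigl(P^*(t^2) A_k Q^*(t^2)\bigr)_{ij} + 2c_k^{(\theta)} \leq 0,
\]
which is feasibility of $(P^*(t^2),Q^*(t^2))$ for $A^{(\theta+1)}$.

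There is no real obstacle here; the only point that deserves care is the observation (ii), namely that the $t^0$-coefficient is preserved under $t \mapsto t^2$, since this is what allows the optimality certificate of Lemma~\ref{lem:optimality}~(1) to transfer directly without recomputing $\ncrank$ on a different leading matrix.
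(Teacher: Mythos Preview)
Your argument is correct and follows essentially the same approach as the paper: optimality transfers because the leading term $(PA[c]Q)^{(0)}$ is unchanged under $t\mapsto t^2$ (so Lemma~\ref{lem:optimality}~(1) still applies), and feasibility for $A^{(\theta+1)}$ follows from $c_k^{(\theta+1)} \leq 2c_k^{(\theta)}$. Your write-up is simply more explicit than the paper's two-sentence justification.
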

The former statement follows from the observation that 
the optimality (Lemma~\ref{lem:optimality}~(1)) 
keeps under the change $(P,Q) \leftarrow (P(t^2),Q(t^2))$ and $c_k \leftarrow 2c_k$.
The latter statement follows from the fact that 
$c^{(\theta+1)}_k$ is obtained by decreasing $2 c^{(\theta)}_k$ (at most by $1$). 
The correctness of the algorithm is clear from this lemma.

To apply Theorem~\ref{thm:bound}, 
we need  a polynomial bound of  the distance between the initial solution
$(P^*(t^2), Q^*(t^2))$ of the $\theta$-th scaling phase
and optimal solutions for $A^{(\theta)}$. The main ingredient 
for our algorithm is the  following sensitivity result.
\begin{Prop}\label{prop:sensitivity}
Let $(P,Q)$ be the initial solution in the $\theta$-th scaling phase. 
Then it holds $\tilde d_{\infty}(\langle P,Q \rangle, {\rm OPT} (A^{(\theta)})) \leq n^2 m$.
\end{Prop}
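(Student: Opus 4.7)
The plan is to reduce the sensitivity statement on the Euclidean building ${\cal L} \times {\cal M}$ to a combinatorial sensitivity statement on a single apartment, and to invoke the N-convexity framework promised in Section~\ref{subsec:sensitivity}. I start from the observation that the phase-$\theta$ initial point $(P,Q)$ equals $(\tilde P(t^2), \tilde Q(t^2))$ for the optimum $(\tilde P, \tilde Q)$ of phase $\theta-1$, which by the lemma preceding the proposition is optimal for the cost vector $2 c^{(\theta-1)}$. Componentwise one has $|2c_k^{(\theta-1)} - c_k^{(\theta)}| \leq 1$, since $c_k^{(\theta)}=\lceil c_k/2^{N-\theta}\rceil$ differs from $2\lceil c_k/2^{N-\theta+1}\rceil$ by at most one unit of rounding. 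Thus the task reduces to the following claim: if $(P,Q)$ is optimal for $A[c]$ and $\|c'-c\|_\infty \leq 1$, then some optimum of $A[c']$ satisfying the required inclusion relations lies within $\ell_\infty$-distance $n^2 m$ of $\langle P, Q\rangle$.

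To prove this reduced claim, I would apply Lemma~\ref{lem:apartment} to fix an apartment ${\cal A}$ that contains $\langle P, Q\rangle$ and some admissible optimum $\langle P^*, Q^*\rangle$ for $A[c']$; the existence of such an optimum with $\langle P\rangle \subseteq \langle P^*\rangle$ and $\langle Q\rangle \supseteq \langle Q^*\rangle$ is guaranteed by the remark preceding Theorem~\ref{thm:bound}. On ${\cal A} \cong \ZZ^{2n}$ the $\ell_\infty$-distance on the building coincides with the standard $\ell_\infty$-distance, and (D) restricted to ${\cal A}$ becomes (D$_{\cal A}$), whose objective can be rewritten as the N-convex function
\[
f_c(\alpha, \beta) \;=\; -\sum_i \alpha_i - \sum_j \beta_j \;+\; M \sum_{k,i,j} \max(\alpha_i + \beta_j + c_{ij}^k,\, 0)
\]
for sufficiently large $M$, combining parts (1)--(4) of Lemma~\ref{lem:N-convexity}: the linear part is N-convex, $\max(x_i+x_j, 0)$ is N-convex, and N-convexity is preserved under translation, coordinate sign change, and nonnegative combinations. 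The perturbation $c \to c'$ shifts each of the at most $n^2 m$ constants $c_{ij}^k$ by at most one.

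The crux is then a sensitivity theorem for N-convex functions, to be established in Section~\ref{subsec:sensitivity}, stating that a unit perturbation of each of the $n^2 m$ penalty thresholds moves the set of minimizers by at most $n^2 m$ in $\ell_\infty$-distance. My intended proof of that theorem is an exchange argument: starting from optima $x$ of $f_c$ and $y^*$ of $f_{c'}$, one of the two N-convexity inequalities (\ref{eqn:N1}) or (\ref{eqn:N2}) applied to the pair $(x, y^*)$ yields that either $(x \rightarrow y^*, y^* \rightarrow x)$ or $(x \twoheadrightarrow y^*, y^* \twoheadrightarrow x)$ contains a new pair whose $\ell_\infty$-distance is strictly smaller, while consuming at most one unit of the $n^2 m$-budget of penalty perturbations. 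Iterating this exchange yields the desired distance bound.

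The main obstacle is the sensitivity theorem itself. Unlike L-convexity, N-convexity does not yet come equipped with a proximity theorem, so one must argue directly from (\ref{eqn:N1}) and (\ref{eqn:N2}) how the normal-path operators $\rightarrow$ and $\twoheadrightarrow$ interact with unit perturbations of the individual max-terms. I anticipate a careful case analysis governed by which penalties become active or inactive along the normal path from an optimum of $f_c$ to an optimum of $f_{c'}$, analogous to the two-dimensional case analysis already carried out in the proof of Lemma~\ref{lem:N-convexity}~(2).
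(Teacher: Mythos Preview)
Your reduction to an apartment and your penalty formulation $f_c$ are exactly right, and the one-shot perturbation of all $m$ costs at once (rather than the paper's one-cost-at-a-time iteration giving $n^2$ per step) would yield the same bound by the same mechanism. The gap is in the proof of the sensitivity bound itself, where your sketch diverges from what actually works. Your ``exchange argument'' --- applying \emph{one} of (\ref{eqn:N1}) or (\ref{eqn:N2}) to the pair $(x,y^*)$ and replacing it by a closer pair while ``consuming budget'' --- does not pin down a usable invariant, and the anticipated case analysis is unnecessary.

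The paper's argument is cleaner and uses \emph{both} inequalities simultaneously, for different functions. Take the nearest optimum $z'$ of $f_{c'}$ satisfying the inclusion order, and walk along the normal path $z=z^0,z^1,\ldots,z^d=z'$. Apply (\ref{eqn:N2}) to $f_c$ at $(z,z^\ell)$: since $z$ minimizes $f_c$, this forces $f_c(z^{\ell-1})\le f_c(z^\ell)$, so $f_c$ is nondecreasing along the path. Apply (\ref{eqn:N1}) to $f_{c'}$ at $(z^\ell,z')$: since $z'$ is the \emph{nearest} minimizer of $f_{c'}$, the point $z'\rightarrow z^\ell$ is strictly worse, forcing $f_{c'}(z^{\ell+1})<f_{c'}(z^\ell)$, so $f_{c'}$ is strictly decreasing. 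But $f_c(z^\ell)-f_{c'}(z^\ell)=M\,N_\ell$, where $N_\ell$ counts the constraints $(i,j,k)$ with $\alpha_i^\ell+\beta_j^\ell+c_{ij}^k=1$ (all path points satisfy $\le 1$ by N-convexity of the constraint functions). Hence $N_\ell$ is a strictly increasing integer sequence from $N_0=0$ to $N_d\le n^2 m$, giving $d\le n^2 m$ with no case analysis at all. The piece you are missing is this monotonicity-plus-counting mechanism; once you see it, the obstacle you flagged disappears.
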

The proof is given in Section~\ref{subsec:sensitivity}, in which 
N-convexity plays a crucial role.
Thus the number of iterations of {\bf Deg-Det} in step 2 is bounded by $O(n^2 m)$, 
and the number of the total iterations is $O(n^2m \log C)$.

\subsection{Truncation of low-degree terms}
Still, the algorithm is not polynomial,
since a naive calculation makes $(P,Q)$ have
a pseudo-polynomial number of terms.
Observe that $(S,T)$ in step 1 of {\bf Deg-Det} depends only on 
the leading term of $PAQ = (PAQ)^{(0)} + (PAQ)^{(-1)}t^{-1} + \cdots$. 
Therefore it is  expected that terms $(PAQ)^{(-\ell)}t^{-\ell}$ with large $\ell>0$
do not affect on the subsequent computation.
Our polynomial time algorithm is obtained 
by truncating such low degree terms.
Note that in the case of the weighted linear matroid intersection, i.e., each $A_k$ is rank-1,
such a care is not needed; see~\cite{FurueHirai, HH_degdet} for details.

First, we present the cost-scaling {\bf Deg-Det} algorithm in the form that 
it updates $A_k$ instead of $P,Q$ as follows:
\begin{description}
	\item[Algorithm: Deg-Det with Cost-Scaling]
	\item[Input:] $A = \sum_{k =1}^{m}A_kx_kt^{c_k}$, where $A_k \in \KK^{n \times n}$ and $c_k \geq 1$ for $k \in [m]$. 
	\item[Output:] $\deg \Det A$.
	\item[0:]  
	Let $C \leftarrow \max_{i \in [m]} c_i$, $N \leftarrow \lceil \log_2 C \rceil$, 
	$\theta \leftarrow 0$, $B_k \leftarrow A_k$ for $k \in [m]$, and  $D^* \leftarrow n$.
    \item[1:] Letting $B \leftarrow \sum_{k=1}^{m} B_k x_k$, 
    solve the problem (R) for $B^{(0)}$
	and obtain an optimal solution $S,T$.
	\item[2:] Suppose that the optimal value $2n - r-s$ of (R) is less than $n$.
	Letting $B_k \leftarrow  (t^{{\bf 1}_{r}}) S B_k T(t^{- {\bf 1}_{n-s}})$ for $k \in [m]$ and
	$D^* \leftarrow D^* + n - r - s$,  go to step 1. 
	\item[3:] Suppose that the optimal value $2n - r-s$ of (R) is equal to $n$.
	If $\theta = N$, then output $D^*$. Otherwise, 
	letting 
	\[
	B_k \leftarrow \left\{   
	\begin{array}{ll}
	B_k(t^2) & {\rm if}\   \lceil c_i/ 2^{N - \theta - 1} \rceil = 2 \lceil c_i/ 2^{N - \theta} \rceil,   
	\\
	t^{-1} B_k(t^2) & {\rm if}\  \lceil c_i/ 2^{N - \theta - 1} \rceil = 2 \lceil c_i/ 2^{N - \theta}\rceil  -1,
	\end{array}\right.
	\]
	$D^* \leftarrow 2D^*$, and $\theta \leftarrow \theta+1$, go to step 1. 
\end{description}
Notice that each $B_k$ is written as the following form:
\[
B_k = B_k^{(0)} + B_k^{(-1)} t^{-1} + B_k^{(-2)} t^{-2} + \cdots,
\]
where $B_k^{(-\ell)}$ is a matrix over $\KK$.
We consider to truncate low-degree terms of $B_k$ after step 1.
For this, we estimate the magnitude of degree 
for which the corresponding term
is irrelevant to the final output.
In the modification $B_k  \leftarrow  (t^{{\bf 1}_{r}}) S B_k T(t^{- {\bf 1}_{n-s}})$ of step $2$, 
the term $B_k^{(-\ell)} t^{-\ell}$ splits into 
three terms of degree $- \ell + 1$, $-\ell$, and $- \ell -1$.
By Proposition~\ref{prop:sensitivity}, 
this modification is done at most $L := mn^2$ time in each scaling phase. 
In the final scaling phase $\theta = N$, 
the results of this phase only depend on terms of $B_k$ with degree at least $- L$.
These terms come from the first $L/2$ terms of $B_k$ 
in the end of the previous scaling phase $\theta=N-1$, 
which come from the first $L/2 + L$ terms of $B_k$ at the beginning of the phase.
They come from the first $(L/2 + L)/2 + L$ terms of the phase $s= N-2$.
A similar consideration shows that the final result is a consequence of the first 
$L (1+1/2 + 1/4 + \cdots + 1/2^{N-\theta}) < 2L$ terms of $B_k$ at the beginning of the $\theta$-th scaling phase. 
Thus we can truncate each term of degree at most $- 2L$: 
Add to {\bf Deg-Det with Cost-Scaling} the following procedure after step $1$.
\begin{description}
\item[Truncation: ] For each $k \in [m]$, 
remove from $B_k$ all terms $B_k^{(- \ell)}t^{- \ell}$ for $\ell \geq 2 n^2m$.
\end{description}
Now we have our main result in an explicit form:
\begin{Thm}
{\bf Deg-Det with Cost-Scaling} computes $\deg \Det A$ in $O( (\gamma(n,m) + n^{2+\omega} m^2) n^2 m \log_2 C)$ time, where  $\gamma(n,m)$ denotes the time complexity of 
solving {\rm (R)} and $\omega$ denotes the exponent of the time complexity of 
matrix multiplication.
\end{Thm}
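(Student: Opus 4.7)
The plan is to assemble three ingredients: (i) correctness of the scheme including the truncation step, (ii) a bound on the number of iterations, and (iii) the cost of one iteration under the truncation bound. Multiplying the last two yields the stated running time.

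First, I would verify correctness. The unsaturated algorithm (without truncation) is correct: starting from $(P,Q)=(t^{-1}I,I)$, which is an optimal solution for $A^{(0)}$ by Lemma~\ref{lem:optimality}~(1) and the assumption $\ncrank\sum_k A_kx_k=n$, the lemma after \textbf{Cost-Scaling} transfers this optimal solution between phases, and \textbf{Deg-Det} then re-optimizes within each phase. Correctness of truncation needs to be argued separately. Each execution of step~2 performs $B_k\leftarrow (t^{{\bf 1}_r})SB_kT(t^{-{\bf 1}_{n-s}})$, which shifts the degree of any nonzero coefficient by at most $\pm1$. The only data used to choose $(S,T)$ is $B^{(0)}=\sum_k B_k^{(0)}x_k$, and $D^*$ depends only on the $(r,s)$ pairs produced over all iterations. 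Hence the output is determined by how the leading coefficients evolve. I would then formalize the geometric-series bookkeeping already sketched before the theorem: the leading coefficients across the final phase are controlled by the first $L$ coefficients at its start; pulling back through step~3 and the preceding phase yields $L+L/2$; iterating over all $N$ phases gives a bound $<2L=2n^2m$. Terms of degree below $-2n^2m$ can therefore be dropped with no effect on the output.

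Next, for the number of iterations, $N=\lceil\log_2 C\rceil$ scaling phases are executed. Within each phase, Proposition~\ref{prop:sensitivity} gives $\tilde d_\infty(\langle P,Q\rangle,\mathrm{OPT}(A^{(\theta)}))\le n^2m$, so by Theorem~\ref{thm:bound} the number of executions of step~1 per phase is at most $n^2m+1=O(n^2m)$. The total iteration count is therefore $O(n^2m\log_2 C)$.

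Then I would bound the work per iteration. After truncation each $B_k$ has at most $2n^2m+1$ nonzero coefficient matrices in $\KK^{n\times n}$. Forming $B^{(0)}$ costs $O(n^2m)$. Solving (R) on $B^{(0)}$ costs $\gamma(n,m)$. The update $B_k\leftarrow (t^{{\bf 1}_r})SB_kT(t^{-{\bf 1}_{n-s}})$ requires two $n\times n$ matrix multiplications per coefficient, at $O(n^\omega)$ each; multiplied by $O(n^2m)$ coefficients and $m$ indices $k$, this totals $O(n^{2+\omega}m^2)$. The diagonal shifts by $(t^{{\bf 1}_r})$ and $(t^{-{\bf 1}_{n-s}})$ are relabelings of degrees, as are the phase-transition substitutions $t\mapsto t^2$ and possibly multiplication by $t^{-1}$ in step~3; these are cheaper and are absorbed. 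Multiplying $\gamma(n,m)+O(n^{2+\omega}m^2)$ by the iteration count $O(n^2m\log_2 C)$ yields the claimed bound.

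The main obstacle is the truncation correctness: one must argue that after truncation the values of $B^{(0)}$ computed in step~1 agree, iteration by iteration, with those that would have been obtained by the untruncated algorithm. The key is to couple the two runs and confirm inductively that the discarded coefficients can never migrate up to degree $0$ during the remaining $O(n^2m\log C)$ iterations; the $+1$ per-step degree shift together with the $L=n^2m$ iteration bound per phase and the halving across phases makes the threshold $2n^2m$ safe. The rest of the argument is straightforward bookkeeping that only appeals to already-established results.
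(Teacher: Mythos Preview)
Your proposal is correct and follows essentially the same approach as the paper. The paper's own proof is terse because correctness of truncation and the $O(n^2m\log_2 C)$ iteration bound are established in the surrounding discussion (Section~3.3 and Proposition~\ref{prop:sensitivity}); you have simply folded those arguments explicitly into the proof, which is entirely appropriate.
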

\begin{proof}
The total number of calls of the oracle solving (R) is  
that of the total iterations $O(n^2 m \log C)$.
By the truncation, the number of terms of $B_k$ is $O(n^2m)$.
Hence the update of all $B_k$ in each iteration is done in $O(n^{2+\omega} m^2)$ time. 
\end{proof}

\subsection{Proof of the sensitivity theorem}\label{subsec:sensitivity}
Let $A = \sum_{k=1}^m A_k x_k t^{c_k}$ and let $A' = A_1x_1 t^{c_1-1} +\sum_{k=2}^m A_k x_k t^{c_k}$.
\begin{Lem}
Let $(P,Q)$ be an optimal solution for $A$.
There is an optimal solution $(P',Q')$ for $A'$ such that $\langle P \rangle \subseteq \langle P' \rangle$, $\langle Q \rangle \supseteq \langle Q' \rangle$, and
$d_{\infty}(\langle P',Q' \rangle, \langle P,Q \rangle) \leq n^2$. 
\end{Lem}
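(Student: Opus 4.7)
The plan is to carry everything into a single apartment of the Euclidean building, reduce the sensitivity question to that of a dual bipartite assignment LP, and combine N-convexity with LP sensitivity to establish the $n^2$ bound.

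I would first arrange the inclusions and choose an apartment. Given any optimal $(P_0^*, Q_0^*)$ for $A'$, the translation $(P_0^*, Q_0^*) \mapsto (t P_0^*, Q_0^* t^{-1})$ preserves both feasibility and objective in (D), so I can modify it to obtain an optimal $(P^*, Q^*)$ for $A'$ satisfying $\langle P\rangle \subseteq \langle P^*\rangle$ and $\langle Q\rangle \supseteq \langle Q^*\rangle$, chosen to minimize $d_\infty(\langle P^*,Q^*\rangle, \langle P,Q\rangle)$ among such candidates. Then by Lemma~\ref{lem:apartment} I pick an apartment $\mathcal{A}$ containing $\langle P,Q\rangle$ and $\langle P^*,Q^*\rangle$, and identify them with integer vectors $x = (\alpha,\beta)$ and $y = (\alpha^*, \beta^*)$ of $\ZZ^{2n}$ with $\alpha \le \alpha^*$ and $\beta \ge \beta^*$ componentwise (the inclusions translate to these orderings).

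In $\mathcal{A}$, problem (D) for $A'$ becomes (D$_\mathcal{A}$), a linear program with constraints $\alpha_i + \beta_j + c_{ij}^k \le 0$ whose feasible region differs from the $A$-version only in that the $n^2$ constraints with $k=1$ are each relaxed by $1$. By the remark following Lemma~\ref{lem:N-convexity}, the objective of (D$_\mathcal{A}$) is N-convex on $\ZZ^{2n}$. Applying axiom (\ref{eqn:N1}) to $f_{A'}$ at $x,y$ and using that $y$ is $A'$-optimal (so $f_{A'}(y \to x) \ge f_{A'}(y)$) yields $f_{A'}(x \to y) \le f_{A'}(x)$. The $d_\infty$-minimality of $(P^*,Q^*)$ then forbids $x \to y$ from being $A'$-optimal, whence $f_{A'}(y) < f_{A'}(x \to y) \le f_{A'}(x)$. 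Iterating along the normal path $x = x_0, x_1, \ldots, x_d = y$ produces the chain $f_{A'}(y) < f_{A'}(x_i) \le f_{A'}(x)$ for every $i < d$.

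To close with $d \le n^2$ I would combine two ingredients. First, the objective gap $f_{A'}(x) - f_{A'}(y) = \deg\Det A - \deg\Det A'$ lies in $[0, n]$: monotonicity of $\deg\Det$ in $c$ together with the identity $\deg\Det A[c + \mathbf{1}] = n + \deg\Det A[c]$ (used in the proof of Lemma~\ref{lem:nc_nonsingular}) yields $0 \le \deg\Det A - \deg\Det A' \le n$. Second, I would bound by $n$ the number of normal-path points sharing any fixed $f_{A'}$-value, reflecting the fact that only $n^2$ constraints distinguish $A$ from $A'$ and the supporting perfect matching of (D$_\mathcal{A}$) can reroute through at most $n$ of them before pinning down an $A'$-optimum; multiplying gives $d \le n \cdot n = n^2$. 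The main obstacle I expect is this level-set estimate: N-convexity supplies the global objective monotonicity along the normal path but does not by itself control how many steps can occur at equal objective value, so this step requires a separate shortest-alternating-path argument in the bipartite graph of tight constraints, exploiting total unimodularity of the assignment LP.
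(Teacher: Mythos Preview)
Your setup---passing to a common apartment, identifying $\langle P,Q\rangle$ and the nearest $A'$-optimum with lattice points $z$ and $z'$ satisfying the right order relations, and running the normal path---matches the paper exactly. The difficulty is precisely where you flag it: the ``level-set estimate'' is not proved, and the alternating-path sketch you offer does not obviously yield the bound $n$ per level. So as written the argument has a genuine gap.

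The paper closes the gap by a different mechanism that avoids level-set counting entirely. Instead of working only with the $A'$-objective, it introduces the barrier functions
\[
h(x,y) \;=\; -\textstyle\sum_i x_i - \sum_j y_j + M\sum_{i,j,k}\max\{x_i+y_j+c_{ij}^k,0\}
\]
and $h'$ (the same with $c_{ij}^1$ replaced by $c_{ij}^1-1$). Both are N-convex. Along the normal path one has $h'(z^\ell)=h(z^\ell)-M N_\ell$, where $N_\ell$ counts the pairs $(i,j)$ at which $z^\ell$ violates the $k=1$ constraint of $A$ by exactly $1$. Since $z$ minimizes $h$, N-convexity (via $\twoheadrightarrow$) gives $h(z^0)\le h(z^1)\le\cdots\le h(z^d)$; since $z'$ is the \emph{nearest} minimizer of $h'$, N-convexity (via $\rightarrow$, applied at $z^\ell$ and $z'$) gives $h'(z^0)>h'(z^1)>\cdots>h'(z^d)$. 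Subtracting forces $N_0<N_1<\cdots<N_d\le n^2$, hence $d\le n^2$. The point you are missing is the simultaneous use of the $A$-side function $h$: the strictly increasing integer sequence $N_\ell$ replaces your level-set argument and is bounded by $n^2$ simply because there are $n^2$ pairs $(i,j)$.

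There is also a more direct repair of your own line. Your derivation gives only weak monotonicity $f_{A'}(z^{\ell+1})\le f_{A'}(z^\ell)$ because you used only $f_{A'}(y\to x)\ge f_{A'}(y)$. If instead you apply (\ref{eqn:N1}) at $z^\ell,z'$ and observe that $z'\to z^\ell$ is feasible for $A'$, satisfies the inclusion constraints, and is strictly closer to $z$ than $z'$, then minimality gives $f_{A'}(z'\to z^\ell)>f_{A'}(z')$ and hence $f_{A'}(z^\ell)>f_{A'}(z^{\ell+1})$ \emph{strictly}. That already bounds $d$ by the integer gap $f_{A'}(z)-f_{A'}(z')=\deg\Det A-\deg\Det A'\le n$, without any level-set estimate.
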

Proposition~\ref{prop:sensitivity} follows from this lemma, 
since $A^{(\theta)}$ is obtained from $A^{(\theta-1)}(t^2)$
by $O(m)$ decrements of $2c_k^{(\theta-1)}$.

Let $(P',Q')$ be an optimal solution for $A'$ such that  $\langle P \rangle \subseteq \langle P' \rangle$, $\langle Q \rangle \supseteq \langle Q' \rangle$, and
$d:= d_{\infty}(\langle P',Q' \rangle, \langle P,Q \rangle)$ is minimum.
Suppose that $d > 0$.
By Lemma~\ref{lem:apartment}, 
choose an apartment ${\cal A}$ of ${\cal L} \times {\cal M}$ 
containing $\langle P,Q \rangle$ and $\langle P',Q' \rangle$.
Regard ${\cal A}$ as $\ZZ^n \times \ZZ^n$. 
Then $\langle P,Q \rangle$ and $\langle P',Q' \rangle$ are regarded as 
points $(\alpha,\beta)$ and $(\alpha',\beta')$ in  $\ZZ^n \times \ZZ^n$, respectively.
The inclusion order $\subseteq$ 
becomes vector ordering $\leq$. 
In particular, $\alpha \leq \alpha'$ and $\beta \geq \beta'$. 
Consider the problem (D$_{\cal A}$) on this apartment.
We incorporate the constraints $x_i + y_i + c_{ij}^k \leq 0$ 
to the objective function as barrier functions.
Let $M > 0$ be a large number. 
Define $h:\ZZ^{n} \times \ZZ^n \to \RR$ by 
\begin{equation*}
h(x,y) := - \sum_{i} x_i - \sum_{i} y_i + M \sum_{i,j, k} \max \{x_i + y_i + c_{ij}^k,0\} \quad ((x,y) \in \ZZ^{n} \times \ZZ^n),
\end{equation*}
where $i,j$ range over $[n]$ and $k$ over $[m]$.
Similarly define $h':\ZZ^{n} \times \ZZ^n \to \RR$ with replacing 
$c_{ij}^{1}$ by $c_{ij}^{1} - 1$ for each $i,j \in [n]$. 

Since $M$ is large, 
$(\alpha, \beta)$ is a minimizer of $h$ and $(\alpha', \beta')$ 
is a minimizer of $h'$.
Note that $(\alpha, \beta)$ is not a minimizer of $h'$. 

Consider the normal path $(z = z^0, z^1,\ldots,z^d = z')$
from $z = (\alpha, \beta)$ to $z' =(\alpha', \beta')$.
Since $z$ and $z'$ satisfy $x_i + y_j + c_{ij}^1  \leq 1$ and
$x_i + y_j + c_{ij}^k \leq 0$ $(k \neq 1)$ for all $i,j \in [n]$,
by N-convexity (Lemma~\ref{lem:N-convexity} (2))
all points $z^{\ell} = (x^{\ell}, y^{\ell})$  in the normal path satisfies these constraints. 
Let $N_\ell$ be the number of the indices 
$(i,j)$ such that $z^{\ell} = (x^{\ell},y^{\ell})$
satisfies  $x_i^{\ell} + y_j^{\ell} + c_{ij}^1 = 1$.
Then 
\begin{equation}\label{eqn:MN_l}
h'(z^{\ell}) = h(z^{\ell}) - M N_{\ell} \quad (\ell = 0,1,2,\ldots,d),
\end{equation}
where  $N_0 = 0$ holds (since $z$ is a feasible solution for $A$).

Next we show the monotonicity of $h,h'$ through the normal path:
\begin{eqnarray}
&& h(z) \leq h(z^1) \leq \cdots  \leq h(z^{d-1}) \leq h(z'), \label{eqn:monotone1}\\
&& h'(z) > h'(z^1) > \cdots > h'(z^{d-1}) > h'(z'). \label{eqn:monotone2}
\end{eqnarray}
Since $h$ is N-convex and $z$ is a minimizer of $h$, 
we have $h(z) + h(z^{\ell}) \geq h(z \twoheadrightarrow  z^\ell ) + h(z^{\ell-1})$
and $h(z) \leq  h(z \twoheadrightarrow z^\ell )$, implying $h(z^{\ell}) \geq h(z^{\ell-1})$.
Similarly, 
since $h'$ is N-convex, 
it holds $h'(z^{\ell}) + h'(z') \geq h'(z^{\ell+1}) + h'(z' \rightarrow z^{\ell})$.
Here $z' \rightarrow z^{\ell} = (\tilde x, \tilde y)$ is closer to $z=(\alpha, \beta)$ 
than $z'$, with $\alpha \leq \tilde x$, $\beta \geq \tilde y$. 
Since  $z'$ is a minimizer of $h'$ nearest to $z$,
 we have $h'(z') < h'(z' \rightarrow z^{\ell})$. Thus $h'(z^{\ell}) > h'(z^{\ell+1})$.

By (\ref{eqn:MN_l}), (\ref{eqn:monotone1}), (\ref{eqn:monotone2}), we have
\begin{equation*}
0 = N_0 < N_1 < \cdots < N_{d-1} < N_d \leq n^2.
\end{equation*}
Thus we have $d \leq n^2$.

\section{Polyhedral interpretation of $\deg \Det$ and its implications}\label{sec:polyhedral}
In this section, we present a polyhedral interpretation of $\deg \Det A[c]$.
Based on it, we show that the previous algorithm becomes strongly polynomial, and
is applicable for the case of $\KK=\QQ$, with avoiding the bit complexity issue.

The starting point is a well-known fact that $\deg \det A[c]$ is
given by the linear optimization 
over the Newton polytope of $\det A$ with objective vector $c$.
Here the {\em Newton polytope} of 
a multivariate polynomial $p(x_1,x_2,\ldots,x_m)= \sum_{u_1,u_2,\ldots,u_m} a_{u_1 u_2 \ldots u_m}x_1^{u_1}x_2^{u_2} \cdots x^{u_m}_m$ 
is defined as the convex hull of all integer vectors $u=(u_1,u_2,\ldots,u_m)$ 
with $a_{u_1 u_2 \ldots u_m} \neq 0$.
Now $\det A$ is a multivariate polynomial of variables $x_1,x_2,\ldots,x_m$. 
Let $\Newton A$ denote the Newton polytope of $\det A$. Then we observe that
\begin{equation}\label{eqn:Newton}
\deg \det A[c] = \max \{ c^{\top} u \mid  u \in \Newton A \}.
\end{equation}
The main theme of this section is to establish an analogous relation 
for $\deg \Det A[c]$.

\subsection{Blow-ups}
We first explain that $\ncrank$ is expressed as the ordinary rank 
of an expanded matrix called a {\em blow-up}.
Let $A = \sum_{k=1}^m A_k x_k$ be a matrix of form (\ref{eqn:A}).
For a positive integer $d$, the $d$-blow up $A^{\{d\}}$ of $A$ is defined by
\begin{equation}
A^{\{d\}} := \sum_{k=1}^m A_k \otimes X_k,
\end{equation}
where $\otimes$ denotes the Kronecker product and
$X_k$ is a $d \times d$ variable matrix 
\[
X_k = 
\left(
\begin{array}{cccc}
x_{k,11} & x_{k,12} & \cdots & x_{k,1d} \\
x_{k,21} & x_{k,22} & \cdots & x_{k,2d} \\
\vdots &     \vdots      &  \ddots         & \vdots        \\             
x_{k,d1}  &     x_{k,d2}     &    \cdots            &  x_{k,dd}        
\end{array}  
\right).
\]
By a natural arrangement of rows and columns, $A^{\{d\}}$ is written as
\begin{equation}\label{eqn:arranged}
A^{\{d\}} = 
\left(
\begin{array}{cccc}
\sum_{k}A_k  x_{k,11} & \sum_{k} A_k   x_{k,12} & \cdots & \sum_{k} A_k  x_{k,1d} \\
\sum_{k} A_k  x_{k,21} & \sum_{k} A_k  x_{k,22} & \cdots & \sum_{k} A_k  x_{k,2d} \\
\vdots &     \vdots      &  \ddots         & \vdots        \\             
\sum_{k} A_k  x_{k,d1}  &    \sum_{k} A_k   x_{k,d2}     &    \cdots            &  \sum_{k} A_k  x_{k,dd}        
\end{array}  
\right). 
\end{equation}
We consider the (ordinary) rank of $A^{\{d\}}$ over 
the rational function field $\KK(\{x_{k,ij}\})$.
It is known \cite{HrubesWigderson2015, Kaliuzhnyi-VerbovetskyiVinnikov2012} that $\ncrank A = n$ if and only if 
$\rank A^{\{d\}} = nd$ for some $d > 0$. 
A linear bound for such $d$ is obtained by:
\begin{Thm}[\cite{DerksenMakam2017}]\label{thm:DerkenMakam}
$\ncrank A = n$ if and only if $\rank A^{\{d\}} =dn$ for $d \geq n-1$.
\end{Thm}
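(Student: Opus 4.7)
The plan is to prove the two directions separately, with the ``if'' direction being straightforward and the ``only if'' direction forming the substantive content.

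For the ``if'' direction, I would argue by contrapositive using the Fortin--Reutenauer formula (Theorem~2.1), which we have adopted as the definition of $\ncrank$. Suppose $\ncrank A < n$. Then there exist $S, T \in GL_n(\KK)$ such that $SAT$ has an $r \times s$ zero submatrix with $r + s > n$. The matrices $S \otimes I_d$ and $T \otimes I_d$ belong to $GL_{nd}(\KK)$, and a direct computation using the arrangement in (\ref{eqn:arranged}) gives $(S \otimes I_d)\, A^{\{d\}}\, (T \otimes I_d) = (SAT)^{\{d\}}$ (up to a permutation of rows and columns induced by the same tensor reshuffling). This latter matrix inherits an $rd \times sd$ zero block, so
\[
\rank A^{\{d\}} \leq nd - (r+s-n)d < nd
\]
for every $d \geq 1$. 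This proves that $\rank A^{\{d\}} = nd$ forces $\ncrank A = n$.

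For the ``only if'' direction---that $\ncrank A = n$ implies $\rank A^{\{d\}} = nd$ already at $d = n-1$---the plan is to follow the invariant-theoretic strategy of Derksen--Makam. The first step is to observe that $A^{\{d\}}$ is singular for every $d$ precisely when the tuple $(A_1, \ldots, A_m)$ lies in the null cone of the simultaneous left--right action of $SL_n(\KK) \times SL_n(\KK)$ on $(M_n(\KK))^m$: indeed, $\det A^{\{d\}}$ is an $SL_n \times SL_n$-invariant polynomial in the entries of $(A_k)$, and conversely every such invariant arises, up to the standard constructions, from a blow-up determinant. Thus Hilbert's characterization of the null cone reduces the theorem to a degree bound: one must show that the ring of $SL_n \times SL_n$-invariants on $(M_n(\KK))^m$ is generated by invariants of degree at most $n(n-1)$, so that if all blow-up determinants of order $d \leq n-1$ vanish, then every invariant of positive degree vanishes, i.e., the tuple is in the null cone, i.e., $\ncrank A < n$ (contradicting the hypothesis).

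The hard part is the degree bound in invariant theory. The plan for that step is to deploy the Derksen--Makam machinery---Cayley's Omega process together with a capacity/scaling argument that controls how $SL_n \times SL_n$-invariants degenerate---to push the generic degree bound down to the linear quantity $n(n-1)$. Once this bound is in hand, the theorem follows by translating it through the dictionary between invariants and blow-up determinants. I expect the invariant-theoretic degree bound to be the main obstacle; by contrast, the equivalence between full nc-rank and non-containment in the null cone, as well as the ``if'' direction via Fortin--Reutenauer, are essentially formal.
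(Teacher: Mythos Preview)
The paper does not prove this theorem at all; it is quoted as a black box from Derksen--Makam~\cite{DerksenMakam2017}. The only content the paper adds is the short remark immediately following the statement, explaining that when $\KK$ is finite one passes to the infinite field $\KK(s)$ and invokes the inertia lemma to see that $\ncrank A$ is unchanged, so that the Derksen--Makam result (stated over infinite fields) still applies. There is therefore no ``paper's own proof'' to compare against.

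Your proposal is not wrong in spirit: the ``if'' direction via Fortin--Reutenauer and tensoring with $I_d$ is clean and correct, and your sketch of the ``only if'' direction does trace the outline of the Derksen--Makam argument (null cone of the $SL_n \times SL_n$ action, reduction to a degree bound on generators of the invariant ring, and the hard analytic/combinatorial work to push that bound down to $n(n-1)$). But be aware that you are reconstructing the cited reference, not the present paper, and that the degree-bound step is a substantial paper in its own right---your plan names the ingredients but does not indicate how they combine, so as a self-contained proof it remains a sketch. If the goal is to supply what this paper supplies, the correct answer is simply to cite \cite{DerksenMakam2017} and add the finite-field remark.
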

Note that this theorem is stated for infinite field $\KK$. 
In the case of finite field $\KK$,  
each $A_k$ is considered in infinite $\KK(s)$ (say), and
$\ncrank A$ (over $\KK(s)$) does not change 
(by inertia lemma~\cite[Lemma 8.7.3]{Cohn}). 
Then the theorem is applicable.

We show an analogous relation  for $\deg \Det$. 
Let  $c = (c_k)$ is an integer vector. 
The $d$-blow up $A^{\{d\}}[c]$ of  $A[c]$ is defined by
\begin{equation*}
A^{\{d\}}[c] := \sum_{k=1}^m A_k \otimes X_k t^{c_k}.
\end{equation*}

\begin{Lem}\label{lem:degDet=degdet/d} 
\begin{itemize}
\item[{\rm (1)}] 
For $d \geq 1$, it holds
$\displaystyle
\deg \Det A[c] = \frac{1}{d} \deg \Det A^{\{d\}}[c].
$
\item[{\rm (2)}] For $d \geq n-1$, it holds
$\displaystyle
\deg \Det A[c] = \frac{1}{d} \deg \det A^{\{d\}}[c].
$
\end{itemize}
\end{Lem}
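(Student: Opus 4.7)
My plan is to prove (2) directly — Theorem~\ref{thm:DerkenMakam} will do most of the work by upgrading nc-rank fullness to ordinary rank fullness of the blow-up — and then to deduce (1) from (2) by means of an iterated-blow-up identity $(A^{\{d\}})^{\{d'\}} = A^{\{dd'\}}$.

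\textbf{Sketch of (2).} If $\ncrank A < n$, then by Lemma~\ref{lem:nc_nonsingular} the left-hand side equals $-\infty$, and $\rank A^{\{d\}} < nd$ by Theorem~\ref{thm:DerkenMakam}, so both sides are $-\infty$. Assume now $\ncrank A = n$ and let $(P,Q)$ be an optimal solution of (D) for $A[c]$. By Lemma~\ref{lem:optimality}(1) we have $\ncrank(PA[c]Q)^{(0)} = n$, so by Theorem~\ref{thm:DerkenMakam} the blow-up $\bigl((PA[c]Q)^{(0)}\bigr)^{\{d\}}$ has full ordinary rank $nd$ for $d \geq n-1$. Now observe the matrix identity
\[
(P \otimes I_d)\, A^{\{d\}}[c]\, (Q \otimes I_d) = (PA[c]Q)^{\{d\}},
\]
whose entries all have $t$-degree $\leq 0$ (by feasibility of $(P,Q)$) and whose $t^0$-coefficient is precisely $\bigl((PA[c]Q)^{(0)}\bigr)^{\{d\}}$, which is invertible over $\KK(\{x_{k,ij}\})$. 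Hence the commutative determinant of this matrix has $t$-degree exactly $0$, so multiplicativity of $\det$ yields $d\deg\det P + \deg\det A^{\{d\}}[c] + d\deg\det Q = 0$, i.e., $\deg\det A^{\{d\}}[c] = d\deg\Det A[c]$.

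\textbf{Deducing (1).} The upper bound $\deg \Det A^{\{d\}}[c] \leq d\deg \Det A[c]$ follows from the same matrix identity: pulling back a feasible $(P, Q)$ for $A[c]$ to $(P \otimes I_d, Q \otimes I_d)$ for $A^{\{d\}}[c]$ preserves feasibility of (D) while multiplying its objective by $d$. For the reverse inequality, I will exploit the identity
\[
(A^{\{d\}})^{\{d'\}}[c] = A^{\{dd'\}}[c]
\]
(up to renaming of variables), which follows by regrouping $(A^{\{d\}})^{\{d'\}} = \sum_k A_k \otimes \bigl(\sum_{i,j} E_{ij} \otimes Y_{k,i,j}\bigr)$ and noting that the inner $(dd') \times (dd')$ matrix has all independent entries. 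Picking $d' \geq nd - 1$ and applying (2) to $A$ (with blow-up factor $dd'$) and to $A^{\{d\}}$ (with blow-up factor $d'$) yields
\[
d\deg\Det A[c] \;=\; \frac{d}{dd'}\deg\det A^{\{dd'\}}[c] \;=\; \frac{1}{d'}\deg\det (A^{\{d\}})^{\{d'\}}[c] \;=\; \deg\Det A^{\{d\}}[c].
\]

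The main technical nuisance I anticipate is the bookkeeping that the iterated-blow-up identity propagates the weights $t^{c_k}$ uniformly across all newly introduced entries; this is mechanical from the definition of $A^{\{d\}}[c]$, since every entry of each $X_k$ inherits the same $c_k$. To apply (2) to $A^{\{d\}}$ I also need $\ncrank A^{\{d\}} = nd$, which follows from $\ncrank A = n$ by Theorem~\ref{thm:DerkenMakam} applied to the further iterated blow-up.
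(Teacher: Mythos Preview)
Your proof is correct. Part (2) is handled essentially as in the paper: tensor an optimal $(P,Q)$ by $I_d$, observe that the leading term of $(P\otimes I_d)A^{\{d\}}[c](Q\otimes I_d)$ is the $d$-blow-up of $(PA[c]Q)^{(0)}$, and invoke Derksen--Makam to upgrade nc-full-rank to ordinary full rank.

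Where you genuinely diverge is in part (1). The paper proves (1) directly and independently of Derksen--Makam: the inequality $\deg\Det A^{\{d\}}[c]\le d\deg\Det A[c]$ is obtained as you do, but for the reverse it takes an arbitrary feasible pair $(\tilde P,\tilde Q)$ for $A^{\{d\}}[c]$ and, exploiting that distinct $d\times d$ blocks of $A^{\{d\}}$ involve disjoint sets of variables, reduces $(\tilde P,\tilde Q)$ to block-diagonal form $\mathrm{diag}(P_1,\dots,P_n)$, $\mathrm{diag}(Q_1,\dots,Q_n)$ without worsening the objective; one then observes that the optimum is attained with all $P_\mu$ equal and all $Q_\nu$ equal, yielding a feasible pair for $A[c]$. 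Your route instead bootstraps (1) from (2) via the iterated blow-up identity $(A^{\{d\}})^{\{d'\}}=A^{\{dd'\}}$ (which is mechanical, as you note, and carries the weights $c_k$ correctly). This is a legitimate and arguably cleaner reduction: it packages everything into Theorem~\ref{thm:DerkenMakam} and avoids the block-diagonalization bookkeeping. The trade-off is that the paper's argument for (1) is elementary and self-contained---it does not need the deep input of Derksen--Makam---whereas your proof of (1) does. Either way, both arguments are short once (2) is in hand.
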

\begin{proof}
We may assume that $\ncrank A =n$.
(1). 
Suppose that $A^{\{d\}}[c]$ is arranged as in (\ref{eqn:arranged}).
For  any feasible solution $(P, Q)$ for $A[c]$,  $(P \otimes I_d, Q \otimes I_d)$  is a feasible solution for $A^{\{d\}}[c]$ 
such that $\deg \det P \otimes I_d + \deg \det Q \otimes I_d = d (\deg \det P+ \deg \det Q)$. From this, we have $(\geq)$. 
Conversely, choose any feasible solution $(\tilde P, \tilde Q)$ for $A^{\{d\}}[c]$.
By multiplying $\tilde S \in GL_{nd}(\KK(t)^-)$ to the left of $\tilde P$,
we can assume that the $d \times d$ upper-left submatrix
is nonsingular (over $\KK(t)$) and that the $n(d-1) \times d$ lower-left submatrix 
is the zero matrix.
In addition, replace the $d \times (n-1)d$ upper-right submatrix of $\tilde P$
by the zero matrix.
Then the feasibility and the objective value of (D) do not change. 
This can be seen from the fact that the corresponding submatrices of $A^{\{d\}}[c]$ consist of 
variables $x_{k,ij}$ different from $x_{k,11}, \ldots,x_{k,1d}$.
Repeating this argument (to the $d(n-1) \times d(n-1)$ lower-right block),
we can assume that $\tilde P$ is a diagonal matrix with diagonal blocks $P_1,P_2,\ldots,P_n \in GL_d(\KK(t))$. 
Similarly, $\tilde Q$ is a diagonal matrix with diagonal blocks $Q_1,Q_2,\ldots,Q_n \in GL_d(\KK(t))$. The constraints of (D) are given as 
$\deg (P_\mu A_k Q_\nu)_{ij} + c_k \leq 0$ for $i,j,k, \mu, \nu$.
Clearly the optimum for $A^{\{d\}}[c]$ is attained by $P_1 = P_2 = \cdots = P_n$ and $Q_1=Q_2 = \cdots = Q_n$.
Now $(P_{\mu},Q_{\nu})$ is feasible for $A[c]$.
This concludes $(\leq)$.

(2).  Choose an optimal solution $(P,Q)$ for $A[c]$.
By Lemma~\ref{lem:optimality}~(1) for $A[c]$, 
we have $\ncrank (PA[c]Q)^{(0)} =n$.
Now $d \geq n-1$. 
By Theorem~\ref{thm:DerkenMakam} and $((PA[c]Q)^{(0)})^{\{d\}} =((PA[c]Q)^{\{d\}})^{(0)}$, we have $\rank ((PA[c]Q)^{\{d\}})^{(0)} = nd$.   
As seen above, $(P \otimes I_d, Q \otimes I_d)$ 
is an optimal solution for $A^{\{d\}}[c]$ 
with $((P \otimes I_d) A^{\{d\}}[c] (Q \otimes I_d))^{(0)} = ((PA[c]Q)^{\{d\}})^{(0)}$. 
By Lemma~\ref{lem:optimality}~(2) for $A^{\{d\}}[c]$, 
it holds $\deg \Det A^{\{d\}}[c] = \deg \det A^{\{d\}}[c]$.
With (1), we have the claim.
\end{proof}

\subsection{Nc-Newton polytope}
The determinant of $A^{\{d\}}[c]$ is written as
 \begin{equation}\label{eqn:detA^d[c]}
\det A^{\{d\}}[c] = \sum_{z = (z_{k,ij}) \in \ZZ^{md^2}}
a_z  t^{\sum_{k,i,j} c_k z_{k,ij}} \prod_{k,i,j}x_{k,ij}^{z_{k,ij}},
\end{equation}
where $a_z \in \KK$, $k$ ranges over $[m]$, and $i,j$ range over $[d]$.
For an exponent vector $z= (z_{k,ij})_{k \in[m],i,j \in [d]} \in \ZZ^{md^2}$, 
let $\proj_d (z) \in \QQ^m$ be defined by
\begin{equation}
\proj_d (z)_k = \frac{1}{d} \sum_{i,j \in [d]} z_{k,ij} \quad (k \in [m]).
\end{equation}
Then it holds
\begin{equation}
\deg \det A^{\{d\}}[c] = d \max \{ c^{\top} u \mid u \in \proj_d (\Newton A^{\{d\}}) \}. 
\end{equation}
The {\em nc-Newton polytope} $\ncNewton A$ of $A$ is defined by
\begin{equation}
\ncNewton A := \bigcup_{d=1}^\infty \proj_d (\Newton A^{\{d\}}).
\end{equation}
Analogously to (\ref{eqn:Newton}), 
$\deg \Det A[c]$ is the optimal value of a linear optimization over  $\ncNewton A$.
\begin{Thm} For $c = (c_k) \in \ZZ^m$, 
$\deg \Det A[c]$ is equal to the optimal value of
\begin{eqnarray*}
{\rm LP}[c]: \quad {\rm Max.} && c^{\top} u \\
{\rm s.t.} && u \in \ncNewton A.
\end{eqnarray*}
\end{Thm}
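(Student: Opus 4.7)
The plan is to combine the classical Newton polytope formula for the ordinary determinant with the two blow-up identities of Lemma~\ref{lem:degDet=degdet/d}. First I would recall that for any commutative polynomial $p$, the degree of its $t$-specialization equals the linear optimum over $\Newton p$; applied to $\det A^{\{d\}}[c]$ via the expansion in (\ref{eqn:detA^d[c]}), this gives
\[
\deg \det A^{\{d\}}[c] = d \cdot \max\{ c^{\top} u \mid u \in \proj_d(\Newton A^{\{d\}})\}
\]
for every $d \geq 1$.

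To establish the upper bound $\max\{c^{\top}u \mid u \in \ncNewton A\} \leq \deg \Det A[c]$, I would pick an arbitrary $d \geq 1$ and combine the displayed identity with part (1) of Lemma~\ref{lem:degDet=degdet/d} and the generic inequality $\deg \det \leq \deg \Det$:
\[
\max\{c^\top u \mid u \in \proj_d(\Newton A^{\{d\}})\} = \tfrac{1}{d}\deg \det A^{\{d\}}[c] \leq \tfrac{1}{d}\deg \Det A^{\{d\}}[c] = \deg \Det A[c].
\]
Taking the union over $d \geq 1$ yields the desired inequality.

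For the reverse direction I would specialize to $d = n-1$ (or any $d \geq n-1$) and invoke part (2) of Lemma~\ref{lem:degDet=degdet/d}, which upgrades the inequality $\deg \det \leq \deg \Det$ to an equality at this particular blow-up:
\[
\deg \Det A[c] = \tfrac{1}{d}\deg \det A^{\{d\}}[c] = \max\{c^\top u \mid u \in \proj_d(\Newton A^{\{d\}})\} \leq \max\{c^\top u \mid u \in \ncNewton A\}.
\]
Combining the two inequalities finishes the proof.

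I do not expect a serious obstacle here: Lemma~\ref{lem:degDet=degdet/d} already packages the nontrivial input (the Derksen–Makam bound via Theorem~\ref{thm:DerkenMakam} combined with the optimality criterion in Lemma~\ref{lem:optimality}), and the rest is a routine reconciliation of the two sides. The only subtle point is making sure the linearity $\proj_d$ is applied consistently so that the factor $d$ inside $\deg \det A^{\{d\}}[c]$ cancels the factor $1/d$ in Lemma~\ref{lem:degDet=degdet/d}; this is exactly why $\proj_d$ was defined with the $1/d$ normalization, and once it is in place the bookkeeping is immediate.
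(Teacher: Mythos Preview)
Your proposal is correct and follows essentially the same route as the paper: one direction uses $\deg\det \le \deg\Det$ together with Lemma~\ref{lem:degDet=degdet/d}(1) to bound every $\proj_d(\Newton A^{\{d\}})$ from above by $\deg\Det A[c]$, and the other direction uses Lemma~\ref{lem:degDet=degdet/d}(2) at some $d\ge n-1$ to attain equality. The paper's version is just more compressed, but the logical structure is identical.
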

\begin{proof}
Since
$\deg \det A^{\{d\}}[c] \leq \deg \Det A^{\{d\}}[c] = d \deg \Det A[c]$, 
we have $\deg \Det A[c] \geq \max_{d=1,2,\ldots} \max \{ c^{\top} u \mid u \in \proj_d (\Newton A^{\{d\}})\} = 
\max \{ c^{\top} u \mid u \in \ncNewton A \}$.
The equality holds since $\deg \det A^{\{d\}}[c] =(1/d) \deg \Det A[c]$ for $d \geq n-1$ (Lemma~\ref{lem:degDet=degdet/d}~(2)). 
\end{proof}
The proof shows that $\ncNewton A = \proj_d (\Newton A)$ for $d \geq n-1$.
In particular,  $\ncNewton A$ is a rational polytope. 
More strongly, it is an integral polytope.
\begin{Thm}
$\ncNewton A$ is an integral polytope belonging to 
\[
\{ u \in \RR^m \mid u_i \geq 0\ (i=1,2,\ldots,m), \sum_{i=1}^m u_i = n \}.
\]
\end{Thm}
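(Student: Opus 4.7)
The plan is to prove the two assertions separately: the containment in the simplex $\{u \in \RR^m : u \geq 0,\ \sum_k u_k = n\}$ will be immediate from a homogeneity count, while integrality will require the support function interpretation from the preceding theorem. For the containment, I observe that $A^{\{d\}}$ is an $nd \times nd$ matrix whose entries are linear forms in the variables $x_{k,ij}$, so $\det A^{\{d\}}$ is homogeneous of total degree $nd$ in these variables. Hence every exponent vector $z=(z_{k,ij})$ appearing in~(\ref{eqn:detA^d[c]}) satisfies $\sum_{k,i,j} z_{k,ij} = nd$ and $z_{k,ij} \geq 0$, which after applying $\proj_d$ yields $\sum_k \proj_d(z)_k = n$ and $\proj_d(z)_k \geq 0$. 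Both properties pass to convex combinations and to the union over $d$, giving the containment.

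For integrality, I would use the fact, already extracted from the proof of the preceding theorem, that $\ncNewton A$ is a rational polytope equal to $\proj_d(\Newton A^{\{d\}})$ for any $d \geq n-1$, and that its support function $h(c) := \max\{c^\top u : u \in \ncNewton A\}$ coincides with $c \mapsto \deg \Det A[c]$. The crucial input is that $h(c) \in \ZZ$ for every $c \in \ZZ^m$: by formula (D), $h(c) = -\deg \det P^* - \deg \det Q^*$ with $P^*, Q^* \in GL_n(\KK(t))$, and the degree of a nonzero element of $\KK(t)$ is an integer. Let $u^*$ be a vertex of $\ncNewton A$ with normal cone $N(u^*) := \{c \in \RR^m : c^\top u^* \geq c^\top u\ \forall u \in \ncNewton A\}$. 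Although $\ncNewton A$ lies in the hyperplane $\sum_k u_k = n$, the direction orthogonal to this hyperplane belongs to the lineality space of $N(u^*)$, so $N(u^*)$ is full-dimensional in $\RR^m$. I would pick a rational $c_0$ in the relative interior of $N(u^*)$, fix $\delta > 0$ with $B(c_0, \delta) \subseteq N(u^*)$, and set $c := M c_0$ for a positive integer $M$ large enough that both $c \in \ZZ^m$ and $M\delta > 1$. Since $N(u^*)$ is a cone, $B(c, M\delta) \subseteq N(u^*)$, so in particular $c + e_k \in N(u^*)$ for every $k \in [m]$. Consequently $u^*$ is an optimizer of $c^\top u$ and of $(c+e_k)^\top u$ over $\ncNewton A$ simultaneously, yielding $u^*_k = h(c + e_k) - h(c) \in \ZZ$ for all $k$.

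The main obstacle is precisely the construction of this single integer $c$ that certifies $u^*_k \in \ZZ$ at every coordinate at once. Merely knowing that $h$ is integer-valued on $\ZZ^m$ does not, on its own, force the vertices of a rational polytope to be integer; one genuinely needs both the rationality of $\ncNewton A$ and the full-dimensionality of the normal cones at vertices so that the scaling argument can absorb the unit shifts $+e_k$ inside a single closed cone. Apart from this geometric input, nothing beyond $h(c) = \deg \Det A[c]$ and its integrality on $\ZZ^m$ enters the argument.
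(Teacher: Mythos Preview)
Your proof is correct. The containment argument via homogeneity is fine and in fact fills in a detail the paper leaves unstated. For integrality, the paper takes the same starting point as you---that $\ncNewton A$ is a rational polytope and that $\max\{c^\top u : u \in \ncNewton A\} = \deg \Det A[c] \in \ZZ$ for every $c \in \ZZ^m$---but then simply invokes the Edmonds--Giles theorem (a rational polyhedron is integral if and only if every integral linear objective attains an integer optimum). Your normal-cone construction, choosing a rational $c_0$ in the interior of $N(u^*)$ and scaling so that $c$ and all $c+e_k$ remain in the cone, is precisely a direct proof of the relevant direction of Edmonds--Giles specialized to this situation. So the two arguments are the same in substance; yours is self-contained and makes the mechanism explicit, while the paper's is a one-line citation to a standard polyhedral result.
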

\begin{proof}
For any integral vector $c$, the optimal value of 
linear optimization LP$[c]$ over rational polytope $\ncNewton A$
is given by $\deg \Det A[c]$ that is an integer.
By Edmonds-Giles theorem \cite[Corollary 22.1a]{SchrijverLPIP},  $\ncNewton A$ is an integral polytope.
\end{proof}
It is an interesting direction to study polyhedral combinatorics of $\ncNewton A$.  

\subsection{Strongly polynomial time algorithm}
Now $\deg \Det A[c]$ is interpreted as a linear optimization over 
an integral polytope $\ncNewton A$.
This fact enables us to apply Frank-Tardos method 
to improve a weakly polynomial time algorithm to a strongly polynomial one.

\begin{Thm}[\cite{FrankTardos1987}]
For an integer vector $c = (c_k) \in \ZZ^m$ with $C:= \max_{k} |c_k|$ and a positive integer $N$, 
one can compute an integer vector $\bar c = (\bar c_k) \in \ZZ^m$ 
such that
\begin{itemize}
\item[{\rm (i)}] $\bar C:= \max_{k} |\bar c_k| \leq 2^{4m^3}N^{m(m+2)}$, and
\item[{\rm (ii)}] the signs of $\bar c^{\top} v$ and $c^{\top} v$ are the same for 
all integer vectors $v \in \ZZ^m$ with $\sum_{k=1}^m |v_k| \leq N-1$.
\end{itemize}
In the computation, the number of arithmetic operations 
is bounded by a polynomial of $m$ and 
the required bit-length is bounded 
by a polynomial of $m, \log C, \log N$.
\end{Thm}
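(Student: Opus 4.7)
The plan is to follow the Frank--Tardos strategy, whose core engine is simultaneous Diophantine approximation obtained from the LLL lattice basis reduction algorithm. The fundamental LLL-based fact we would invoke is: given a rational vector $\alpha \in \QQ^m$ and a parameter $\epsilon \in (0,1)$, one can compute in polynomial time an integer $q \geq 1$ and a vector $p \in \ZZ^m$ with $q \leq 2^{m(m+1)/4} \epsilon^{-m}$ and $|q \alpha_i - p_i| \leq \epsilon$ for every $i \in [m]$. This is the only external result needed; everything else is bookkeeping.

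First I would normalize by setting $\alpha := c/(2C+1)$, so $\|\alpha\|_\infty < 1/2$, and invoke simultaneous Diophantine approximation with $\epsilon := 1/(2N)$ to produce $\bar c := p$ and $q$ satisfying $q \leq 2^{m(m+1)/4}(2N)^m$ and $|q\alpha_k - \bar c_k| \leq 1/(2N)$ for each $k$. The triangle inequality then gives, for any $v \in \ZZ^m$ with $\sum_k |v_k| \leq N-1$,
\[
\bigl| q \alpha^\top v - \bar c^\top v \bigr| \;\leq\; (N-1)/(2N) \;<\; 1/2.
\]
Since $q\alpha^\top v$ has the same sign as $c^\top v$, the signs of $c^\top v$ and $\bar c^\top v$ agree whenever $|q\alpha^\top v| > 1/2$.

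The main obstacle is handling vectors $v$ for which $c^\top v$ is nonzero but small, so that $|q\alpha^\top v| \leq 1/2$; a single round of approximation may flip their sign. To address this I would iterate: when a round fails to preserve signs on some $v$, the residual $qc - (2C+1)\bar c$ is already of small magnitude, and all problematic $v$'s satisfy a nontrivial linear relation with $c$, so they lie in a proper rational subspace. Restricting attention to this subspace reduces the ambient dimension by at least one, and a fresh call to simultaneous Diophantine approximation is applied to the residual, producing a correction term. The final $\bar c$ is assembled as an integer linear combination of the outputs of the successive rounds with geometrically growing scalar weights (of order $N$ per round), so that coarser corrections never overwhelm finer ones. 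Termination in at most $m$ rounds follows from dimension reduction.

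The bound $\|\bar c\|_\infty \leq 2^{4m^3} N^{m(m+2)}$ emerges by multiplying the per-round denominator estimates $2^{m(m+1)/4}(2N)^m$ across $O(m)$ rounds and absorbing the scalar weights $N^{O(m)}$. The polynomial arithmetic complexity and polynomial bit-length bound are inherited directly from the corresponding guarantees of LLL, since at each round the input vector has entries of bit-length polynomial in $m$, $\log C$, and $\log N$. The most delicate step in a fully rigorous write-up is verifying that the iterative correction preserves the sign condition globally rather than just locally in the current round; this is where the careful choice of scalar weights between rounds is essential, and it is the only nontrivial combinatorial argument in the proof.
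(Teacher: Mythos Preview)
The paper does not prove this theorem; it is quoted as a black-box result from Frank and Tardos~\cite{FrankTardos1987} and used without proof in Section~4.3. So there is no ``paper's own proof'' to compare against.

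That said, your sketch is a fair outline of the original Frank--Tardos argument: normalize, apply LLL-based simultaneous Diophantine approximation, and iterate on the residual restricted to the subspace where sign preservation may fail, with dimension dropping at each round. One point to be careful about if you ever write this out in full: your first-round analysis only shows that signs agree when $|q\alpha^\top v| > 1/2$, but you also need the case $c^\top v = 0 \Rightarrow \bar c^\top v = 0$, and more importantly the iterative assembly of $\bar c$ as a weighted sum of per-round outputs must be shown to preserve the sign condition \emph{simultaneously} for all $v$, not just those handled in the current round. You flag this as ``the most delicate step,'' which is accurate, but as written the proposal does not actually carry it out; the claim that ``coarser corrections never overwhelm finer ones'' is an assertion, not an argument. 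For the purposes of this paper, however, none of this matters: the theorem is imported, not proved.
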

We apply this preprocessing for our cost vector $c = (c_k)$ with $N:= mn+1$, and obtain a  modified cost vector $\bar c$.
\begin{Lem}
Any optimal solution of ${\rm LP}[\bar c]$ is also optimal for ${\rm LP}[c]$. 
\end{Lem}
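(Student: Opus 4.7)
The plan is to combine the integrality of $\ncNewton A$ with the sign-preservation property (ii) of Frank--Tardos. First I would observe that every vertex $v$ of $\ncNewton A$ lies in the simplex $\{u \in \RR^m : u_k \ge 0,\ \sum_k u_k = n\}$: each monomial of $\det A^{\{d\}}[c]$ is a product of $nd$ entries (one per row and column), and each entry is linear in the variables $x_{k,ij}$, so every exponent vector of $\det A^{\{d\}}[c]$ has total degree $nd$ in the $x_{k,ij}$; applying $\proj_d$ divides this by $d$. Consequently, for any two vertices $v,v'$ of $\ncNewton A$, the integer vector $w := v - v'$ satisfies $\sum_k w_k = 0$ with positive and negative parts each summing to at most $n$, and hence $\sum_k |w_k| \le 2n$.

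Next I would plug this into Frank--Tardos with the parameter $N := mn+1$: any such vertex-difference $w$ obeys $\sum_k |w_k| \le 2n \le mn = N-1$ (the degenerate case $m=1$ is handled separately, as $c$ and $\bar c$ are then nonzero scalars of the same sign). Property (ii) then forces $\bar c^{\top} w$ and $c^{\top} w$ to have the same sign, so $\bar c^{\top} v \ge \bar c^{\top} v' \iff c^{\top} v \ge c^{\top} v'$; equivalently, a vertex of $\ncNewton A$ is optimal for $\bar c$ if and only if it is optimal for $c$. Since both LP$[c]$ and LP$[\bar c]$ are linear optimizations over the integral polytope $\ncNewton A$, both attain their optima at vertices, so their vertex-optimal sets coincide.

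To conclude for an arbitrary (possibly non-vertex) optimal solution $u^*$ of LP$[\bar c]$, I would write $u^* = \sum_i \lambda_i w_i$ as a convex combination of vertices lying on the optimal face of LP$[\bar c]$; each $w_i$ is then optimal for $\bar c$ and hence, by the previous step, optimal for $c$, so $c^{\top} u^* = \sum_i \lambda_i c^{\top} w_i$ equals the optimum of LP$[c]$. The main (and really only) subtlety is the bound $\sum_k |w_k| \le 2n$ on vertex differences: this estimate is comfortably below $N-1 = mn$, but it relies crucially on the simplex normalization $\sum_k u_k = n$ extracted from the construction of $\ncNewton A$. Beyond this, the argument is a textbook Frank--Tardos reduction specialized to the polytope $\ncNewton A$.
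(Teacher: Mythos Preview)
Your proof is correct and follows essentially the same Frank--Tardos reduction as the paper: bound the $\ell_1$-norm of the difference of two integral points of $\ncNewton A$ by $N-1=mn$ and invoke property~(ii). The only minor differences are that the paper uses the cruder estimate $\sum_k |u_k^*-u_k|\le mn$ coming from $0\le u_k\le n$ (rather than your sharper $2n$ from $\sum_k u_k=n$), and that you spell out the convex-combination step for non-vertex optima, which the paper sweeps under ``we may assume that $u^*$ is an integer vector.''
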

\begin{proof}
Choose any optimal solution $u^*$ for ${\rm LP}[\bar c]$.
For proving the claim, we may assume that $u^*$ is an integer vector.
Consider any other extreme point $u$ in $\ncNewton A$, which is also an integral vector.
In particular, $u,u^*$ are distinct nonnegative integer vectors with 
$u_k^*, u_k \leq n$.
Therefore, $\sum_k |u_k^*-u_k| \leq mn$.
Since $u^*$ is optimal for ${\rm LP}[\bar c]$, 
we have $(\bar c)^{\top}(u^* - u) \geq 0$.
By the property (ii), we have  $c^{\top}(u^* - u) \geq 0$.
This proves the claim.
\end{proof}
For the cost vector $\bar c$, 
the number of the scaling phases of cost scaling {\bf DegDet} algorithm
is $\log_2 \bar C = O(m^3)$.
Therefore, we can compute $\deg \Det A[\bar c]$ 
in strongly polynomial time.
Note that $\deg \Det A[\bar c]$ is different from $\deg \Det A[c]$
but $\deg \Det A[c]$ is given by $c^{\top} u^*$ 
 for any optimal solution $u^*$ for ${\rm LP}[\bar c]$.

We are going to identify $u^*$ by computing $\deg \Det A[c']$
for such approximate vectors~$c'$. 
\begin{Lem}
Let $c' = (c'_k) \in \ZZ^n$.
An optimal solution of ${\rm LP}[(n+1) c' + e_k]$
is precisely an optimal solution $u$ of ${\rm LP}(c')$ having maximum $u_k$. 
\end{Lem}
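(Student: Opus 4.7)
The plan is to exploit two facts established just above: $\ncNewton A$ is an integral polytope, and every integer point $u$ in it satisfies $u_i \geq 0$ and $\sum_i u_i = n$, so that each coordinate of an integer point lies in $\{0,1,\ldots,n\}$. Together these imply that for any two integer vertices $u,u'$ of $\ncNewton A$ one has $|u_k - u'_k| \leq n$, while $(c')^\top(u-u')$ is an integer, so strict inequality $(c')^\top u > (c')^\top u'$ already forces $(c')^\top(u-u') \geq 1$. This gap-of-one estimate is the only quantitative ingredient the proof really needs.

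Next I would reduce the lemma to comparing integer vertices of $\ncNewton A$, since any LP over an integral polytope attains its optimum at an integer vertex. Write ${\rm LP}[c']^{*}$ for the common optimal value of ${\rm LP}[c']$ and let $F \subseteq \ncNewton A$ be the face of $c'$-optimal points. I would first verify the inclusion ``optima of ${\rm LP}[(n+1)c'+e_k]$ are $c'$-optimal'': if $u^{*}$ is optimal for $(n+1)c'+e_k$ but some vertex $u$ of $F$ is strictly $c'$-better, then
\[
((n+1)c'+e_k)^{\top}(u-u^{*}) = (n+1)(c')^{\top}(u-u^{*}) + (u_k - u^{*}_k) \geq (n+1) - n = 1 > 0,
\]
contradicting the optimality of $u^{*}$ for the perturbed objective. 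Hence $u^{*} \in F$.

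Once $u^{*} \in F$, the perturbed objective restricted to $F$ is of the form $(n+1)\,{\rm LP}[c']^{*} + u_k$, so maximizing it over $F$ is the same as maximizing $u_k$ over $F$. Thus the optima of ${\rm LP}[(n+1)c'+e_k]$ are exactly those $c'$-optimal $u$ maximizing $u_k$. The converse inclusion is immediate: any $c'$-optimal $u$ with maximum $u_k$ agrees with the perturbed optimum $u^{*}$ on both $(c')^{\top}u$ and $u_k$, hence attains the perturbed optimum too.

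There is no real obstacle here; the only subtlety is ensuring the coordinate bound $|u_k - u'_k| \leq n$ on integer points of $\ncNewton A$, which is why the factor $n+1$ (and not a smaller one) is chosen, and why we must appeal to the integral-polytope theorem above rather than working with arbitrary points of $\ncNewton A$. The argument is essentially a Frank--Tardos-style perturbation/tie-breaking estimate, and it is parallel in spirit to the preceding lemma comparing ${\rm LP}[\bar c]$ with ${\rm LP}[c]$.
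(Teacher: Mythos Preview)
The proposal is correct and takes essentially the same approach as the paper: both use integrality of $\ncNewton A$ to get the gap $(c')^\top(u-u^*) \geq 1$ and the coordinate bound $|u_k - u_k^*| \leq n$ to show that no integer point failing to be $c'$-optimal can be optimal for the perturbed objective. Your write-up is more complete in spelling out the reduction to integer vertices and the converse direction, which the paper leaves to the reader with ``From this we have the claim.''
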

\begin{proof}
Choose any integral optimal solution  $u^*$ of ${\rm LP}[c']$
and any other integral solution $u$ not optimal to ${\rm LP}[c']$.
Then $(n+1)(c')^{\top} u^* -  (n+1) (c')^{\top} u + (u_k^* - u_k) \geq n+1 + (u_k^* - u_k) > 0$.
This means that $u$ is not optimal to ${\rm LP}[(n+1) c' + e_k]$.
From this we have the claim.
\end{proof}

Let $c^* = (c^*_k) \in \ZZ^m$ be defined by
\begin{equation}
c^* := (n+1)^m \bar c+ (n+1)^{m-1}e_1 + (n+1)^{m-2} e_2 + \cdots + e_m.
\end{equation}
Then ${\rm LP}[c^*]$ has a unique optimal (integral) solution $u^*$ that is lexicographically maximum optimal solution of ${\rm LP}[\bar c]$.
Also $u^*$ is an optimal solution for ${\rm LP}[(n+1)c^* + e_k]$ for all $k$.
Therefore, the coordinates of $u^*$ are determined by
\begin{equation}
 u_k^*  = 
\deg \Det  A[(n+1)c^* + e_k] - (n+1) \deg \Det  A[c^* ]  \quad (k =1,2,\ldots,m). 
\end{equation}
Since $\log_2 \max_k |c_k^*| = O(m^3+ m \log n)$, we have the following:
\begin{Thm}
Suppose that arithmetic operations on $\KK$ are done in constant time. 
Then $\deg \Det A[c]$ and an integral optimal solution of ${\rm LP}[c]$ are computed in strongly polynomial time.
\end{Thm}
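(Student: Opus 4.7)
The plan is to assemble the ingredients already developed in this section: use the Frank--Tardos preprocessing to replace the input vector $c$ by a ``small'' vector $\bar c$ with equivalent integral optimizers in $\ncNewton A$, then perturb $\bar c$ into $c^*$ so that the unique lexicographically maximum optimizer of LP$[\bar c]$ can be read off from $m+1$ calls to the weakly polynomial algorithm of Theorem~\ref{thm:main}.

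First I would invoke the Frank--Tardos procedure with $N := mn+1$ to compute $\bar c$ from $c$; this step is strongly polynomial in $m$ with bit length polynomial in $m,\log C$, and property~(i) gives $\log_2 \bar C = O(m^3 + m^2 \log(mn))$. By the lemma preceding the theorem, any integral optimizer of LP$[\bar c]$ is simultaneously optimal for LP$[c]$, so it suffices to exhibit such an optimizer.

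Next I would form $c^* := (n+1)^m \bar c + \sum_{k=1}^m (n+1)^{m-k} e_k$, noting that $\log_2 \max_k |c_k^*| = O(m^3 + m \log n)$. By the previous two lemmas, the unique integral optimizer $u^*$ of LP$[c^*]$ is the lexicographically maximum optimal solution of LP$[\bar c]$ and, for each $k \in [m]$, it is also the unique integral optimizer of LP$[(n+1)c^* + e_k]$. Since the logs of $\max_k |c_k^*|$ and $\max_k |(n+1)c_k^* + (e_k)_k|$ are both polynomial in $n,m$, invoking the cost-scaling algorithm of Theorem~\ref{thm:main} a total of $m+1$ times computes $\deg \Det A[c^*]$ and $\deg \Det A[(n+1)c^* + e_k]$ (for $k=1,\ldots,m$) in strongly polynomial time.

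Finally, the identity
\begin{equation*}
u_k^* = \deg \Det A[(n+1)c^* + e_k] - (n+1)\deg \Det A[c^*] \quad (k=1,2,\ldots,m)
\end{equation*}
recovers every coordinate of $u^*$, and $\deg \Det A[c] = c^\top u^*$ delivers the desired value. The main point requiring care — rather than a genuine obstacle — is verifying that every arithmetic operation performed (forming $\bar c$ and $c^*$, the internal updates of the $B_k$ inside \textbf{Deg-Det with Cost-Scaling}, the final inner product $c^\top u^*$) has bit length polynomial in the input size independently of $\log C$. The Frank--Tardos bound on $\bar C$, the truncation step built into the cost-scaling algorithm, and the fact that $u^*$ has nonnegative integer entries summing to $n$ ensure this; beyond such bookkeeping no new idea is required.
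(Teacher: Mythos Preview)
Your proposal is correct and follows essentially the same approach as the paper: Frank--Tardos preprocessing with $N=mn+1$, the perturbation $c^* = (n+1)^m\bar c + \sum_k (n+1)^{m-k}e_k$, the identity $u_k^* = \deg\Det A[(n+1)c^*+e_k] - (n+1)\deg\Det A[c^*]$, and finally $\deg\Det A[c] = c^\top u^*$. The only (harmless) slip is the bound you quote for $\log_2\max_k|c_k^*|$: since $\log_2\bar C = O(m^3 + m^2\log(mn))$ already, the bound for $c^*$ should carry the same $m^2\log(mn)$ term rather than drop to $O(m^3 + m\log n)$, but this does not affect the strongly polynomial conclusion.
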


\subsection{Polynomial time algorithm for $\KK= \QQ$}
Finally, we consider the case where $\KK = \QQ$.  
In this case, we have to consider 
the bit-complexity for arithmetic operations on $\QQ$.
We avoid this by the method in Iwata and Kobayashi~\cite[Theorem 11.3]{IwataKobayashi2017}.
This method reduces computation over $\QQ$ to that over 
$GF(p)$ for a polynomial number of several (small) primes $p$. 

Suppose that each matrix $A_k$  consists of integer entries 
whose absolute values are at most $D > 0$. 
Then the size of input $A$ is $O(m n^2 \log_2 D)$.
Consider $A_k$ modulo prime $p$,
which is a matrix over $GF(p)$ and is denoted by $(A_k)_{(p)}$.
Consider $A_{(p)}[c] := \sum_{k} (A_k)_{(p)} x_kt^{c_k}$ and  $\deg \Det A_{(p)}[c]$.
By $\deg \Det A_{(p)}[c] = \frac{1}{d}\deg \det A_{(p)}^{\{d\}}[c] = 
\frac{1}{d}\deg ( \det A^{\{d\}}[c]  \mod p)$ for $d \geq n-1$ (Lemma~\ref{lem:degDet=degdet/d}~(2)), 
we have
\begin{equation}
\deg \Det A_{(p)}[c]  \leq \deg \Det A[c]. 
\end{equation}
The equality holds precisely when 
$
a_z \not \equiv 0 \mod p
$ holds in (\ref{eqn:detA^d[c]}) for an exponent vector $z \in \ZZ^{md^{2}}$ 
whose corresponding term in $\det A^{\{d\}}[c]$ 
has the maximum weight relative to $c = (c_k)$.
\begin{Lem}\label{lem:bound}
The absolute value of each coefficient $a_z$ in the expansion $(\ref{eqn:detA^d[c]})$
is bounded by $L := (nd)^{2nd} D^{nd}$. 
\end{Lem}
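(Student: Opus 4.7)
The plan is to bound $a_z$ directly via the Leibniz expansion of $\det A^{\{d\}}[c]$. In the Kronecker arrangement (\ref{eqn:arranged}), each of the $nd$ rows and columns is indexed by a pair, and each entry of $A^{\{d\}}[c]$ has the form
\[
\bigl(A^{\{d\}}[c]\bigr)_{r,s} \;=\; \sum_{k=1}^m (A_k)_{\pi(r,s)}\, x_{k,\pi'(r,s)}\, t^{c_k},
\]
where $\pi,\pi'$ are the index maps induced by the Kronecker product. In particular, each entry is a sum of exactly $m$ distinct monomials in the $x$'s and $t$, each with an integer coefficient of absolute value at most $D$.

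Applying the Leibniz formula and then fully distributing the products yields the expansion
\[
\det A^{\{d\}}[c] \;=\; \sum_{(\sigma,f)} \mathrm{sgn}(\sigma) \prod_{r=1}^{nd} (A_{f(r)})_{\pi(r,\sigma(r))}\, x_{f(r),\pi'(r,\sigma(r))}\, t^{c_{f(r)}},
\]
indexed by permutations $\sigma \in S_{nd}$ together with functions $f:[nd]\to[m]$ that choose one of the $m$ summands in each entry. Every single contribution is an integer of magnitude at most $D^{nd}$, since it is a product of $nd$ matrix entries each bounded by $D$.

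The key step is then a combinatorial count of how many $(\sigma,f)$ pairs can contribute to a fixed target monomial $m_z = t^{\sum_{k,i,j} c_k z_{k,ij}} \prod_{k,i,j} x_{k,ij}^{z_{k,ij}}$. For such a pair to produce $m_z$, the function $f$ must have exactly $n_k := \sum_{i,j} z_{k,ij}$ preimages of $k$ for every $k$; hence the number of admissible $f$ is the multinomial coefficient $\binom{nd}{n_1,\ldots,n_m} \leq (nd)!$. For each such $f$, the number of permutations $\sigma$ producing precisely the prescribed profile of variables is crudely bounded by $|S_{nd}| = (nd)!$. Multiplying gives at most $((nd)!)^2 \le (nd)^{2nd}$ contributing pairs, and summing absolute values yields $|a_z| \le (nd)^{2nd} D^{nd} = L$.

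There is no deep obstacle here, just the need to avoid the naive estimate $(nd)!\cdot m^{nd}\cdot D^{nd}$, which still depends on $m$. The main point of care is the observation that once the target exponent vector $z$ is fixed, the multiplicities of the values of $f$ are fixed, so the factor $m^{nd}$ is replaced by a multinomial coefficient bounded by $(nd)!$; this is exactly what absorbs the $m$-dependence into the $(nd)^{2nd}$ factor.
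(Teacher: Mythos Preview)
Your proof is correct and follows essentially the same route as the paper: both expand the determinant, observe that for a fixed target monomial the naive $m^{nd}$ count of choices collapses to a multinomial coefficient bounded by $(nd)!$, and combine this with the crude bound $(nd)!\,D^{nd}$ on each individual contribution. The only cosmetic difference is that the paper first invokes multilinearity of the determinant in the $md^2$ relabeled variables (so its multinomial is $\binom{nd}{z_1,\ldots,z_{md^2}}$) and then bounds each resulting $nd\times nd$ determinant, whereas you write the Leibniz sum directly and count $(\sigma,f)$ pairs; the resulting bound $(nd)^{2nd}D^{nd}$ is identical.
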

\begin{proof}
Rename variables $x_{k,ij}$ as $y_k$ $(k=1,2,\ldots,md^2)$.
Accordingly, $A^{\{d\}}$ is rewritten as
\[
A^{\{d\}} = \sum_{k=1}^{md^2} B_k y_k
\]
for $nd \times nd$ matrices $B_k$ over $\QQ$ (in which the absolute value of entries of $B_k$ is bounded by $D$). 
By multilinearity of determinant, we have
\[
\det A^{\{d\}} = \sum_{k_1,k_2,\ldots,k_{nd} \in [md^2]}  \pm \det B[k_1,k_2,\ldots,k_{nd}] y_{k_1}y_{k_2}\cdots  y_{k_{nd}},
\]
where $B[k_1,k_2,\ldots,k_{nd}]$ is the $nd \times nd$ matrix 
with $j$-th row equal to $j$-th of $B_{k_j}$ for $j=1,2,\ldots,nd$. 
For a nonnegative vector $z = (z_k) \in \ZZ^{md^2}$ with $\sum_k z_k = nd$, 
the coefficient $a_z$ of 
$y^{z_1}_{1}y^{z_2}_{2} \cdots y^{z_{md^{2}}}_{md^2}$ in $\det A^{\{d\}}$ is given by
\[
a_z = \sum_{k_1,k_2,\ldots,k_{nd}} \pm \det B[k_1,k_2,\ldots,k_{nd}],
\]
where the sum is taken over all $k_1,k_2,\ldots,k_{nd} \in [md^2]$ such that 
$i \in  [md^2]$ appears $z_i$ times. Hence its absolute value is bounded as
\begin{equation}
|a_z| \leq \frac{(nd) !}{{z_1 !}{z_2 !} \cdots {z_{nd^2}!}}(nd)^{nd}D^{nd} 
\leq (nd)^{2nd} D^{nd}.
\end{equation}
\end{proof}
Take $n-1$ as $d$.
Let $\ell  := \lceil \log_2 L \rceil = O(n^2 \log_2 n + n^2 \log_2 D )$. 
Pick $\ell$ smallest primes $p_1,p_2,\ldots,p_\ell$. 
By the prime number theorem $p_\ell = O(\ell \log \ell)$, 
this can be done (by the sieve of Eratosthenes) in polynomial time.
Now we have
\begin{equation}\label{eqn:Lleq}
L < p_1p_2 \cdots p_\ell.
\end{equation}

Consider an optimal solution $u$ of LP$[c]$.
Then, for $d = n-1$, this $u$ is the projection of some $z \in \ZZ^{md^2}$ 
in which $a_z \neq 0$ in the expansion (\ref{eqn:detA^d[c]}).
With Lemma~\ref{lem:bound} and (\ref{eqn:Lleq}), it holds
\begin{equation}
a_z \not \equiv 0 \mod p_1p_2 \cdots p_\ell.
\end{equation}
Then $a_z \not \equiv 0 \mod p_i$ for some $p_i$.
Therefore, it holds $\deg \Det A_{(p_i)}[c] = \deg \Det A[c]$, and 
we can determine $\deg \Det A[c]$ by 
\[
\deg \Det A[c] = \max_{i=1,2,\ldots,\ell} \deg \Det A_{(p_i)}[c].
\]
This can be done in $\ell$ times computation of $\deg \Det A_{(p_i)}[c]$.
Also $z$ also appears in $\ncNewton (A_{(p_i)})$.
\begin{Thm}
Suppose that $\KK = \QQ$.
Then $\deg \Det A[c]$ and an optimal solution of ${\rm LP}[c]$ can be computed 
in time polynomial of $n, m,  \log C, \log D$.
\end{Thm}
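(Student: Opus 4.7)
The plan is to reduce the computation over $\QQ$ to a polynomial number of degree computations over small finite fields, combining the coefficient bound of Lemma~\ref{lem:bound} with the strongly polynomial algorithm of the previous subsection. All of the ingredients are already assembled in the paragraphs preceding the theorem; what remains is to package them into a bit-complexity argument.

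First I would apply Lemma~\ref{lem:bound} with $d = n-1$ to bound the coefficients in the expansion (\ref{eqn:detA^d[c]}) by $L := (nd)^{2nd}D^{nd}$, so that $\log_2 L = O(n^2 \log n + n^2 \log D)$. Let $\ell := \lceil \log_2 L \rceil$ and let $p_1,\ldots,p_\ell$ be the $\ell$ smallest primes; by the prime number theorem $p_\ell = O(\ell \log \ell)$, so these primes have bit-length $O(\log n + \log \log D)$ and are computed by the sieve of Eratosthenes in polynomial time. By construction $L < p_1 p_2 \cdots p_\ell$.

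The core is a CRT-style survival argument. For any prime $p$, Lemma~\ref{lem:degDet=degdet/d}(2) together with the obvious inequality $\deg(\det A^{\{d\}}[c] \bmod p) \le \deg \det A^{\{d\}}[c]$ yields $\deg \Det A_{(p)}[c] \leq \deg \Det A[c]$. Conversely, pick an integer exponent vector $z$ that achieves the maximum weight $\sum_{k,i,j} c_k z_{k,ij} = d \deg \Det A[c]$ in (\ref{eqn:detA^d[c]}); its coefficient $a_z$ is a nonzero integer with $|a_z| \leq L < \prod_i p_i$, so at least one prime $p_i$ satisfies $a_z \not\equiv 0 \pmod{p_i}$. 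That term survives modulo $p_i$ and witnesses equality, hence
\begin{equation*}
\deg \Det A[c] = \max_{i \in [\ell]} \deg \Det A_{(p_i)}[c].
\end{equation*}
Each $\deg \Det A_{(p_i)}[c]$ is computed by running the strongly polynomial algorithm of the previous subsection over $\KK = GF(p_i)$; since arithmetic in $GF(p_i)$ takes time polynomial in $\log p_i$, a single call runs in time polynomial in $n, m, \log C, \log D$, and the maximum over $\ell$ primes preserves this bound.

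To also recover an optimal solution of ${\rm LP}[c]$, I would reuse the lexicographic scheme from the strongly polynomial construction: apply Frank--Tardos to obtain $\bar c$, form $c^* := (n+1)^m \bar c + \sum_{k=1}^m (n+1)^{m-k}e_k$, and read off $u^*_k$ from $\deg \Det A[(n+1)c^* + e_k] - (n+1)\deg \Det A[c^*]$, invoking the finite-field reduction for each degree computation. Since $\log_2 \|c^*\|_\infty = O(m^3 + m \log n)$, the perturbed cost vectors remain of polynomial bit-length. The main obstacle is precisely the CRT-style step: one must argue that a \emph{single} optimal exponent vector $z$ survives modulo at least one of the chosen primes, rather than different optima surviving for different primes. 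The pigeonhole inequality $|a_z| \leq L < \prod_i p_i$ is exactly what makes this work, and it is where the blow-up bound of Lemma~\ref{lem:bound} is essential.
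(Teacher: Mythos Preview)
Your proposal is correct and follows essentially the same approach as the paper: bound the coefficients via Lemma~\ref{lem:bound} with $d=n-1$, pick $\ell=\lceil\log_2 L\rceil$ smallest primes so that $L<\prod_i p_i$, use the CRT-style survival argument to conclude $\deg\Det A[c]=\max_i\deg\Det A_{(p_i)}[c]$, and carry out each finite-field computation with the algorithm from the preceding subsections. The only cosmetic difference is in recovering the optimal $u^*$: the paper observes directly that $\ncNewton A_{(p_i)}\subseteq\ncNewton A$ (since nonzero coefficients modulo $p$ are nonzero over~$\QQ$), so an optimal vertex output by the strongly polynomial algorithm over $GF(p_i)$ for the maximizing prime is already optimal for ${\rm LP}[c]$ over~$\QQ$; your route of rerunning the lexicographic scheme with the finite-field reduction at each degree evaluation also works and yields the same complexity.
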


\section{Algebraic combinatorial optimization for 
$2 \times 2$-partitioned matrix}
\label{sec:2x2}
In this section, we consider 
an algebraic combinatorial optimization problem for a  $2 \times 2$-partitioned matrix (\ref{eqn:2x2}).
As an application of the cost-scaling {\bf Deg-Det} algorithm, 
we extend the combinatorial rank computation in~\cite{HiraiIwamasaIPCO} 
 to the deg-det computation.  

We first present the rank formula due to Iwata and Murota~\cite{IwataMurota95} 
in a suitable form for us.
\begin{Thm}[\cite{IwataMurota95}]\label{thm:2x2}
$\rank A$ for a matrix $A$ of form {\rm (\ref{eqn:2x2})} is equal to the optimal value of the following problem:
 \begin{eqnarray*}
	\mbox{\rm (R$_{2 \times 2}$)}\quad {\rm Min.} && 4n - r - s  \\
		{\rm s.t.} && \mbox{$S A T$ has an $r \times s$ zero submatrix}, \\
		&& S, T \in GL_n(\KK),
	\end{eqnarray*}
	where $S,T$ are written as
	\begin{equation}\label{eqn:trans}
S = 
\left(
\begin{array}{cccc}
S_1 & O  & \cdots & O  \\
O & S_2  & \ddots & \vdots \\
\vdots & \ddots & \ddots &   O\\
O & \cdots & O& S_{n}
\end{array}
\right), \quad 
T =
\left(
\begin{array}{cccc}
T_1 & O  & \cdots & O  \\
O & T_2 & \ddots & \vdots \\
\vdots & \ddots & \ddots &   O\\
O & \cdots & O& T_{n}
\end{array}\right)
\end{equation}
for $S_i,T_i \in GL_2(\KK)$ $(i \in [n])$.
\end{Thm}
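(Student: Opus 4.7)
The plan is to handle the two inequalities separately. The easy direction, $\rank A \le 4n - r - s$ for any feasible $(S, T)$ in (R$_{2\times 2}$), is immediate: block-diagonal matrices with invertible $2 \times 2$ diagonal blocks lie in $GL_{2n}(\KK)$, so $\rank A = \rank SAT$, and any $2n \times 2n$ matrix with an $r \times s$ zero submatrix has rank at most $(2n - r) + (2n - s) = 4n - r - s$ (the top $r$ rows span at most a $(2n-s)$-dimensional subspace, while the remaining $2n-r$ rows span at most a $(2n-r)$-dimensional subspace). Taking the minimum over feasible $(S, T)$ yields $\rank A \le \min \mathrm{(R_{2\times 2})}$.

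For the reverse inequality, the plan is to invoke the original duality theorem of Iwata and Murota~\cite{IwataMurota95}, which supplies a combinatorial dual certificate attaining $\rank A$---concretely, subspaces $U_i, V_j \subseteq \KK^2$ for $i, j \in [n]$ satisfying a mutual annihilation condition over a set of index pairs. From such a certificate I would construct block-diagonal $(S, T)$ of the form (\ref{eqn:trans}) by choosing $S_i \in GL_2(\KK)$ to be a basis change placing a complement of $U_i$ at the top of the $i$-th row block, and symmetrically for $T_j$ via $V_j$. The annihilation condition then forces the corresponding entries of $SAT$ to vanish, giving an $r \times s$ zero submatrix with $r = \sum_i (2 - \dim U_i)$ and $s = \sum_j (2 - \dim V_j)$, and the Iwata--Murota min--max identity matches $4n - r - s$ to $\rank A$.

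The main obstacle is translating Iwata and Murota's original matroid-theoretic formulation into the zero-submatrix language used here, and, in particular, ensuring that the zero submatrix produced from the subspace certificate is a contiguous $r \times s$ block (absorbing any row/column reindexing into the choice of $S$ and $T$, which is legitimate because (R$_{2\times 2}$) does not pin down the location of the zero submatrix). An alternative route, conceptually cleaner but invoking more machinery, is via nc-rank: Theorem~\ref{thm:FR} yields $\ncrank A = \min\{4n - r - s\}$ over \emph{unrestricted} $(S, T) \in GL_{2n}(\KK) \times GL_{2n}(\KK)$, and Iwata and Murota's theorem also implies $\rank A = \ncrank A$ for $2 \times 2$-partitioned matrices; the remaining task would then be to show the minimum is attained within the block-diagonal subgroup, which reduces to an exchange argument exploiting that each variable $x_{ij}$ appears in exactly one $2\times 2$ block, so row and column operations mixing different block rows or columns can be decoupled into block-wise ones without shrinking the zero pattern.
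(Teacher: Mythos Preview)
The paper does not supply its own proof of this theorem: it is quoted as a result of Iwata and Murota~\cite{IwataMurota95}, merely restated ``in a suitable form for us.'' There is therefore no paper proof to compare your proposal against.

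That said, your plan is the natural way to justify the reformulation, and the easy direction and the subspace-to-block-diagonal translation in your second paragraph are both sound. One caution about your alternative route via nc-rank: the chain of inequalities
\[
\rank A \;\leq\; \min \mathrm{(R_{2\times 2})} \;\geq\; \min \mathrm{(R)} \;=\; \ncrank A
\]
points the wrong way in the middle, so knowing $\rank A = \ncrank A$ does not by itself squeeze $\min \mathrm{(R_{2\times 2})}$ down to $\rank A$; you really do need the exchange argument you mention, and that argument is not as automatic as your sketch suggests (mixing rows from different $2\times 2$ row-blocks can enlarge a zero pattern in ways that a single block-diagonal transformation cannot reproduce, so one must argue more carefully, essentially reproducing the subspace certificate). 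The paper's own logic runs in the opposite direction: it takes Theorem~\ref{thm:2x2} as input and \emph{deduces} $\rank A = \ncrank A$ from it. Your primary route through the original Iwata--Murota dual certificate is the right one.
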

Namely, (R$_{2\times 2}$) is a sharpening of (R) for $2\times 2$-partitioned matrices, 
where $S,T$ are taken as a form of (\ref{eqn:trans}).
This was obtained earlier than   
the Fortin-Reutenauer formula (Theorem~\ref{thm:FR}).
From the view, 
this theorem implies $\rank A = \ncrank A$ for a $2\times2$-partitioned matrix $A$.
Therefore, by Theorem~\ref{thm:ncrank}, the rank of $A$ can be computed in a polynomial time. 

Hirai and Iwamasa~\cite{HiraiIwamasaIPCO} showed that  
the rank computation of a $2\times2$-partitioned matrix can be formulated as the cardinality maximization problem of certain algebraically constraint $2$-matchings  in a bipartite graph. Based on this formulation and partly inspired 
by the {\em Wong sequence} method~\cite{IQS15a,IQS15b},
they gave a combinatorial augmenting-path type $O(n^4)$-time algorithm 
to obtain a maximum matching and an optimal solution $S,T$ in (R$_{2\times 2}$).

Here, for simplicity of description,  we consider a weaker version 
of this $2$-matching concept.
Let $G_A = ([n] \sqcup [n], E)$ be a bipartite graph defined by $ij \in E \Leftrightarrow A_{ij} \neq O$. 
A multiset $M$ of edges in $E$ is called a {\em $2$-matching} if each node in $G_A$ is incident to at most two edges in $M$.
For a (multi)set $F$ of edges in $E$, 
let $A_F$ denote the matrix obtained from $A$ by replacing 
$A_{ij}$ $(ij \not \in F)$ by the zero matrix. 
Observe that a nonzero monomial $p$ of
a subdeterminant of $A$
gives rise to a $2$-matching $M$ by: 
An edge $ij \in E$ belongs to $M$ with multiplicity $m \in \{1,2\}$ 
if $x_{ij}^{m}$ appears in $p$. Indeed, 
by the $2 \times 2$-partition structure of $A$, 
index $i$ appears at most twice in $p$.
The monomial $p$ also appears in a subdeterminant of $A_M$. 
Motivated by this observation, 
a $2$-matching $M$ is called {\em $A$-consistent} if it satisfies
\[
|M| = \rank (A_M),
\]
where the cardinality $|M|$ is considered as a multiset.
\begin{Prop}[\cite{HiraiIwamasaIPCO}]
$\rank A$ is equal to the maximum cardinality of an $A$-consistent $2$-matching.
\end{Prop}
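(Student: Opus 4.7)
The plan is to establish the equality by proving both inequalities, combining the classical monomial expansion of a determinant with the specialization principle that substituting a value for a symbolic variable cannot raise the rank of a polynomial matrix.

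The easy direction is $\rank A \geq |M|$ for every $A$-consistent $M$. Observe that $A_M$ is precisely the specialization of $A$ under $x_{ij} \mapsto 0$ for $ij \notin M$, and for any matrix of polynomials the rank cannot increase under such a substitution: any nonvanishing $s \times s$ minor of the specialization pulls back to a nonvanishing $s \times s$ polynomial minor of the original. Therefore $|M| = \rank A_M \leq \rank A$.

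For the reverse inequality, I would construct an $A$-consistent $2$-matching of size $r := \rank A$ from a nonvanishing minor. Pick any $r \times r$ submatrix of $A$ with nonzero determinant in $\KK[x_{ij}]$ and choose any monomial $p = \prod_{ij} x_{ij}^{m_{ij}}$ appearing with nonzero coefficient; let $M$ be the multiset assigning multiplicity $m_{ij}$ to the edge $ij$. Then (i) $|M| = r$, since in the permutation expansion of an $r \times r$ minor every term is a product of $r$ entries each of the form (scalar)$\,\cdot\, x_{ij}$, so the total degree of $p$ equals $r$; and (ii) $M$ is a $2$-matching, because rows of $A$ are indexed by pairs $(i,a) \in [n] \times [2]$, so each $i \in [n]$ is the block-index of at most two rows of the chosen submatrix, forcing $\sum_{j} m_{ij} \leq 2$, with the column-side argument symmetric.

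The key remaining step (iii) is $\rank A_M = r$. The coefficient of the monomial $p$ in the chosen minor is unchanged when one passes from $A$ to $A_M$: each permutation expansion term records in its monomial exactly which blocks contributed its entries, so only permutation terms using entries from blocks $ij$ with $m_{ij} > 0$ (equivalently $ij \in M$) can contribute to the coefficient of $p$, and these blocks are identical in $A$ and $A_M$. Hence the same minor of $A_M$ is nonzero, giving $\rank A_M \geq r$; combined with $\rank A_M \leq \rank A = r$ (specialization again) this yields $\rank A_M = r = |M|$, so $M$ is $A$-consistent and achieves the bound. The main bookkeeping obstacle is tracking the multiplicities carefully in step (ii); the $(i,a)$-indexing of rows and columns induced by the $2 \times 2$ block structure is the natural device that makes this verification transparent.
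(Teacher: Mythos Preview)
Your proof is correct and follows essentially the same approach as the paper. The paper sketches exactly this argument in the paragraph preceding the Proposition (a nonzero monomial of a subdeterminant yields a $2$-matching via the $2\times2$ block structure, and the same monomial survives in the corresponding subdeterminant of $A_M$), and then points to Lemma~\ref{lem:degdetA=c(M)} for the analogous reasoning in the weighted setting; your write-up simply makes the bookkeeping for $|M|=r$, the $2$-matching property, and $\rank A_M=r$ explicit.
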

We see Lemma~\ref{lem:degdetA=c(M)} below for an essence of the proof.
In~\cite{HiraiIwamasaIPCO}, 
a stronger notion of a ($2$-)matching is used, 
and it is shown  that $|M| = \rank (A_M)$ is checked in $O(n^2)$-time 
(by assigning a {\em valid labeling (VL)}). 
An $A$-consistent $2$-matching is called 
{\em maximum} if it has the maximum cardinality 
over all $A$-consistent $2$-matchings.

\begin{Thm}[\cite{HiraiIwamasaIPCO}]\label{thm:HiraiIwamasa}
A maximum $A$-consistent $2$-matching and an optimal solution in {\rm (R$_{2\times 2}$)} can be computed in $O(n^4)$-time.
\end{Thm}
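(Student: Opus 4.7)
The plan is to design an augmenting-path style algorithm that maintains a current $A$-consistent $2$-matching $M$ together with a combinatorial certificate — a valid labeling (VL) — witnessing $|M| = \rank A_M$. Starting from $M = \emptyset$ with the trivial labeling, each phase either enlarges $M$ by one edge (updating the VL to preserve $A$-consistency) or returns a block-diagonal pair $(S,T)$ of the form (\ref{eqn:trans}) such that $SAT$ has an $r \times s$ zero submatrix with $4n - r - s = |M|$, which certifies optimality via Theorem~\ref{thm:2x2}. Because $|M| \leq 2n$, at most $O(n)$ augmenting phases occur, so the goal is to execute each phase in $O(n^3)$ time.

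For the augmentation step, I would work in an auxiliary directed bipartite graph on the row/column index sets of the block structure, with each node refined according to the $2\times 2$ block at the corresponding position. Arcs encode two kinds of transitions: an alternating move that swaps an edge of $M$ for an unused edge while preserving the rank identity, and an extension arc from an unsaturated block to a neighboring block carrying a nontrivial ``residual'' direction in $\KK^{2}$. This construction is a bipartite alternating-path search enhanced in the spirit of the Wong sequence of~\cite{IQS15a,IQS15b}: one incrementally enlarges a pair of subspaces of $\KK^{2n}$ spanned by the rows and columns of $A_M$ until the procedure stabilizes. The central claim I would need to prove is the min--max dichotomy: either there is a directed path from an exposed left node to an exposed right node, along which flipping edges and updating the VL (with a carefully chosen $2\times 2$ pivot at each visited block) yields a new $A$-consistent $2$-matching of size $|M|+1$, or the search terminates and the stabilized subspaces directly determine matrices $S_i, T_i \in GL_2(\KK)$ realizing the zero submatrix required in (R$_{2\times 2}$).

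The main obstacle is the augmentation itself: unlike classical bipartite matching, flipping along the path must preserve the algebraic independence of the rows and columns of $A_M$, and this independence is fragile because each edge contributes a full $2\times 2$ block rather than a single scalar. The correctness argument therefore has to exhibit, at every block along the augmenting path, an explicit choice of one-dimensional subspace inside $\KK^2$ (recorded in the VL) whose addition to the current span is guaranteed to be independent; this is where the $2\times 2$ partition plays an essential role and why the generic Edmonds/Ivanyos-style argument is not directly applicable. Once this is established, the complexity follows: the auxiliary graph has $O(n)$ nodes, the subspace bookkeeping per arc amounts to rank tests on $O(n)$ vectors in $\KK^{2n}$ and costs $O(n^2)$ operations, so BFS gives $O(n^3)$ per phase; summed over the $O(n)$ phases this yields the claimed $O(n^4)$ bound, and the optimal $(S,T)$ is read off node-by-node from the final VL in the terminal phase.
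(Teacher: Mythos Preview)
The paper does not prove this theorem at all: it is quoted from \cite{HiraiIwamasaIPCO} and used as a black box (the subroutine solving (R$_{2\times2}$) inside the cost-scaling algorithm). There is therefore no proof in the present paper to compare your proposal against.

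Taken on its own, your outline names the right ingredients --- an augmenting search maintaining a valid labeling, a Wong-sequence-style subspace propagation, and a min--max termination that reads off block-diagonal $(S,T)$ --- and this is indeed the shape of the argument in the cited work. But as written it is a plan, not a proof: the central step, that at every block on an augmenting path one can always select a one-dimensional direction in $\KK^2$ that is genuinely independent of the current span, is asserted rather than established, and this is exactly the delicate point that distinguishes the $2\times 2$ case from ordinary bipartite matching. Your complexity count also does not close. With $O(n)$ nodes on each side the auxiliary graph may have $\Theta(n^2)$ arcs; if, as you say, each arc costs $O(n^2)$ for rank tests in $\KK^{2n}$, a BFS phase is $O(n^4)$, not $O(n^3)$, and the total becomes $O(n^5)$. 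To hit $O(n^4)$ you would need either an $O(n)$ per-arc cost (e.g., by maintaining the span incrementally so that each independence test is a single vector insertion) or a bound of $O(n)$ on the number of arcs actually traversed per phase.
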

Now we consider a weighted version. 
Consider an integer weight $c = (c_{ij})$, and 
\begin{equation}\label{eqn:2x2_weighted} 
A[c] = \left(
\begin{array}{ccccc}
A_{11}x_{11}t^{c_{11}} & A_{12} x_{12}t^{c_{12}} &\cdots & A_{1 n} x_{1 n}t^{c_{1n}}\\
A_{21}x_{21}t^{c_{21}}  & A_{22} x_{22}t^{c_{22}}&\cdots & A_{2 n} x_{2 n}t^{c_{2n}} \\
\vdots & \vdots & \ddots & \vdots \\
A_{n1} x_{n1}t^{c_{n1}}&A_{n2} x_{n2}t^{c_{n2}}&\cdots & A_{n n} x_{n n}t^{c_{nn}}
\end{array}\right).
\end{equation}
The computation of $\deg \det A[c]$ corresponds to the
maximum-weight $A$-consistent $2$-matching problem.
We suppose that $\rank A = 2n$, and $\deg \det A[c] > -\infty$.
An $A$-consistent matching $M$ (defined for (\ref{eqn:2x2})) is called {\em perfect} 
if $|M| = 2n (= \rank A)$; necessarily such an $M$ is the disjoint union of cycles.
The weight $c(M)$ is defined  by
\begin{equation*}
c(M) = \sum_{ij \in M} c_{ij}.  
\end{equation*}   
Note that $c_{ij}$ contributes to $c(M)$ twice if the multiplicity of $ij$ in $M$ is $2$.
\begin{Lem}\label{lem:degdetA=c(M)}
$\deg \det A[c]$ is equal to the maximum weight of a perfect $A$-consistent $2$-matching.
\end{Lem}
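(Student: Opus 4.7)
The plan is to prove the two inequalities by a direct expansion of $\det A[c]$ together with the correspondence between Leibniz-monomials and perfect $2$-matchings.

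First I would handle the direction $\deg \det A[c] \leq \max_M c(M)$. Expanding $\det A[c]$ via the Leibniz formula over permutations of $[2n]$, the $2\times 2$-block partition of $A[c]$ forces every nonzero term to contribute exactly two entries from block row $i$, which either both fall into one block column (giving a multiplicity-$2$ edge) or into two distinct block columns (each of multiplicity $1$). Each permutation $\sigma$ therefore canonically defines a perfect $2$-matching $M_\sigma$ of $G_A$. Since the $x_{ij}$'s commute and distinct $M$'s produce distinct $x$-monomials, grouping Leibniz terms by $M_\sigma$ yields the cancellation-free expansion
\begin{equation*}
  \det A[c] = \sum_M C_M\, x^{m(M)}\, t^{c(M)}, \qquad C_M \in \KK,
\end{equation*}
where the sum runs over perfect $2$-matchings $M$ of $G_A$. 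Hence $\deg \det A[c] = \max\{c(M) : C_M \neq 0\}$. If $C_M \neq 0$, then $x^{m(M)}$ is a nonzero monomial of $\det A_M$ viewed as a polynomial in $\KK(\{x_{ij}\}, t)$, so $\det A_M \not\equiv 0$, forcing $\rank A_M = 2n = |M|$, i.e.\ $M$ is perfect $A$-consistent. This gives the upper bound.

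For the reverse direction $\deg \det A[c] \geq \max_M c(M)$, let $M^*$ be a maximum-weight perfect $A$-consistent $2$-matching. Since $\rank A_{M^*} = 2n$, the polynomial $\det A_{M^*}[c]$ is nonzero; its $t$-leading monomial picks out some $M'$ with $\supp(M') \subseteq S_{M^*}$ and $C_{M'} \neq 0$, hence by the previous direction $M'$ is $A$-consistent, which forces $c(M') \leq c(M^*)$ by maximality of $M^*$. To upgrade this inequality to an equality, I would invoke the $2\times 2$-partitioned structure via Theorem~\ref{thm:2x2}: rank equals nc-rank for such matrices, and the optimal transformations in $(\mathrm{R}_{2\times 2})$ may be chosen block-diagonal as in~(\ref{eqn:trans}). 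Lifting this to the weighted problem (D) applied to $A[c]$, I would show that an optimal pair $(P,Q)$ may likewise be taken block-diagonal, so that the leading matrix $(PA[c]Q)^{(0)}$ is again $2\times 2$-partitioned; Theorem~\ref{thm:2x2} then gives $\rank (PA[c]Q)^{(0)} = 2n$, and Lemma~\ref{lem:optimality}(2) yields $\deg \det A[c] = \deg \Det A[c]$. A primal-feasible construction in the Euclidean-building formulation, built from the combinatorial structure of $M^*$ (viewing each incident edge of $M^*$ as a dual certificate on an apartment), then realizes $\deg \Det A[c] \geq c(M^*)$, closing the inequality.

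The principal obstacle is the weighted analog of Theorem~\ref{thm:2x2}: showing that an optimal solution of (D) for a $2\times 2$-partitioned $A[c]$ can be taken in the block-diagonal form~(\ref{eqn:trans}). I plan to establish this either by directly adapting the Iwata--Murota duality argument to the Euclidean building of $GL_{2n}(\KK(t))$, or by running the cost-scaling algorithm of Section~\ref{sec:algorithm} and verifying inductively that each scaling phase preserves the $2\times 2$-partitioned structure, so that the block-diagonal witnesses produced by Theorem~\ref{thm:2x2} at each intermediate (R)-subproblem assemble into a block-diagonal optimal solution for the original weighted problem.
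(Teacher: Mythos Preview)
Your upper-bound argument is correct and matches the paper's. The gap is in the lower bound, and it is not where you locate it.

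You correctly observe that $\det A_{M^*}[c] \neq 0$ only guarantees a nonzero term $C_{M'}\,x^{m(M')}\,t^{c(M')}$ with $\supp M' \subseteq \supp M^*$, and that maximality of $M^*$ gives $c(M') \leq c(M^*)$ --- the wrong direction. To close the gap you propose (i) establishing $\deg \det A[c] = \deg \Det A[c]$ via a block-diagonal optimal $(P,Q)$, and then (ii) a ``primal-feasible construction'' giving $\deg \Det A[c] \geq c(M^*)$. Step~(i) is fine and is exactly how the paper proves its next lemma, but the principal obstacle is step~(ii), not step~(i). Since (D) is a \emph{minimization}, feasible pairs $(P,Q)$ certify only upper bounds on $\deg \Det$. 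A lower bound via the nc-Newton polytope would need $m(M^*)\in\ncNewton A$; but for $2\times 2$-partitioned matrices $\ncNewton A = \Newton A$, so this is equivalent to $C_{M^*}\neq 0$ --- precisely the assertion you have not proved. Your ``dual certificate on an apartment'' sketch does not help either: on any apartment, (D$_{\cal A}$) is the LP-dual of an ordinary bipartite matching on $[2n]\times[2n]$ with costs $c_{ij}^k$, and there is no evident way to turn the $2$-matching $M^*$ into a feasible primal matching there without already knowing which entries of $(PA[c]Q)^{(0)}$ are nonzero.

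The paper avoids the detour entirely: it shows directly that $\det A_{M^*}[c]$ contains a nonzero term of degree exactly $c(M^*)$, by a case analysis on the cycle decomposition of $M^*$. For a doubled edge $\{ij,ij\}$, $A$-consistency forces $\det A_{ij}\neq 0$. For a simple cycle whose blocks $A_{ij}$ are all nonsingular, the cycle is replaced by one of its two perfect matchings doubled (maximality of $M^*$ forces both to have weight $c(M^*)$). For a cycle containing a rank-$1$ block, block-diagonal transformations $S_i,T_j\in GL_2(\KK)$ simultaneously diagonalize the $A_{ij}$ along the cycle, and the rank-$1$ block prevents cancellation of the degree-$c(M^*)$ term. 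This combinatorial step is what your proposal is missing; the route through $\deg\Det$ does not substitute for it.
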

\begin{proof}
Consider the leading term $q \cdot t^{\deg \det A[c]}$ of $\det A[c]$,
where $q$ is a nonzero polynomial of variables $x_{ij}$.
Choose any monomial $p$ in the polynomial $q$. 
As mentioned above, 
the set $M$ of edges $ij$ (with multiplicity $m=1,2$) 
for which $x_{ij}^{m}$ appears in $p$ 
forms a $2$-matching. It is necessarily perfect and $A$-consistent. 
Its weight $c(M)$ is equal to $\deg \det A[c]$.
Thus $\deg \det A[c]$ is at most the maximum weight of a perfect $A$-consistent $2$-matching.

We show the converse. 
Choose a maximum-weight perfect $A$-consistent $2$-matching $M$. 
It suffices to show that $\det A_M[c]$ 
has a nonzero term with degree $c(M)$; such a term also appears in $\det A[c]$.
Now $M$ is a disjoint union of cycles, where a cycle of two (same) edges $ij, ij$
can appear.
We may consider the case where $M$ consists of a single cycle, 
from which the general case follows.
Suppose that $M = \{ij,ij\}$. 
Then $A_{ij}$ must be nonsingular, and 
$\deg \det (A_{ij}x_{ij}t^{c_{ij}}) = 2c_{ij} = c(M)$. 
Suppose that $M$ is a simple cycle of length $2n$.
Then $M$ is the disjoint union of two perfect matchings $M_1, M_2$.
If $A_{ij}$ is nonsingular for all edges $ij$ in the cycle $M$, 
then $M_1$ and $M_2$ are regarded as perfect $A$-consistent $2$-matchings by 
defining the multiplicity of all edges by $2$ uniformly.
By maximality and $c(M) = (c(M_1) + c(M_2))/2$, 
it holds $c(M_1) = c(M_2) = c(M)$.
Replace $M$ by $M_i$. Then $\det A_M[c]$ 
has a single term with degree $c(M)$.    
Suppose that $M_1$ has an edge $ij$ for which $\rank A_{ij} = 1$.	
As in \cite[(2.6)--(2.9)]{HiraiIwamasaIPCO},  
we can take $S_i,T_i \in GL_2(\KK)$ 
such that for each $ij \in M$, 
$A'_{ij} = S_iA_{ij}T_j$ is 
a $2 \times 2$ diagonal matrix with 
$(A'_{ij})_{\kappa \kappa} \neq 0$ if $ij \in M_\kappa$ for $\kappa = 1,2$.  
From $(A'_{ij})_{22} = 0$ for an edge $ij \in M_1$ with $\rank A_{ij}=1$,    
we see that the term of $t^{c(M)}$ 
(obtained by choosing 
the $(\kappa,\kappa)$-element of $A'_{ij}x_{ij} t^{c_{ij}}$ for $ij \in M_{\kappa}, \kappa =1,2$) does not vanish in 
$\det SA[c]T = {\rm const}\cdot \det A[c]$, 
where $S,T$ are block diagonal matrices with diagonals $S_i,T_j$ as in (\ref{eqn:trans}).
\end{proof}	
Corresponding to Theorem~\ref{thm:2x2}, the following holds:
\begin{Lem}
$\deg \det A[c]$ is equal to $\deg \Det A[c]$, which is given by the optimal value of
\begin{eqnarray*}
\mbox{\rm (D$_{2\times 2}$)} \quad {\rm Min.} && - \sum_{i=1}^n \deg \det P_i - \sum_{i=1}^n \deg \det Q_i \\
{\rm s.t.} && \deg (P_i A_{ij} Q_j)_{\kappa \lambda} + c_{ij} \leq 0 \quad (i,j \in [n], \kappa,\lambda = 1,2), \\
&& P_i,Q_j \in GL_2 (\KK(t)) \quad (i,j \in [n]).    
\end{eqnarray*}
In particular, $\Newton A = \ncNewton A$.
\end{Lem}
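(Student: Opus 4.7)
The plan is to prove the chain $\deg\det A[c] = \deg\Det A[c] = \opt(\mathrm{D}_{2\times 2})$; the assertion $\Newton A = \ncNewton A$ then follows immediately, since $c \mapsto \deg\det A[c]$ and $c \mapsto \deg\Det A[c]$ are the support functions of $\Newton A$ and $\ncNewton A$ respectively (the latter by the polyhedral interpretation in Section~\ref{sec:polyhedral}), and two integral polytopes sharing a support function coincide. One direction of the sandwich is automatic: $\deg\det A[c] \le \deg\Det A[c]$ is the general bound stated before Lemma~\ref{lem:optimality}, and $\deg\Det A[c] \le \opt(\mathrm{D}_{2\times 2})$ because $(\mathrm{D}_{2\times 2})$ is obtained from $(\mathrm{D})$ by restricting $(P, Q)$ to block-diagonal matrices, a smaller feasible set.

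For the reverse inequality $\opt(\mathrm{D}_{2\times 2}) \le \deg\det A[c]$, I would run the {\bf Deg-Det} algorithm of Section~\ref{sec:algorithm} with the subroutine $(\mathrm{R})$ replaced by the sharper $(\mathrm{R}_{2\times 2})$, starting from the block-diagonal feasible solution $(P, Q) = (t^{-C} I, I)$ with $C := \max_{ij} c_{ij}$. The crucial observation enabling this substitution is that whenever $(P, Q)$ is block-diagonal with $2\times 2$ blocks, the leading term $(PA[c]Q)^{(0)}$ is itself $2\times 2$-partitioned: its $(i,j)$-block equals $(P_i A_{ij} Q_j)^{(-c_{ij})} x_{ij}$, a $2\times 2$ matrix over $\KK$ times the variable $x_{ij}$. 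Hence $(\mathrm{R}_{2\times 2})$ is legitimately applicable and returns block-diagonal $S, T \in GL_{2n}(\KK)$ together with the indices $I, J$ of an $r \times s$ zero submatrix of $SA[c]T$. The standard update $(P, Q) \leftarrow ((t^{\alpha}) S P, Q T (t^{-\beta}))$, with $\alpha, \beta$ the indicator vectors of $I$ and $[2n] \setminus J$, preserves block-diagonality automatically: $(t^{\pm\alpha})$ and $(t^{\pm\beta})$ are diagonal, hence block-diagonal, and a product of block-diagonal matrices is block-diagonal -- the blocks of $P, Q$ simply migrate from $GL_2(\KK)$ entries into $GL_2(\KK(t))$ entries. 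Each iteration strictly decreases $-\deg\det P - \deg\det Q$ by $r + s - 2n > 0$, and this integer objective is bounded below by $\deg\Det A[c]$, so the algorithm terminates in finitely many steps.

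At termination, $(\mathrm{R}_{2\times 2})$ on $(PA[c]Q)^{(0)}$ returns value $2n$. Applying Theorem~\ref{thm:2x2} to the $2\times 2$-partitioned matrix $(PA[c]Q)^{(0)}$ yields $\rank (PA[c]Q)^{(0)} = \ncrank (PA[c]Q)^{(0)} = 2n$, and Lemma~\ref{lem:optimality}(2) then delivers $\deg\det A[c] = \deg\Det A[c] = -\deg\det P - \deg\det Q$. Because the terminal $(P, Q)$ is block-diagonal, this common value is attained by a feasible solution of $(\mathrm{D}_{2\times 2})$, so $\opt(\mathrm{D}_{2\times 2}) \le -\deg\det P - \deg\det Q = \deg\det A[c]$, closing the chain.

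The main point needing verification is the observation isolated above: that $(PA[c]Q)^{(0)}$ remains $2\times 2$-partitioned throughout the iteration, so that both $(\mathrm{R}_{2\times 2})$ remains applicable during the descent and Theorem~\ref{thm:2x2} can be invoked at termination. Beyond this existence argument, extending the proof to the polynomial-time algorithm for $\deg\det A[c]$ and a maximum-weight $A$-consistent $2$-matching requires showing that the L/N-convexity framework of Section~\ref{sec:preliminaries}, the sensitivity bound of Proposition~\ref{prop:sensitivity}, and the cost-scaling wrapper of Section~\ref{sec:algorithm} all restrict cleanly to the block-diagonal setting. Since $(\mathrm{R}_{2\times 2})$ is simply $(\mathrm{R})$ on a block-diagonal subgroup of $GL_{2n}(\KK)$, I expect these arguments to carry over verbatim, but explicit verification is the main bookkeeping work that remains.
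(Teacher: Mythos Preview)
Your proposal is correct and follows essentially the same approach as the paper: run {\bf Deg-Det} with block-diagonal $(S,T)$ supplied by $(\mathrm{R}_{2\times 2})$, observe that block-diagonality of $(P,Q)$ and the $2\times 2$-partitioned form of $(PA[c]Q)^{(0)}$ are preserved, and at termination invoke Theorem~\ref{thm:2x2} to get $\rank = \ncrank$ so that Lemma~\ref{lem:optimality}(2) yields $\deg\det A[c] = \deg\Det A[c]$ attained at a feasible point of $(\mathrm{D}_{2\times 2})$. Your write-up is more explicit about the sandwich $\deg\det \le \deg\Det \le \opt(\mathrm{D}_{2\times 2})$ and the support-function reasoning for $\Newton A = \ncNewton A$, and your final paragraph on cost-scaling and sensitivity belongs to the subsequent theorem rather than this lemma, but the core argument is identical.
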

\begin{proof}
When we apply {\bf Deg-Det} algorithm to $A$ of (\ref{eqn:2x2_weighted}),
$(S,T)$ in the step $1$ is of form of (\ref{eqn:trans}).
Therefore $(PA[c]Q)^{(0)}$ is always of form (\ref{eqn:2x2}), and 
$P$ and $Q$ are always block diagonal matrices with $2 \times 2$ 
block diagonal matrices $P_1,\ldots,P_n$ and $Q_1,\ldots,Q_n$, respectively.
Since $\rank (PA[c]Q)^{(0)} = \ncrank (PA[c]Q)^{(0)}$ (by Theorem~\ref{thm:2x2}),
the output is equal to $\deg \det A[c]$ (Lemma~\ref{lem:optimality}~(2)).
\end{proof}

Now we arrive at the goal of this section.
\begin{Thm} 
Suppose that arithmetic operations on $\KK$ are done in constant time.
A maximum-weight  perfect $A$-consistent $2$-matching 
(and $\deg \det A[c]$) can be computed in 
$O(n^6 \log C)$-time, where $C := \max_{i,j \in [n]} |c_{ij}|$. 
\end{Thm}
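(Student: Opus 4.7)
The plan is to run the \textbf{Deg-Det with Cost-Scaling} algorithm of Section~\ref{sec:algorithm} on the $2\times 2$-partitioned input $A[c]$ of~(\ref{eqn:2x2_weighted}), using the sharper oracle~(R$_{2\times 2}$) of Theorem~\ref{thm:2x2} in place of the general~(R). By Theorem~\ref{thm:HiraiIwamasa} this oracle runs in $O(n^4)$ time and outputs block-diagonal matrices $S,T$ of the form~(\ref{eqn:trans}). The preceding lemma guarantees $\deg \det A[c] = \deg \Det A[c]$ and that the algorithm effectively solves~(D$_{2\times 2}$), so its numerical output equals $\deg \det A[c]$. To recover the $2$-matching itself, one additional call of the Hirai--Iwamasa procedure to the leading term $(P A[c] Q)^{(0)}$---which is $2\times 2$-partitioned of full rank $2n$ by Lemma~\ref{lem:optimality}(1) together with Theorem~\ref{thm:2x2}---produces a perfect $(P A[c] Q)^{(0)}$-consistent $2$-matching, whose edges lie in $G_A$ and whose weight equals $\deg \det A[c]$ by the tightness of the chosen edges and Lemma~\ref{lem:degdetA=c(M)}.

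Because $S,T$ are block diagonal in the form~(\ref{eqn:trans}), the updates in the algorithm preserve the invariant that each $B_k$ is supported on a single $2\times 2$ block of the ambient $2n \times 2n$ array (namely the block of $A_{ij}$ corresponding to the variable $x_{ij}$). Consequently, the per-iteration work of updating and truncating all $m = n^2$ matrices $B_k$ is $O(n^2)$ entries times $O(n^2)$ retained degree-slices, i.e.\ $O(n^4)$, which matches the cost of one rank-oracle call.

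The crucial complexity ingredient is a sharpening of Proposition~\ref{prop:sensitivity} in this partitioned setting. Since the matrix $A_{ij}$ attached to the variable $x_{ij}$ is supported in a single $2 \times 2$ block, a perturbation $c_{ij}\to c_{ij}-1$ touches only the four constraints of~(D$_{2\times 2}$) coming from that block. Revisiting the proof in Section~\ref{subsec:sensitivity}, the counter $N_\ell$ which was bounded by $n^2$ in the general case is now bounded by $4$, so each unit decrement of a single $c_{ij}$ shifts the optimum by at most a constant in $\ell_\infty$-distance. Summing over the at most $m = n^2$ decrements that relate $A^{(\theta-1)}(t^2)$ to $A^{(\theta)}$, Theorem~\ref{thm:bound} bounds the inner Deg-Det loop by $O(n^2)$ iterations per scaling phase.

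Multiplying the pieces yields $\lceil \log_2 C \rceil$ scaling phases, $O(n^2)$ inner iterations per phase, and $O(n^4)$ per iteration, for a total of $O(n^6 \log C)$; the final extraction step adds only another $O(n^4)$. The main obstacle is precisely this sharpened sensitivity bound: one must check that the localization of $A_{ij}$ to a single $2\times 2$ block forces the counter $N_\ell$ in the N-convexity argument of Section~\ref{subsec:sensitivity} to drop from $n^2$ to $4$, thereby delivering a constant distance increment per unit cost perturbation in the partitioned setting.
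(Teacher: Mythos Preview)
Your complexity argument is essentially the paper's: run \textbf{Deg-Det with Cost-Scaling} using the block-diagonal oracle (R$_{2\times 2}$), observe that the block structure of $(P,Q)$ and of each $B_k$ is preserved, and sharpen Proposition~\ref{prop:sensitivity} by noting that the single $2\times 2$ block supporting $A_{ij}$ contributes at most four constraints to (D$_{2\times 2}$), so the counter $N_\ell$ in Section~\ref{subsec:sensitivity} is bounded by $4$ instead of $n^2$. This gives $O(n^2)$ inner iterations per scaling phase, truncation depth $O(n^2)$, and $O(n^4)$ work per iteration (oracle plus updating $n^2$ blocks of constant size over $O(n^2)$ degree slices), hence $O(n^6\log C)$ overall. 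All of this matches the paper.

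There is, however, a genuine gap in your extraction step. You assert that a perfect $(PA[c]Q)^{(0)}$-consistent $2$-matching $M$ produced by the Hirai--Iwamasa procedure already satisfies $c(M)=\deg\det A[c]$ ``by tightness.'' Tightness only tells you that each edge $ij\in M$ has \emph{some} entry of $P_iA_{ij}Q_j t^{c_{ij}}$ of degree $0$; it does not pin down $c(M)$. What one can actually conclude (and what the paper proves) is $\deg\det A_M[c]=\deg\det A[c]$, i.e.\ $M$ \emph{contains} a maximum-weight perfect $A$-consistent $2$-matching, but $M$ itself may have the wrong weight. Concretely, if $M$ contains a simple cycle $C=C_1\cup C_2$ with all $A_{ij}$ on $C_1$ nonsingular and $2\sum_{ij\in C_1}c_{ij}>c(C)$, then the doubled matching on $C_1$ is the correct piece, not $C$. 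The paper therefore adds a short cycle-by-cycle post-processing of $M$ (replace each such cycle by the better doubled matching when admissible) to obtain a matching whose weight equals $\deg\det A[c]$; this costs only $O(n)$ extra and does not affect the overall bound, but it is needed for correctness.
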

\begin{proof}
Apply {\bf Deg-Det with Cost-Scaling} to the matrix $A$.
Since $A_{ij}$ is $2\times 2$, 
$N_d$ in the proof of 
the sensitivity theorem (Section~\ref{subsec:sensitivity}) can be taken to be $4$ (constant), 
whereas $m$ is $n^2$.
Therefore, in each scaling phase, 
the number of iterations is bounded by $n^2$.
Then the degree bound for truncation is chosen as $2n^2$.
The time complexity for matrix update is $O(n^2 \times n^2)$; 
this is done by matrix multiplication of $2 \times 2$ matrices. 
By Theorem~\ref{thm:HiraiIwamasa}, $\gamma(n,m) = O(n^4)$.
The total time complexity is $O(n^6 \log C)$.

Next we find a maximum-weight perfect $A$-consistent 2-matching 
from the final $B^{(0)}$ for $B= B^{(0)} + B^{(-1)}t^{-1} + \cdots$.
Consider a maximum $B^{(0)}$-consistent 2-matching $M$ for $2 \times 2$-partitioned matrix 
$B^{(0)}$ (of form (\ref{eqn:2x2})). 
Necessarily $M$ is perfect (since $B^{(0)}$ is nonsingular).
We show that $M$ contains a maximum-weighted $A$-consistent $2$-matching.
Indeed, $B^{(0)}$ is equal to $(PAQ)^0$ for 
$P,Q \in GL_n(\KK(t))$, where $P$ and $Q$ are block diagonal matrices 
with $2 \times 2$ block diagonals $P_1,P_2,\ldots,P_n$ 
and $Q_1,Q_2,\ldots,Q_n$. 
Notice that $P_i,Q_j$ are an optimal solution of (D$_{2\times 2}$).
Observe $B^{(0)}_M = (P A_M Q)^0$. 
From this,we have $\deg \det PAQ \geq \deg \det P A_M Q 
=  \deg \det A_M + \sum_{i} \deg \det P_i + \sum_{i} \deg \det Q_i = \deg \det B^{(0)}_M = 0$.
This means that $\deg \det A_M$ is equal to $\deg \det A$,  
which is the maximum-weight of a perfect $A$-consistent $2$-matching (Lemma~\ref{lem:degdetA=c(M)}).
Therefore, $M$ must contain a maximum-weight perfect $A$-consistent $2$-matching.
It is easily obtained as follows.
Consider a simple cycle $C = C_1 \cup C_2$ of $M$, 
where $C_1$ and $C_2$ are disjoint matchings in $C$. 
For $\kappa \in \{1,2\}$,  
if $C_\kappa$ 
consists of edges $ij$ with $\rank A_{ij} = 2$ and 
$c(C_\kappa) \geq c(C)$, then replace $C$ by $C_\kappa$ in $M$.
Apply the same procedure to each cycle.
The resulting $M$ satisfies $c(M) = \deg \det A_M$, as desired.
\end{proof}
According to the machinery in the previous section,  
one can make this algorithm strongly polynomial.
Also, for the case of $\KK = \QQ$,  a maximum-weight  perfect $A$-consistent $2$-matching can be obtained in polynomial time.

From the view of polyhedral combinatorics, 
it is a natural question to ask for
the LP-formulation describing 
the polytope of $A$-consistent $2$-matchings (or more generally, nc-Newton polytopes). 
One possible approach to this question is
to clarify the relationship between 
the LP-formulation and (R$_{2\times 2}$).

\section*{Acknowledgments}
The authors thanks Kazuo Murota for comments.
The first author was supported by JSPS KAKENHI Grant Numbers JP17K00029
and JST PRESTO Grant Number JPMJPR192A, Japan.

\section*{Appendix: Proof of Lemma~\ref{lem:optimality}}
(1). We have seen the only-if part
in the explanation of {\bf Deg-Det}.
So we show the if part. 
We first extend $\deg \Det B$ 
for matrix $B = \sum_{k=1}^m B_k (t) x_k$, 
where $B_k(t)$ are matrices over $\KK(t)$.
This is naturally defined by (D) 
in replacing the constraint by 
$\deg (PB_kQ)_{ij} \leq 0$.
In this setting, it obviously holds that 
$\deg \Det PBQ = \deg \det P + \deg \det Q + \deg \Det B$.
Therefore  it suffices to show $\deg \Det B = 0$ 
if $\deg B_{ij} \leq 0$ for all $i,j$ and
$\ncrank B^{(0)} = n$.
 
Let $(P,Q)$ be any feasible solution for $B$.
Recall  the Smith-McMillan form that $P,Q$ are written 
as $P  =S' (t^{\alpha}) S$, $Q = T (t^{-\beta}) T'$ for $S,S',T,T' 
\in GL_n(\KK(t)^-)$, $\alpha,\beta \in \ZZ^n$.
Since the multiplication of $S',T'$ 
does not change the feasibility and the objective value,
we can assume that $P,Q$ are 
form of $P = (t^{\alpha}) S$, $Q = T (t^{-\beta})$.
We can assume further that 
$\alpha_1 \geq \alpha_2 \geq \cdots \geq \alpha_n \geq 0$
and $\beta_1 \geq \beta_2 \geq \cdots \geq \beta_n \geq 0$.
Note that $S^{(0)}$, $T^{(0)}$ are nonsingular matrices over $\KK$.
From $\deg (PBQ)_{ij} \leq 0$, 
it must hold that 
$\alpha_i > \beta_j$  implies  $(S^{(0)} B^{(0)}T^{(0)})_{ij} = 0$.
Let $0 =: \gamma_0 \leq \gamma_1 < \gamma_2 < \cdots < \gamma_{\ell}$ so that $\{ \gamma_1,\gamma_2, \ldots, \gamma_{\ell} \} = 
\{\alpha_i\}_{i=1}^n \cup \{\beta_j\}_{j=1}^n$.
For each $p=1,2,\ldots, \ell$, 
define the indices 
$r_p:= \max \{i \mid \alpha_i \geq \gamma_p\}$
and $u_p = \min \{j \mid \gamma_{p-1} \geq \beta_j \}$.
Then $S^{(0)} B^{(0)}T^{(0)}$ must have 
an $r_p \times (n-u_p +1)$ zero submatrix in 
its upper right corner.
Since $\ncrank B^{(0)} = n$, it holds
\begin{equation*}
- r_p + u_p - 1 \geq 0.
\end{equation*}
Also, $\alpha, \beta$ are written as
\begin{equation*}
\alpha = \sum_{p=1}^\ell (\gamma_p - \gamma_{p-1}) {\bf 1}_{r_p}, \quad
\beta =  \sum_{p=1}^\ell (\gamma_p - \gamma_{p-1}) {\bf 1}_{u_p-1}.
\end{equation*}
Now $- \deg \det P - \deg \det Q$ is equal to
\[
-\sum_{i=1}^n \alpha_i + \sum_{j=1}^n \beta_j = 
\sum_{p =1}^{\ell}  (\gamma_p - \gamma_{p-1}) (- r_p + u_p - 1) \geq 0.
\] 
This means that every feasible solution has the objective value at least $0$, 
and $(I,I)$ is an optimal solution for $B$, implying $\deg \Det B = 0$. 

(2). It holds $\deg \det P A[c] Q = \deg \det P + \deg \det Q + \deg \det A$.
If $\rank (PA[c]Q)^{(0)} =n$ and  $\deg (P A[c] Q)_{ij} \leq 0$ for $i,j$, 
it holds $\deg \det P A[c] Q = 0$.
In this case, it also holds $\ncrank (PA[c]Q)^{(0)} =n$, and hence 
$\deg \Det P A[c] Q = 0$, implying  $\deg \Det A[c] = - \deg \det P - \deg \det Q$.  	
\end{document}